\newif\ifFull
\newif\ifConf
\title{On the Complexity of Minimising the Moving Distance for Dispersing Objects} 
\author{Nicol{\'a}s {Honorato-Droguett}}{Nagoya University, Japan \and \url{https://sites.google.com/view/nicolas-honorato-droguett}}{honorato.droguett.nicolas.n7@s.mail.nagoya-u.ac.jp}{https://orcid.org/0009-0005-1969-3649}{}
\author{Kazuhiro {Kurita}}{Nagoya University, Japan \and \url{https://sites.google.com/view/kazuhirokurita} }{kurita@i.nagoya-u.ac.jp}{https://orcid.org/0000-0002-7638-3322}{This work is partially supported by JSPS KAKENHI Grant Numbers 
JP21K17812, 
JP22H03549, 
and JST ACT-X Grant Number JPMJAX2105. 
}
\author{Tesshu {Hanaka}}{Kyushu University, Japan \and \url{https://sites.google.com/view/tesshu-hanaka/home} }{hanaka@inf.kyushu-u.ac.jp}{https://orcid.org/0000-0001-6943-856X}{This work is partially supported by JSPS KAKENHI Grant Numbers 
JP21K17707, 
JP23H04388, 
JST CRONOS Grant Number JPMJCS24K2. 
}
\author{Hirotaka {Ono}}{Nagoya University, Japan \and \url{https://researchmap.jp/hono} }{ono@i.nagoya-u.ac.jp}{https://orcid.org/0000-0003-0845-3947}{This work is partially supported by JSPS KAKENHI Grant Numbers
JP20H05967, 
JP21K19765, 
JP22H00513, 
JST CRONOS Grant Number JPMJCS24K2. 
}
\authorrunning{N. {Honorato-Droguett} and K. Kurita and T. Hanaka and H. Ono} 
\keywords{Intersection graphs, Optimisation, Graph modification} 
    \DeclareMathAlphabet{\mathcal}{OMS}{cmsy}{m}{n}
    \newcolumntype{C}[1]{>{\centering\let\newline\\\arraybackslash\hspace{0pt}}m{#1}}
    \crefname{problem}{Problem}{Problems}
    \crefname{itembox}{Problem}{Problems}
    \crefname{algocf}{algorithm}{algorithms}
    \Crefname{algocf}{Algorithm}{Algorithms}
    \crefname{AlgoLine}{line}{lines}
    \Crefname{AlgoLine}{Line}{Lines}
    \crefname{observation}{observation}{observations}
    \Crefname{observation}{Observation}{Observations}
    \crefname{mtheorem}{theorem}{theorems}
    \Crefname{mtheorem}{Theorem}{Theorems}
    \crefname{mlemma}{lemma}{lemmas}
    \Crefname{mlemma}{Lemma}{Lemmas}
    \crefname{mcorollary}{corollary}{corollaries}
    \Crefname{mcorollary}{Corollary}{Corollaries}
\def\Underline{\setbox0\hbox\bgroup\let\\\endUnderline}
\def\endUnderline{\vphantom{y}\egroup\smash{\underline{\box0}}\\}
\def\|{\verb|}
\providecommand{\A}{}\renewcommand{\A}{\mathcal{A}}
\newcommand{\B}{\mathcal{B}}
\renewcommand{\C}{\mathcal{C}}
\providecommand{\D}{}\renewcommand{\D}{\mathcal{D}}
\newcommand{\F}{\mathcal{F}}
\providecommand{\G}{}\renewcommand{\G}{\mathcal{G}}
\renewcommand{\H}{\mathcal{H}}
\providecommand{\I}{}\renewcommand{\I}{\mathcal{I}}
\newcommand{\J}{\mathcal{J}}
\newcommand{\n}{\eta}
\renewcommand{\R}{\mathcal{R}}
\renewcommand{\S}{\mathcal{S}}
\newcommand{\X}{\mathcal{X}}
\newcommand{\len}[1]{\mathrm{len}(#1)}
\newcommand{\size}[1]{|#1|}
\newcommand{\set}[1]{\{#1\}}
\newcommand{\diam}[1]{\mathrm{diam}\:#1}
\newcommand{\gged}[0]{\textsc{Geometric Graph Edit Distance}}
\newcommand{\ggedmm}[0]{\textsc{minimax-Geometric Graph Edit Distance}}
\newcommand{\idisp}[0]{\textsc{Interval Dispersal}}
\newcommand{\pthreesat}[0]{\textsc{Planar 3-SAT}}
\newcommand{\threepartition}[0]{{\scshape 3-Partition}}
\DeclareMathOperator*{\argmin}{\smash[b]{\operator@font arg\,min}}
\DeclareMathOperator*{\argmax}{\smash[b]{\operator@font arg\,max}}
\definecolor{ered}{HTML}{dc133b}
\theoremstyle{plain}
\begin{document}

\maketitle

\begin{abstract}
    We study {\gged} (GGED), a graph-editing model to compute the minimum edit distance of intersection graphs that uses moving objects as an edit operation.
    We first show an $O(n\log n)$-time algorithm that minimises the total moving distance to disperse unit intervals.
    This algorithm is applied to render a given unit interval graph (i) edgeless, (ii) acyclic and (iii) $k$-clique-free. 
    We next show that GGED becomes strongly \NP-hard when rendering a weighted interval graph (i) edgeless, (ii) acyclic and (iii) $k$-clique-free. 
    Lastly, we prove that minimising the maximum moving distance for rendering a unit disk graph edgeless is strongly \NP-hard over the $L_1$ and $L_2$ distances.
\end{abstract}

\section{Introduction}\label{sec:intro}
Graph modification is a fundamental topic to address graph similarity and dissimilarity, where a given graph is deformed by adding or deleting vertices or edges to satisfy a specific non-trivial graph property, while minimising the cost of edit operations. The problem of determining this cost is commonly known as \emph{graph modification problem} (GMP) and has applications in various disciplines, such as computer vision~\cite{Chung1994}, network interdiction~\cite{Hoang2023}, and molecular biology~\cite{Hellmuth2020}. GMPs are often categorised into vertex and edge modification problems, with edit operations restricted to the vertex and edge sets, respectively.

The cost of a single edit operation in a GMP is often determined by the specific application. In theoretical studies, a unit-cost model is often assumed, where each addition or deletion of a vertex or edge has a uniform cost. However, for such models, it is known that determining whether a graph can be modified to satisfy a given property is \NP-hard for a wide range of graph classes and properties~\cite{Lewis1980,Burzyn2006,Fomin2015,Sritharan2016}. These negative bounds of GMPs motivate alternative formulations for graph editing that consider domain-specific constraints and cost measures.

The choice of edit operations and their associated costs is a crucial aspect of GMPs, as different formulations capture different structural properties and computational challenges. Analogous to string similarity analysis, where modifications are based on biologically significant operations such as DNA mutations and repeats~\cite{Li2009}, graph modification problems should reflect the inherent constraints and structural properties of the graphs being studied. In particular, \emph{geometric intersection graphs} (hereafter intersection graphs) provide a suitable framework for studying GMPs for scenarios where graphs represent spatial relationships (see, e.g.,~\cite{Panolan2024,Berg2019,fomin2023}). Given a collection of geometric objects $\S$, an \emph{intersection graph} $(G,\S)$ is a graph where there is a one-to-one correspondence between the vertex set $V(G)$ and $\S$, and two vertices are adjacent if and only if their corresponding objects intersect. This model includes many well-known graph classes, such as interval graphs and disk graphs. These graphs can be frequently found in real-world applications such as network modelling and bioinformatics~\cite{McKee1999}.

Motivated by this context, this paper investigates GMPs for intersection graphs. In this context, two natural questions arise: \begin{enumerate} \item Are standard graph edit operations suitable for modifying intersection graphs? \item How can the geometric properties of objects be exploited to overcome the hardness of GMPs? \end{enumerate} To answer these questions, we introduce {\gged}, a model for modifying intersection graphs from a geometric perspective.

In the intersection graph model, a natural edit operation is to move the objects in $\S$. We treat this movement as a graph edit operation and focus on minimising the cost required to modify an intersection graph to satisfy a specific graph property. The cost is quantified by the total moving distance, which is the sum of the distances by which objects in $\S$ are moved. More precisely, we define the problem as follows:

\begin{itembox}[l]{{\gged}}\label{pro:uig_general} \begin{description} \item[Input:] An intersection graph $(G,\S)$ and a graph property $\Pi$. \item[Output:] The minimum total moving distance of the objects in $\S$ so that the resulting intersection graph satisfies $\Pi$. \end{description} \end{itembox}

We assume that $\Pi$ is given by an oracle, i.e. we have an algorithm to determine whether a given intersection graph $(G, \S)$ satisfies $\Pi$.

\subsection*{Related work}
Numerous GMPs are known to be computationally hard. In the early 1980s, Lewis and Yannakakis~\cite{Lewis1980} showed that vertex-deletion problems are \NP-complete for any hereditary graph property. Similarly, many edge modification problems have been shown to be \NP-complete, such as transforming a graph into a perfect, chordal, or interval graph~\cite{Burzyn2006}. As a result, the past decade has seen a growing interest in addressing these problems from the perspective of parameterised complexity. The recent survey by Crespelle et al.\cite{Crespelle2023} provides a comprehensive overview of this subject (see also\cite{Drange2015}).

Although classical GMPs focus on structural modifications of graphs, recent studies have explored models that include geometric constraints. Honorato-Droguett et al.~\cite{HonoratoDroguett2024} introduced the above geometric approach to graph modification, demonstrating that certain properties, such as graph completeness and the existence of a $k$-clique, can be efficiently satisfied on interval graphs. 
Their work highlights how the underlying geometric properties of intersection graphs can be exploited to design appropriate modification models.

In a similar vein, Fomin et al.~\cite{fomin2023} studied the \emph{disk dispersal} problem, where a set $\S$ of $n$ disks, an integer $k \geq 0$, and a real number $d \geq 0$ are given, and the goal is to determine whether an edgeless disk graph can be realised by moving at most $k$ disks by at most $d$ distance each. They proved that this problem is {\NP-hard} when $d=2$ and $k = n$ and also {\FPT} when parameterised by $k+d$. Furthermore, they showed that the problem becomes \W[1]-hard when parameterised by $k$ when disk movement is restricted to rectilinear directions.

Expanding on this line of research, Fomin et al.~\cite{Fomin2025} conducted a parameterised complexity study of edge modification problems where \emph{scaling} objects is considered as the edit operation. Their results illustrate how alternative edit operations in geometric intersection graphs can impact computational complexity, enabling further study of geometric modification graph models.
In particular, their work includes several {\FPT} results to achieve independence, acyclity and connectivity on disk graphs.

Our work continues these developments by introducing {\gged}, a model that considers object movement as an edit operation to modify intersection graphs. Unlike prior studies that focus on vertex and edge modifications or object scaling, our approach explicitly considers movement costs by quantifying the total moving distance required to satisfy a given graph property. This approach enables the exploration of new algorithmic and complexity-theoretic questions in the context of geometric intersection graphs.

\subsection*{Our contribution}
Our results are mainly focused on interval graphs and summarised in \Cref{tab:summary}. In this paper, we deal with the following graph properties: \begin{itemize} 
\item $\Pi_{\texttt{edgeless}}$ (edgeless graphs), 
\item $\Pi_{\texttt{acyc}}$ (acyclic graphs), and 
\item $\overline{\Pi_{k\texttt{-clique}}}$ ($k$-clique-free graphs).
\end{itemize} 

\begin{table}[!b]
    \setlength\extrarowheight{2pt}
    \centering
    \caption{Summary of our results. In this table, $L_1$ and $L_2$ are the Manhattan and Euclidean distances, respectively. The terms IG, UIG and UDG are abbreviations of interval graphs, unit interval graphs and unit disk graphs, respectively.}\label{tab:summary}
    \begin{tabular}{|c|c|c|c|c|c|}
        \hline
        {\bfseries Problem} &  &  &  & {\bfseries Target Graph} & \\
        {\bfseries Type} & \multirow{-2}{*}{{\bf Metric}} & \multirow{-2}{*}{{\bf Weighted}} & \multirow{-2}{*}{\textbf{Graph}} & {\bfseries Property} & \multirow{-2}{*}{\textbf{Complexity}} \\
        \hline
        \multirow{6}{*}{\texttt{minsum}} & \multirow{7}{*}{$L_2(=L_1)$}  & \multirow{4}{*}{Yes} & \multirow{4}{*}{\textbf{IG}} &  $\Pi_{\texttt{edgeless}}$ & strongly \NP-hard\\\cline{5-6}
        & &  &  &   $\Pi_{\texttt{acyc}}$  & strongly \NP-hard \\\cline{5-6}
        & &  &  &   \multirow{2}{*}{$\overline{\Pi_{k\texttt{-clique}}}$} & {strongly \NP-hard} \\
        & & & & & {for any $1\le k \le n$} \\\cline{3-6}
        &  & \multirow{3}{*}{No} &\multirow{3}{*}{\textbf{UIG}}  &  $\Pi_{\texttt{edgeless}}$ & $O(n\log n)$ \\\cline{5-6}
        & &  &  &   $\Pi_{\texttt{acyc}}$  & $O(n\log n)$ \\\cline{5-6}
        & &  &  &   $\overline{\Pi_{k\texttt{-clique}}}$ & $O(n\log n)$ \\
       \hline
        \multirow{1}{*}{\texttt{minimax}} & \multirow{1}{*}{$L_1$, $L_2$} & Yes &\multirow{1}{*}{\textbf{UDG}} &   \multirow{1}{*}{$\Pi_{\texttt{edgeless}}$} & \multirow{1}{*}{strongly \NP-hard} \\\hline
    \end{tabular}
\end{table}

In~\cite{HonoratoDroguett2024}, the model presented is studied mainly for properties of dense graphs.
This inspires the present paper as a subsequent work, where we instead focus on properties for sparse graphs.
As two fundamental classes of sparse graphs, we consider edgeless graphs ($\Pi_{\texttt{edgeless}}$) and acyclic graphs ($\Pi_{\texttt{acyc}}$). These properties have also been studied in related work on geometric intersection graphs\cite{fomin2023,Fomin2025}.

As we shall detail, $\Pi_{\texttt{acyc}}$ is contained in $\overline{\Pi_{k\texttt{-clique}}}$ in our context.
As a result, one might argue that the distinction of both properties is irrelevant.
However, we still consider them distinctively, as forests are a well-known class of graphs.
Our analysis highlights the computational complexity of modifying intersection graphs while considering movement-based edit operations, a perspective distinct from prior work that focuses on exclusively modifying the graph structure.

\subsection*{Paper Organisation}
\Cref{sec:preliminaries} formally describes the definitions needed to address the above ideas. \Cref{sec:edg_uig} presents the problem {\idisp} and shows that it can be solved in $O(n\log n)$ time. Using this algorithm, we establish that {\gged} can also be solved in $O(n\log n)$ time for satisfying $\Pi_{\texttt{edgeless}}$, $\Pi_{\texttt{acyc}}$, and $\overline{\Pi_{k\texttt{-clique}}}$ on unit interval graphs.
\Cref{sec:edg_ig} demonstrates that {\gged} becomes strongly \NP-hard on weighted interval graphs for satisfying $\Pi_{\texttt{edgeless}}$, $\Pi_{\texttt{acyc}}$, and $\overline{\Pi_{k\texttt{-clique}}}$.
\Cref{sec:disk_edgeless} shows that the minimax version of {\gged} is strongly \NP-hard on weighted unit disk graphs when satisfying $\Pi_{\texttt{edgeless}}$ under both the $L_1$ and $L_2$ distance metrics.
\Cref{sec:conclu} concludes with remarks on our results and potential future directions.

\ifConf
Due to space restrictions, we omit in-depth explanations and all full proofs of statements with a $\star$-mark. The reader is referred to the full version of this paper~[] for these details.
\fi

\section{Preliminaries}\label{sec:preliminaries}
    This section provides the main definitions used in the paper, referencing geometry, graph, and convexity terminology from textbooks~\cite{Preparata1985,cormen2009,Diestel2017,boyd2004}.
\ifConf
\begin{toappendix}
\section{Auxiliary Definitions}
\else
\begin{toappendix}
\fi
\ifFull
\paragraph*{Convexity}
\fi
    A set $C$ is \emph{convex} if the line segment between any two points in $C$ lies entirely in $C$. Such a set is called \emph{convex set}. 
    A function $f:\mathbb{R}^n \rightarrow \mathbb{R}$ is \emph{convex} if its domain $\mathrm{dom} f$ is a convex set, and if for all $x,y \in \mathrm{dom} f$ and $\theta \in [0,1]$, the inequality
    $f(\theta x + (1-\theta)y) \leq \theta f(x) + (1-\theta)f(y)$ holds. 
    A function satisfying the above is called a \emph{convex function}. 
    The \emph{convex hull} of a finite set of points $P \subset \mathbb{R}^2$ is the smallest convex set that contains $P$ and is denoted by $\C(P)$.
    A \emph{polygon} is defined by a finite set of segments such that every segment endpoint is shared by exactly two edges, and no subset of edges has the same property. 
    The segments are the \emph{edges} and their endpoints are the \emph{vertices} of the polygon. 
    An $n$-vertex polygon is called an $n$-gon.
    A polygon is \emph{simple} if no pair of non-consecutive edges that share a point exists.
    A simple polygon is \emph{convex} if its interior is a convex set.
    Similarly, the \emph{boundary} of a convex hull is a convex polygon.

\end{toappendix}

\paragraph*{Objects}
    An \emph{interval} $I$ is a line segment on the real line of length $\mathrm{len}(I) \in \mathbb{R}^+$.
    Intervals are assumed to be open, unless explicitly stated otherwise.
    An interval such that $\mathrm{len}(I) = 1$ is called \emph{unit interval}.
    The \emph{left endpoint} $\ell(I)$ of an interval $I$ is the point that satisfies $\ell(I) \le y$ for any $y \in I$. 
    Similarly, the \emph{right endpoint} $r(I)$ of $I$ is the point that satisfies $y \le r(I)$ for any $y \in I$.
    The \emph{centre} $c(I)$ of $I$ is the point $c(I) = (r(I) - \ell(I))/2$.
\begin{toappendix}
    The \emph{left interval set} $L(\mathcal{I}, x)$ of a collection of intervals $\mathcal{I}$ is the subcollection of intervals to the `left' of a given point $x$.
    That is, $L(\mathcal{I}, x) = \{I \in \mathcal{I}:\: r(I) < x\}$. 
    Similarly, the \emph{right interval set} $R(\mathcal{I},x)$ is defined as $R(\mathcal{I},x) = \{I\in \mathcal{I}:\:\ell(I) > x\}$.
    Additionally, the \emph{leftmost endpoint} $\ell(\mathcal I)$ of a collection of intervals $\mathcal{I}$ is defined as $\min_{I \in \mathcal I} \ell(I)$.
    Similarly, the \emph{rightmost endpoint} $r(\mathcal I)$ of $\mathcal{I}$ is defined as $\max_{I \in \mathcal I} r(I)$.
\end{toappendix}
    Throughout the paper, we assume that the indices of a collection of intervals $\I = \set{I_1,\ldots,I_n}$ follow the order given by centres of intervals.
    That is, $c(I_{i}) \le c(I_{i+1})$ for all $1\le i\le n-1$.
    However, it is not assumed that collections are ordered when given as the input graph.
    %
    Given a radius $r>0$ and a point $p\in \mathbb{R}$, a \emph{disk} $D$ centred at $p$ is the set $D = \set{x\in \mathbb{R}^2\mid \lVert x,p \rVert_2 \le r}$.
    An \emph{open disk} $D$ is a disk without its boundary circle; that is, $D = \set{x\in \mathbb{R}^2\mid \lVert x,p \rVert_2 < r}$.
    We assume that the disks are open, unless we mention the contrary.
    A \emph{unit disk} is a disk of radius $r = 1/2$.
    \begin{toappendix}
    The \emph{minimax centre} $p$ of a finite set of points $P \subset \mathbb{R}^2$ is the centre of the smallest circle that contains $P$, which is the point that minimises $\max_{p'\in P} \lVert p,p'\rVert_m$ for $m \in \set{1,2}$.
    The \emph{diameter} $\diam{P}$ of a finite set of points $P \subset \mathbb{R}^2$ is the distance of the farthest pair of points in $P$.
    Similarly, the diameter $\diam{S}$ of a convex polygon $\S$ is the diameter of its vertices.
    \end{toappendix}
    The \emph{$L_m$ distance} for a $m\ge 1$ defines the distance between two points $p = (p_1,\ldots,p_d)$ and $q = (q_1,\ldots,q_d)$ in $\mathbb{R}^d$ as $\lVert p,q\rVert_m = \sqrt[m]{(p_1-q_1)^m+\cdots + (p_d - q_d)^m}$.
    In all subsequent sections, we use the $L_2$ distance (also known as the Euclidean distance) and the $L_1$ distance (also known as the Manhattan distance).

\paragraph*{Graphs}
    Throughout the paper, a graph $G = (V,E)$ is assumed to be a simple, finite, and undirected graph with vertex set $V$ and edge set $E$.
    An \emph{edgeless graph} is a graph $G = (V, E)$ such that $E = \emptyset$.
    A $k$\emph{-clique} of a graph $G = (V, E)$ is a subset $W\subseteq V$ such that $|W| = k$ and for all $u,v \in W,\: u\neq v$, $\{u,v\} \in E$, for $k \le n$.
    If such $W$ exists in $V$, we say that $G$ \emph{contains a $k$-clique}.
    An \emph{interval graph} is an intersection graph $G = (V,E)$ where the vertex set $V = \{v_1,\ldots,v_n\}$ corresponds to a collection of intervals $\mathcal{I} = \{I_1,\ldots,I_n\}$ and an edge $\{v_i,v_j\} \in E$ exists if and only if $I_i \cap I_j \neq \emptyset$, for any $1\leq i,j \leq n,\: i \neq j$. 
    An interval graph is called \emph{unit interval graph} if $\len{I} = 1$ for all $I \in \mathcal{I}$.
    Similarly, a \emph{disk graph} is an intersection graph $G = (V,E)$ where the vertex set $V = \{v_1,\ldots,v_n\}$ corresponds to a disk collection $\mathcal{D} = \{D_1,\ldots,D_n\}$. An edge $(v_i,v_j) \in E$ exists if and only if $D_i \cap D_j \neq \emptyset$, for any $1\leq i,j \leq n,\: i \neq j$.
    A \emph{unit disk graph} is a disk graph in which the collection contains exclusively unit disks.
    Unless stated otherwise, all intersection graphs are assumed to be \emph{unweighted}. 
    A \emph{weighted intersection graph} assigns a multiplicative weight, called the \emph{distance weight}, to the moving distance function of each object. 
    The formal definition of distance weight appears in later sections when required.
    An (infinite) set of graphs $\Pi$ is a \emph{graph property} (or simply a property), and we say that \emph{$G$ satisfies $\Pi$} if $G \in \Pi$.
    A graph property $\Pi$ is \emph{non-trivial} if infinitely many graphs belong to $\Pi$ and infinitely many graphs do not belong to $\Pi$.
    In this paper, we deal with the following non-trivial properties:
    (i)  $\Pi_{\texttt{edgeless}} = \{G :G\text{ is an edgeless graph.}\}$,
    (ii) $\Pi_{\texttt{acyc}} = \{G :G\text{ is an acyclic graph.}\}$,  
    (iii)  $\Pi_{k\texttt{-clique}} = \{G :G\text{ contains a $k$-clique.}\}$ and
    (iv)   $\overline{\Pi_{k\texttt{-clique}}} = \{G :G \not\in \Pi_{k\texttt{-clique}}\}$.

\section{Satisfying \texorpdfstring{$\Pi_{\texttt{edgeless}}$}{} on Unit Interval Graphs in \texorpdfstring{$O(n\log n)$}{} time}\label{sec:edg_uig}

We show that $\Pi_{\texttt{edgeless}}$ can be satisfied in $O(n\log n)$ time given a unit interval graph of $n$ intervals.
We start by defining a problem that we call {\idisp} and then use the algorithm designed to satisfy the properties $\Pi_{\texttt{edgeless}}$, $\Pi_{\texttt{acyc}}$ and $\overline{\Pi_{k\texttt{-clique}}}$. 
\ifConf
{\idisp} receives as input a collection $\I$ of $n$ intervals and a real $s \ge 1$, and asks for the minimum value of the total moving distance to obtain a collection $\I'$ that satisfies $c(I'_j)-c(I'_i) \ge s$ for each $I'_i, I'_j \in \I'$, $i<j$.
\fi
\ifFull
The problem is defined as follows:

\begin{itembox}[l]{\idisp}
    \begin{description}
        \item[Input:] A collection $\I$ of $n$ intervals and a real $s \ge 1$.
        \item[Output:] The minimum value of the total moving distance for obtaining a collection $\I'$ that satisfies $c(I'_j)-c(I'_i) \ge s$ for each $I'_i, I'_j \in \I'$, $i<j$.
    \end{description}
\end{itembox}
\fi
When $s = 1$, {\idisp} is equivalent to {\gged} on unit interval graphs for satisfying $\Pi_{\texttt{edgeless}}$.
%
%
For simplicity, the intervals are assumed to be open. This avoids the need to address infinitesimally small distances required to separate closed intervals.
We must first introduce some basic definitions and notation to describe the algorithm.
Given a collection of $n$ intervals $\I= \set{I_1,\ldots,I_{n}}$, let $D = (d_1,\ldots,d_{n})$ be a vector such that $d_i$ is the moving distance applied to $I_i$. 
We denote by $\I^D = \{I^D_1\ldots, I^D_{n}\}$ the collection of intervals such that $c(I^D_i) = c(I_i) +d_i$.
The set $\D(\I) \subseteq \mathbb{R}^n$ is the set of vectors that describe the moving distance applied to intervals such that the condition of {\idisp} is satisfied. 
In other words, for all $D = (d_1,\ldots,d_n) \in \D(\I)$, $c(I^D_{j})+c(I^D_{i}) \ge s$ holds for $i < j$.
We use $\D^{\mathit{opt}}(\I) \subseteq \D(\I)$ to denote the subset of vectors in $\D(\I)$ that minimises the total moving distance applied to intervals; i.e. $\D^{\mathit{opt}}(\I) = \set{D=(d_1,\ldots,d_n) \in \D(\I)\mid \sum_{1\le i \le n} |d_i| = \min_{D' = (d'_1,\ldots,d'_n) \in \D(\I)}{\sum_{1\le i \le n}|d'_i|}}$.

Intuitively, we aim to find a vector $D \in \D^{\mathit{opt}}(\I)$ to move each interval so that the distance between each pair of intervals is at least $s$.
Given an arbitrary $D \in \D^{\mathit{opt}}(\I)$, the order of $\I^D$ may be different from the order of $\I$.
However, it was previously shown~\cite{HonoratoDroguett2024} that the there are always a vector $D \in \D^{\mathit{opt}}(\I)$ such that the order of $\I^D$ preserves the order of $\I$.
This implies that there always exists an optimal solution of {\idisp} for which checking the inequality $(c(I_{i+1})+d_{i+1}) - (c(I_{i})+d_{i}) \ge s \text{ for } \leq i \leq n-1$ is sufficient.

%
%
We now define the \emph{equispace function}, which moves intervals so that the distance between their centres is exactly $s$, maintaining the order induced by interval centres.
\begin{definition}[\textit{Equispace function}]\label{def:tmd_equispace}
    Let $(\I,s)$ be an instance of {\idisp} where $\I$ is a collection of unit intervals. The \emph{equispace function} of $\I$ to a point $x$ is a function $E: \I\times \mathbb{R} \rightarrow \mathbb{R}$ defined as:
    \begin{align*}
        E(\I,x) = \sum_{i = 1}^n f_i(x),\quad f_i(x) = |x-c(I_i) - (n-i)s|.
    \end{align*}
    The vector that describes the moving distances given by $E(\I,x)$ is defined as $E_x(\I) = (e_1,\ldots,e_n)= \left(\alpha_1 f_1(x),\ldots,\alpha_n f_n(x)\right)$ where $\alpha_i = 1$ if $x\ge c(I_i)+(n-i)s$ and $\alpha_i = -1$ otherwise, for $1\le i \le n$.
    We also denote by $\I^{E_x(\I)} = \{I^{E_x(\I)}_1\ldots, I^{E_x(\I)}_{n}\}$ the collection of intervals where $c(I^{E_x(\I)}_i) = c(I_i) +\alpha_if_i(x)$ for $1\le i \le n$.
\end{definition}
By the above, $E_x(\I) \in \D(\I)$ for all $x \in \mathbb{R}$.
Moreover, $c(I^{E_x(\I)}_{i+1}) - c(I^{E_x(\I)}_{i}) = s$ for all $1 \le i \le n-1$.
We first prove that for certain collections of intervals, minimising $E$ gives a vector contained in $\D^{\mathit{opt}}(\I)$.
\begin{mlemmarep}\label{lem:tmd_convex}
    The equispace function $E(\I,x)$ is a piecewise-linear convex function.
\end{mlemmarep}
\begin{proof}
    By \Cref{def:tmd_equispace}, the function $E(\I,x)$ is a function of the form $f(x) = f_n(x)+\cdots+ f_1(x) = |x - c(I_n)| + |x - c(I_{n-1}) - s| + \cdots + |x - c(I_1) - (n-1)s|$. 
    The absolute function is convex; hence each $f_i$ is convex.
    Consequently, $f(x)$ is also convex as it is a sum of convex functions.
    On the other hand, the piecewise linearity of $E$ is given by the fact that each absolute function is piecewise linear.
    Therefore, $E$ is a piecewise-linear convex function.
\end{proof}

We define the \emph{set of breakpoints of $E(\I,x)$} to be the set $B_\I =\set{b_1^\I,\ldots,b_n^\I} = \set{c(I_i) + (n-i)s\mid I_i \in \I,\: 1\le i\le n}$.
Given a collection of intervals $\I$, we define the equispace function $E(\I, x)$ as a sequence of linear functions $E_1(\I, x), \ldots, E_{\size{\I}+1}(\I, x)$. 
The slope of $E_i(\I, x)$ is less than the slope of $E_j(\I, x)$ for $1 \le i < j \le \size{\I}$.
Since the equispace function is convex and piecewise linear, the points that minimise $E$ are located within a range $b_\ell \leq x \leq b_r$, where $b_{\ell} \le b_r$ and $b_\ell, b_r \in B_\I$.
We prove that $b_\ell$ and $b_r$ can be easily found.

\begin{mlemmarep}\label{lem:tmd_opt}
     The minimum value of $E(\I,x)$ is given by the breakpoint $b^{\I}_{(n+1)/2}$ if $n$ is odd, and by breakpoints $b^{\I}_{n/2}$ and $b^{\I}_{(n/2)+1}$ otherwise.
\end{mlemmarep}
\begin{proof}
    Let $s_i$ be the slope of $E_i(\I,x)$.
    The function $E(\I,x)$ is a function of the form $f_1(x)+\cdots+f_n(x)$ where $f_i(x) = |x-c(I_i) -(n-i)s|$.
    The slope of $f_i$ is equal to $1$ if $x\ge c(I_i) +(n-i)s$ and $-1$ otherwise, which implies that $s_i = -n + 2(i-1)$ for $1\le i \le n +1$ and thus $s_{i+1}-s_i = 2$ for $1\le i \le n$.
    Suppose first that $n$ is odd.
    Then, 
    \begin{align*}
        s_{(n+1)/2} & = -n + 2((n+1)/2 - 1) = -n +n+1 - 2 = -1\text{ and}\\ 
        s_{(n+1)/2 + 1} &= -n + 2((n+1)/2 + 1 - 1) = -n+n+1 = 1.
    \end{align*}
    Hence, it follows that the $((n+1)/2)$th breakpoint minimises $E$.
    Suppose that $n$ is even.
    In this case, we have that
    \begin{align*}
        s_{n/2} & = -n + 2(n/2 - 1) = -n +n - 2 = -2,\\ 
        s_{n/2 + 1} &= -n + 2(n/2 + 1 - 1) = -n+n = 0\text{ and},\\
        s_{n/2 + 2} &= -n + 2(n/2+2 - 1) = -n + n +4 -2 = 2.
    \end{align*}
    Hence, any point $x\in \mathbb{R}$ such that $b^{\I}_{n/2} \le x \le b^{\I}_{n/2+1}$ minimises $E$, implying that $b^{\I}_{n/2}$ and $b^{\I}_{(n/2)+1}$ also minimise $E$.
    This concludes the proof.
\end{proof}

By \Cref{lem:tmd_opt}, the minimum value of $E$ for an arbitrary collection of intervals $\I$ is given by the median value(s) of $B_{\I}$.
%
%
We now show which collections allow minimising $E$ to obtain a vector in $\D^{\mathit{opt}}(\I)$, characterised as follows:
\begin{definition}[\textit{Optimally Equispaceable Collections}]\label{def:equispace_collections}
    Given a collection of intervals $\I$, we say that $\I$ is \emph{optimally equispaceable} if there exists a $D \in \D^{\mathit{opt}}(\I)$ such that $D = E_{x^*}(\I)$ and $x^* \in \argmin_{x \in \mathbb{R}} E(\I,x)$. Equivalently, $\I$ is optimally equispaceable if $E_{x^*}(\I) \in \D^{\mathit{opt}}(\I)$ for all $x^* \in \argmin_{x \in \mathbb{R}} E(\I,x)$.
\end{definition}

\begin{lemma}\label{lem:partition_opt}
    Let $\I = \set{I_1,\ldots,I_n}$ be a collection of unit intervals such that $c(I_{i+1}) - c(I_{i}) \le s$ for $1\le i \le n-1$.
    Then $\I$ is optimally equispaceable.
    Moreover, there exists a $D \in \D^{\mathit{opt}}(\I)$ such that $c(I^D_{i+1}) -c(I^D_i) = s$ holds for all $1 \le i \le n-1$.
\end{lemma}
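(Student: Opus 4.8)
The plan is to show two things in sequence: first, that the "equispaced-from-median" distance vector $D := E_{x^*}(\I)$, where $x^* \in \argmin_x E(\I,x)$, is feasible — i.e. $D \in \D(\I)$ — and second, that it is in fact optimal, i.e. $D \in \D^{\mathit{opt}}(\I)$. Feasibility is essentially already recorded in the excerpt: for any $x$, $E_x(\I) \in \D(\I)$ and $c(I^{E_x(\I)}_{i+1}) - c(I^{E_x(\I)}_{i}) = s$ for all $1 \le i \le n-1$; picking $x = x^*$ gives the "moreover" clause for free once optimality is established. So the whole content is the optimality claim, and the hypothesis $c(I_{i+1}) - c(I_i) \le s$ for all $i$ is what must be exploited.

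For optimality I would argue as follows. Recall from the discussion preceding the lemma that there is always some $D^\star \in \D^{\mathit{opt}}(\I)$ that preserves the order of $\I$, so it suffices to compare $E_{x^*}(\I)$ against order-preserving feasible vectors, and for those the only constraints are the consecutive ones $(c(I_{i+1}) + d_{i+1}) - (c(I_i) + d_i) \ge s$. Now observe the key reduction: the total moving cost of an order-preserving feasible vector $D = (d_1,\dots,d_n)$ equals $\sum_i |d_i| = \sum_i |c(I^D_i) - c(I_i)|$, and — crucially — if we substitute $y_i := c(I^D_i) - (n-i)s$ (the position of interval $i$ "normalised" by subtracting the forced offset), then $|d_i| = |y_i - (c(I_i) - (n-i)s)| = |y_i - b^{\I}_i|$ with $b^{\I}_i$ the breakpoints from the definition of $B_\I$, and the consecutive constraint $c(I^D_{i+1}) - c(I^D_i) \ge s$ becomes exactly $y_{i+1} \ge y_i$. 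Under the hypothesis $c(I_{i+1}) - c(I_i) \le s$, the breakpoints are non-increasing: $b^{\I}_{i+1} - b^{\I}_i = (c(I_{i+1}) - c(I_i)) - s \le 0$, i.e. $b^{\I}_1 \ge b^{\I}_2 \ge \cdots \ge b^{\I}_n$. Hence the optimisation becomes: minimise $\sum_i |y_i - b^{\I}_i|$ subject to $y_1 \le y_2 \le \cdots \le y_n$, where the targets $b^{\I}_i$ are already sorted in the opposite (non-increasing) order.

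The heart of the argument is then the claim that this isotonic problem is solved by the constant sequence $y_1 = \cdots = y_n = m$, where $m$ is a median of $\{b^{\I}_1,\dots,b^{\I}_n\}$ — and $E_{x^*}(\I)$ is precisely the vector with $c(I^{E_{x^*}(\I)}_i) = x^* - (n-i)s$, so $y_i = x^*$ for all $i$, with $x^*$ the median breakpoint by Lemma \ref{lem:tmd_opt}. To prove this claim I would take any feasible non-decreasing $(y_i)$ and show $\sum_i |y_i - b^{\I}_i| \ge \sum_i |m - b^{\I}_i|$: one clean way is an exchange/averaging argument — since the $b^{\I}_i$ are non-increasing while a feasible $(y_i)$ is non-decreasing, for any pair $i<j$ one can replace $(y_i, y_j)$ by $(\min(y_i,y_j), \max(y_i,y_j))$... but they are already ordered, so instead I would use the standard fact that "flattening" a non-decreasing sequence towards a common value never increases $\sum |y_i - b_i|$ when the $b_i$ are anti-sorted relative to the $y_i$. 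Concretely: if $(y_i)$ is not constant, pick the largest index $k$ with $y_k < y_{k+1}$ — wait, more robustly, replace the whole vector by its mean, or push the smallest distinct value up and the largest down; each such move weakly decreases the cost because moving a coordinate $y_i$ toward the bulk of the targets helps more than it hurts, using that on one side most targets are $\ge$ and on the other most are $\le$. The cleanest rigorous route is: show that for a non-decreasing $(y_i)$ against non-increasing targets, $\sum|y_i - b_i| \ge \sum |y_i - b_{\sigma(i)}|$ for the permutation $\sigma$ re-sorting the targets to match (this is the rearrangement-type inequality for $L_1$), and $\sum|y_i - b_{\sigma(i)}|$ with all targets now in the same order is minimised coordinatewise, with the additional monotonicity constraint on $y$ forcing — via a pool-adjacent-violators argument — that when the sorted targets coincide with a constant-optimal structure, the median constant is feasible and optimal. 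I expect this isotonic/rearrangement step to be the main obstacle; everything else (the change of variables, reading off $x^*$ as the median from Lemma \ref{lem:tmd_opt}, concluding the "moreover" clause) is bookkeeping. Once the claim is in hand, $E_{x^*}(\I) \in \D^{\mathit{opt}}(\I)$ for every minimiser $x^*$, which is exactly the definition of $\I$ being optimally equispaceable, and the equality $c(I^D_{i+1}) - c(I^D_i) = s$ holds by construction of $E$.
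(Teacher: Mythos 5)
Your route is genuinely different from the paper's (which takes an optimal order-preserving $D$, assumes some consecutive gap exceeds $s$, and in three sign cases shifts a block of intervals to strictly decrease the total distance), but as written it has a genuine gap at exactly the step you flag as ``the main obstacle''. After the change of variables the problem is indeed an isotonic $L_1$ problem: minimise $\sum_i |y_i - b^{\I}_i|$ over non-decreasing $(y_i)$, with the hypothesis making the targets $b^{\I}_i$ non-increasing (note a sign slip: you need $y_i = c(I^D_i)+(n-i)s$, not $c(I^D_i)-(n-i)s$; with your definition the constraint becomes $y_{i+1}\ge y_i+2s$ and the targets are not the breakpoints). The missing piece is the central claim that any feasible non-decreasing $(y_i)$ satisfies $\sum_i|y_i-b^{\I}_i|\ \ge\ \min_{c}\sum_i|c-b^{\I}_i|$. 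You never prove it: you cycle through tentative devices (averaging, pushing extreme values, pool-adjacent-violators), and your ``cleanest rigorous route'' --- first re-sort the targets via an $L_1$ rearrangement inequality, then minimise coordinatewise --- actually fails: once the targets are re-sorted into non-decreasing order, the sequence $y_i=b^{\I}_{\sigma(i)}$ is itself monotone, hence feasible, and drives the re-matched sum to $0$; the monotonicity constraint no longer interacts with the re-sorted targets, so no lower bound of the form $\sum_i|m-b^{\I}_i|$ can be extracted along that path.

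The claim is true and the gap is fixable by a short pairing argument rather than rearrangement or PAVA: for $i<j$, monotonicity of $y$ and anti-monotonicity of $b^{\I}$ give $|y_i-b^{\I}_i|+|y_j-b^{\I}_j|\ \ge\ (b^{\I}_i-y_i)+(y_j-b^{\I}_j)\ \ge\ b^{\I}_i-b^{\I}_j$, and summing over the pairs $(i,\,n+1-i)$ for $i\le\lfloor n/2\rfloor$ yields exactly $\sum_{i\le\lfloor n/2\rfloor}\bigl(b^{\I}_i-b^{\I}_{n+1-i}\bigr)=\min_c\sum_i|c-b^{\I}_i|=\min_x E(\I,x)$; combined with the cited fact that some optimal solution is order-preserving, this makes $E_{x^*}(\I)$ optimal for every minimiser $x^*$ and the ``moreover'' clause holds by construction. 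With that lemma supplied, your approach is a valid alternative to the paper's exchange argument --- it buys a global, formula-level explanation (the optimum equals the median cost of the breakpoints), whereas the paper's local perturbation proof is shorter and needs no auxiliary isotonic fact --- but as submitted the decisive inequality is asserted, not proved.
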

\begin{proof}
    We only prove the latter, as the existence of $D$ in $\D^{\mathit{opt}}(\I)$ directly implies the optimal equispaceability of $\I$.
    That is, we show that $\I^D$ satisfies $c(I^D_{i+1}) - c(I^D_i) = s$, for $1 \le i \le n-1$.
    By the definition of {\idisp}, we have $c(I^D_{i+1}) \ge c(I^D_{i})$ and $c(I^D_{i+1}) - c(I^D_{i}) \ge s$ for $1\le i \le n-1$.
    Suppose that there exists a pair of intervals $I_i$ and $I_{i+1}$ that satisfies $c(I^D_{i+1}) - c(I^D_{i}) > s$. 
    Let $s' = c(I^D_{i+1}) - c(I^D_{i})$ and $\delta = s'-s$.
    We show how to obtain a total moving distance $D'$ such that $\sum_{d\in D'} |d|< \sum_{d\in D} |d|$ and $c(I^{D'}_{i+1}) - c(I^{D'}_{i}) = s$.
    
    We divide the proof into three cases: (i) $d_i \ge 0$, (ii) $d_{i+1} \le 0$ and (iii) $d_i\le 0$ and $d_{i+1}\ge 0$. 
    For case (i), it follows that $d_j \ge d_{j-1}\ge 0$ for $i+1\le j \le n$ and $(c(I^D_{i+1})-\delta) - c(I^D_{i})  = c(I_{i+1})+(d_{i+1}-\delta) - (c(I_i) + d_i) = s$ holds.
    Let $D' (d'_1,\ldots,d'_n) = (d_1,\ldots,d_i,d_{i+1}-\delta,\ldots,d_{n}-\delta)$.
    The dispersal condition is satisfied by $\I^{D'}$.
    Furthermore, since $\delta > 0$, the total moving distance satisfies 
    $\sum_{d\in D'} |d| = \sum_{j=1}^i |d_j| + \sum_{j=i+1}^n d_j - \delta < \sum_{d\in D} |d|$, which contradicts the optimality of $D$.
    
    For case (ii), $d_j \le d_{j+1}$ for $1\le j \le i$ holds, and the argument for case (i) applies analogously for $D' = (d'_1,\ldots,d'_n) = (d_1+\delta,\ldots,d_i+\delta,d_{i+1},\ldots,d_{n})$.
    
    We only need to prove case (iii).
    Let $\delta = s' - s$ as in the previous cases.
    If $\delta \le d_{i+1}$, then we move the intervals as in the first case. If $\delta \le -d_{i}$, then we move intervals as in the second case. 
    In both cases, the same argument applies and the total moving distance contradicts the optimality of $D$.
    Thus we assume that $\delta > d_{i+1}, -d_i$ holds.
    Without loss of generality, we move intervals $I_j$ for $i+1\le j \le n$ by $d_{i+1}$ to the left by $\delta' = d_{i+1}$ and intervals $I_j$ for $1\le j \le i$ to the right by $\delta'' = (c(I^D_{i+1})-\delta')- c(I^D_i) - s$. 
    Then $(c(I^D_{i+1})-\delta') - (c(I^D_i) +\delta'') = s$ holds since $d_{i+1}-\delta' = 0$. 
    Let $D' = (d'_1,\ldots,d'_n) = (d_1+\delta'',\ldots,d_i+\delta'',d_{i+1}-\delta',\ldots,d_n-\delta')$. 
    The inequality $\sum_{d\in D'} |d| = \sum_{j=1}^i d_j+\delta'' + \sum_{j=i+1}^n d_j - \delta' < \sum_{d\in D} |d|$ holds since $\delta',\delta''>0$, which contradicts the optimality of $D$.
    Therefore, in an optimal solution, $\I$ must satisfy $c(I_{i+1})+d_{i+1} - (c(I_i) + d_i) = c(I^D_{i+1})-c(I^D_i) = s$, for $1 \le i \le n-1$.
\end{proof}

Let $\I = \set{I_1,\ldots,I_n}$ and $\J = \set{J_1,\ldots,J_m}$ be two collections of unit intervals and let $x_1,x_2\in \argmin_{x\in\mathbb{R}} E(\I,x)$, $x_1\le x_2$, and $y_1,y_2\in \argmin_{x\in\mathbb{R}} E(\J,x)$, $y_1\le y_2$, be the breakpoints that minimise $E$ for $\I$ and $\J$, respectively.
We say that \emph{$\I$ and $\J$ intersect when equispaced} when $y_1 \le x_2 + \size{\J}s$.
In other words, $\I$ and $\J$ intersect when equispaced whenever there exist points $x_1\le x \le x_2$ and $y_1 \le y \le y_2$ such that there exist $I \in \I^{E_x(\I)}$ and $I \in \J^{E_y(\J)}$ for which $c(J) - c(I) < s$.

\begin{mlemmarep}\label{lem:tmd_opt_of_opts}
    Given that $\I \cup \J = \set{I_1,\ldots,I_n,J_1,\ldots,J_m}$, $\I \cup \J$ is optimally equispaceable if and only if $y_1 \le x_2 + \size{\J}s$.
\end{mlemmarep}
\begin{proof}
    Let $\H = \I \cup \J = \set{I_1,\ldots,I_n,J_1,\ldots,J_m}= \set{I_1,\ldots,I_{n+m}}$ and assume that $\H$ optimally equispaceable.
    In other words, there exists a vector $D = (d_1,\ldots,d_{n+m}) \in \D^{\mathit{opt}}(\H)$ such that $c(I^D_{i+1}) - c(I^D_{i}) = s$ for $1\le i\le n+m-1$.
    Suppose first that $y_1 > x_2 + \size{\J} s$.
    We show that this assumption contradicts the optimality of the vector $D$.
    First, assume that $c(I^D_{n+m}) \ge y_1$.
    Then, it follows that $x_2 < c(I^D_n)$ since $c(I^D_{n+m}) - c(I^D_n) = \size{\J}s$.
    Moreover, $\sum_{i=1}^n |d_i| = E(\I,c(I^D_{n}))$ holds.
    By \Cref{lem:tmd_convex}, $E(\I,x_2) < E(\I,c(I^D_{n}))$ also holds, which contradicts the optimality of the vector $D$.
    Moreover, $c(I^D_{n+1})-x_2 >s$ holds, and hence the inequality $c(I^D_{i+1}) - c(I^D_{i}) = s$ fails when $i=n$.
    Suppose instead that $c(I^D_{n+m}) < y_1$.
    In this case, \Cref{lem:tmd_convex} tells us that $\sum_{i=n}^{n+m} |d_i| = E(\J,c(I^D_{n+m})) > E(\J,y_1)$, which contradicts the optimality of $D$.
    Moreover, if $\delta = c(I^D_{n+m})-y_1$ is the value with which $\J$ is moved from $c(I^D_{n+m})$ to $y_1$, then $(c(I_{n+1}) + \delta) - c(I^D_n) > s$ holds.
    Consequently, $c(I^D_{i+1}) - c(I^D_{i}) = s$ fails when $i = n$.
    In both cases, the initial assumption that $D$ is an optimal solution is contradicted.
    Therefore, $y_1 \le x_2 + \size{\J}s$ must hold.

    In the other direction, assume that $y_1 \le x_2 + \size{\J}s$.
    We show that $\I\cup \J$ is optimally equispaceable.
    That is, we prove the existence of a vector $D = (d_1,\ldots,d_{n+m}) \in \D^{\mathit{opt}}(\I\cup \J)$ such that $c(I^D_{i+1}) - c(I^D_{i}) = s$ for $1\le i\le n+m-1$.
    Let $D' = (d'_1,\ldots,d'_{n+m}) \in \D^{\mathit{opt}}(\I\cup \J)$ be a vector for which the inequality $c(I^D_{i+1}) - c(I^D_{i}) = s$ does not hold for an $1\le i \le n+m-1$.
    We know that $\I$ and $\J$ are optimally equispaceable, thus we can assume that $c(I^{D'}_{i+1}) - c(I^{D'}_{i}) = s$ fails for $D'$ when $i=n$ without loss of generality.
    Let $x = c(I^{D'}_n)$ and $y=c(I^{D'}_{n+m})$.
    We have $x\le x_2$, $y\ge y_1$, and $\sum_{i=1}^{n+m} |d'_i| = E(\I,x) + E(\J,y)$.
    We show that $\sum_{i=1}^{n+m} |d_i| \le \sum_{i=1}^{n+m} |d'_i|$ for all values of $x$ and $y$.
    
    First, suppose that $x = x_2$.
    We have that $y_1 \le x+\size{\J}s$ and $x+\size{\J}s \le y$, otherwise $I^{D'}_n$ intersects with $I^{D'}_{n+1}$.
    Moreover, $x+\size{\J}s < y$, otherwise $c(I^{D'}_{i+1}) - c(I^{D'}_{i}) = s$ holds when $i=n$.
    Hence $E(\J,x+\size{\J}s) \le E(\J,y)$ holds by \Cref{lem:tmd_convex}.
    If we set $(d_n,\ldots,d_{n+m}) = E_{x+\size{\J}s}(\J)$ and $(d_1,\ldots,d_n) = (d'_1,\ldots,d'_n)$, then $c(I^D_{i+1}) - c(I^D_{i}) = s$ and $\sum_{i=1}^{n+m} |d_i| \le \sum_{i=1}^{n+m} |d'_i|$ hold.

    Suppose now that $y = y_1$.
    Analogously, $x< y - \size{\J}s$, otherwise $I^{D'}_n$ intersects with $I^{D'}_{n+1}$ or $c(I^{D'}_{i+1}) - c(I^{D'}_{i}) = s$ holds when $i=n$.
    Again, $E(\J,y-\size{\J}s) \le E(\J,x)$ holds by \Cref{lem:tmd_convex}.
    If we set $(d_1,\ldots,d_n) = E_{y-\size{\J}s}(\I)$  and $(d_n,\ldots,d_{n+m}) = (d'_n,\ldots,d'_{n+m})$, then $c(I^D_{i+1}) - c(I^D_{i}) = s$ and $\sum_{i=1}^{n+m} |d_i| \le \sum_{i=1}^{n+m} |d'_i|$ hold.

    Lastly, suppose that $x<x_2$ and $y>y_1$.
    Let $x',y'$ be two arbitrary points such that $x\le x' \le x_2$, $y_1 \le y' \le y$ and $y' = x'+\size{\J}s$.
    These points exist since $y > x+\size{\J}s$ and $y_1 \le x_2+\size{\J}s$.
    If we set $(d_1,\ldots,d_n) = E_{x'}(\I)$ and $(d_n,\ldots,d_{n+m}) = E_{y'}(\J)$, then $\sum_{i=1}^{n+m} |d_i| \le \sum_{i=1}^{n+m} |d'_i|$ holds by \Cref{lem:tmd_convex}.
    Moreover, $c(I^D_{i+1}) - c(I^D_{i}) = s$ also holds since $y' = x' +\size{\J}s$. 
    
    In all cases, we obtained a vector $D = (d_1,\ldots,d_{n+m})$ in $\D^{\mathit{opt}}(\I)$ such that $c(I^D_{i+1}) - c(I^D_{i}) = s$ for $1\le i\le n+m-1$.
    Consequently there exists a $D \in \D^{\mathit{opt}}(\I\cup \J)$ such that $D = E_{x^*}(\I\cup \J)$ and $x^* \in \argmin_{x \in \mathbb{R}} E(\I\cup\J,x)$.
    Therefore, $\I \cup \J$ is optimally equispaceable if and only if $y_1 \le x_2 + \size{\J}s$.
\end{proof}

\Cref{cor:tmd_opt_no_intersect} is directly implied by \Cref{lem:tmd_opt_of_opts}.

\begin{corollary}\label{cor:tmd_opt_no_intersect}
    If $y_1 > x_2 + \size{\J} s$, then $\I \cup \J$ is not optimally equispaceable.
    Moreover, the minimum total moving distance for dispersing $\I \cup \J$ is equal to $E(\I,x) + E(\J,y)$ for arbitrary $x_1\le x\le x_2$ and $y_1\le y\le y_2$.
\end{corollary}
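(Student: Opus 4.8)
The first assertion is immediate from \Cref{lem:tmd_opt_of_opts}: that lemma states the equivalence ``$\I \cup \J$ is optimally equispaceable if and only if $y_1 \le x_2 + \size{\J}s$'', so under the hypothesis $y_1 > x_2 + \size{\J}s$ the collection $\I \cup \J$ is simply not optimally equispaceable. The substance of the corollary is therefore the value of the minimum total moving distance, which I would pin down by proving a matching lower and upper bound, both equal to $E(\I,x) + E(\J,y)$ for any $x_1 \le x \le x_2$ and $y_1 \le y \le y_2$. (This quantity does not depend on the particular choice of $x$ and $y$: by \Cref{lem:tmd_convex} the function $E(\I,\cdot)$ is convex, hence constant and equal to its minimum on the whole segment $[x_1,x_2]$ between two of its minimisers, and likewise for $E(\J,\cdot)$ on $[y_1,y_2]$.)

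For the lower bound, I would use that the indices of $\I \cup \J = \set{I_1,\ldots,I_n,J_1,\ldots,J_m}$ respect the centre order, so every interval of $\I$ precedes every interval of $\J$. Hence the constraints defining $\D(\I)$ and $\D(\J)$ are each a subset of those defining $\D(\I\cup\J)$, and any $D \in \D(\I\cup\J)$ splits into its first $n$ coordinates $D_\I \in \D(\I)$ and its last $m$ coordinates $D_\J \in \D(\J)$, with total cost $\sum_{d\in D}\size{d} = \sum_{d\in D_\I}\size{d} + \sum_{d\in D_\J}\size{d}$. Since $\I$ and $\J$ are (by the standing assumption of \Cref{lem:tmd_opt_of_opts}) optimally equispaceable, $\sum_{d\in D_\I}\size{d} \ge \min_{x}E(\I,x) = E(\I,x)$ and similarly $\sum_{d\in D_\J}\size{d} \ge E(\J,y)$, so every feasible $D$ costs at least $E(\I,x)+E(\J,y)$.

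For the upper bound I would exhibit an optimal configuration: equispace $\I$ toward an arbitrary $x \in [x_1,x_2]$ and $\J$ toward an arbitrary $y \in [y_1,y_2]$, i.e.\ let $D_\I = E_x(\I)$, $D_\J = E_y(\J)$, and let $D$ be their concatenation. By \Cref{def:tmd_equispace} the resulting centres are $c(I^{D_\I}_i) = x - (n-i)s$ and $c(J^{D_\J}_j) = y - (m-j)s$, so all gaps internal to $\I$ and to $\J$ equal exactly $s$, while the single gap straddling the two blocks is $c(J^{D_\J}_1) - c(I^{D_\I}_n) = y - x - (m-1)s$. The hypothesis gives $y \ge y_1 > x_2 + \size{\J}s \ge x + ms$, so this gap is strictly larger than $s$ and therefore $D \in \D(\I\cup\J)$; its cost is exactly $E(\I,x) + E(\J,y)$ by the definition of the equispace function. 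Matching this with the lower bound proves the claim.

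Essentially everything here is forced by \Cref{lem:tmd_opt_of_opts}; the only points needing a little care are the arithmetic of the equispaced centre positions (in particular that the rightmost centre of a block equispaced toward $x$ lands exactly at $x$) and the explicit use of the fact that $\I$ and $\J$ are individually optimally equispaceable, which is what turns their separate minima into $\min_x E(\I,x)$ and $\min_y E(\J,y)$. I do not anticipate any real obstacle here, since the corollary merely records the ``decoupled'' regime already analysed inside the proof of \Cref{lem:tmd_opt_of_opts}.
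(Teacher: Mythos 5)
Your proof is correct. The paper itself gives no argument for this corollary beyond the remark that it is ``directly implied by \Cref{lem:tmd_opt_of_opts}'', so your explicit treatment is a legitimate (and more careful) filling-in of that one-line derivation rather than a divergence from it: the first claim is indeed just the contrapositive of the ``only if'' direction of the lemma, and your matching lower/upper bound argument for the cost is exactly the decoupled regime that the lemma's own proof manipulates implicitly. The two points you flag are the right ones to check and you handle both correctly: the projection of any feasible vector for $\I\cup\J$ onto its first $n$ and last $m$ coordinates is feasible for $\I$ and $\J$ respectively (valid because the union's index order places all of $\I$ before all of $\J$, so the within-block constraints are a subset of the union's constraints), and the equispaced configuration toward $x\in[x_1,x_2]$ and $y\in[y_1,y_2]$ is feasible because the block equispaced toward $x$ has its rightmost centre at exactly $x$, making the cross-block gap $y-x-(m-1)s \ge y_1 - x_2 - (m-1)s > s$ under the hypothesis $y_1 > x_2+\size{\J}s$. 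Your appeal to the standing assumption that $\I$ and $\J$ are individually optimally equispaceable is also necessary for the cost claim and is consistent with how the corollary is used in \Cref{lem:consec_partition_opt} and the algorithm, so there is no gap.
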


Given a collection $\I$ of $n$ unit intervals, we note that $\I$ can be partitioned into $m\le n$ subcollections $\I_{a_1,b_1},\ldots,\I_{a_{m},b_{m}}$ such that for all $1\le i \le m$, $c(I_{j+1})-c(I_{j}) \le s$ for $a_i \le j \le b_i -1$.
By \Cref{lem:partition_opt}, each $\I_{a_i,b_i}$ is an optimally equispaceable collection.
We use \Cref{lem:tmd_opt_of_opts} and prove the statement of \Cref{lem:consec_partition_opt}.

\begin{mlemmarep}\label{lem:consec_partition_opt}
    Let $\I = \set{I_1,\ldots,I_n} = \I_{a_1,b_1}\cup\cdots\cup\I_{a_{m},b_{m}}$ be a collection of $n$ unit intervals partitioned as above.
    If there exist integers $\alpha_1, \ldots, \alpha_k$ such that $\I_{a_{\alpha_i},b_{\alpha_i}}$ and $\I_{a_{\alpha_i + 1},b_{\alpha_i + 1}}$ intersect when equispaced,
    then there exists an optimal solution for dispersing $\I$ that disperses the intervals in a way that $c(I_{j+1})+d_{j+1} - (c(I_j) + d_j) = s$ holds for $1 \le i \le k$ and $a_{\alpha_i} \le j < b_{\alpha_i + 1}$.
\end{mlemmarep}
\begin{proof}
    Consider the subcollections $\I_{a_{\alpha_{i}},b_{\alpha_{i}}}$ and $\I_{a_{\alpha_{i}+1},b_{\alpha_{i}+1}}$.
    These subcollections intersect when equispaced, hence $\I_{a_{\alpha_{i}},b_{\alpha_{i}}} \cup \I_{a_{\alpha_{i}+1},b_{\alpha_{i}+1}}$ is optimally equispaceable by \Cref{lem:tmd_opt_of_opts}.
    Let $D^* \in \D^{\mathit{opt}}(\I_{a_{\alpha_{i}},b_{\alpha_{i}}} \cup \I_{a_{\alpha_{i}+1},b_{\alpha_{i}+1}})$.
    By the definition of $E$, we have that $c(I^{D^*}_{j+1}) - c(I^{D^*}_{j}) = s$ holds for $a_{\alpha_i} \le j < b_{\alpha_i + 1}$, which implies that the intervals are dispersed such that $c(I_{j+1})+d_{j+1} - \left(c(I_j) + d_j\right) = s$ holds for $1 \le i \le k$ and $a_{\alpha_i} \le j < b_{\alpha_i + 1}$.
    We only need to show that this dispersal is part of an optimal solution for dispersing $\I$.
    Without loss of generality, suppose that $\I_{a_{\alpha_{i}},b_{\alpha_{i}}} \cup \I_{a_{\alpha_{i}+1},b_{\alpha_{i}+1}}$ and $\I_{a_{\alpha_{i}+2},b_{\alpha_{i}+2}}$ intersect when equispaced. 
    Then, \Cref{lem:tmd_opt_of_opts} ensures that $\I_{a_{\alpha_{i}},b_{\alpha_{i}}} \cup \I_{a_{\alpha_{i}+1},b_{\alpha_{i}+1}} \cup \I_{a_{\alpha_{i}+2},b_{\alpha_{i}+2}}$ is optimally equispaceable.
    Applying \Cref{lem:tmd_opt_of_opts} recursively results in $k\le m$ partitions of $\I = \I_1,\ldots,\I_k$, where each $\I_i$ is dispersed to a point $x_i \in \argmin_{x\in \mathbb{R}} E(\I_i,x)$ and there exists no pair $\I_i,\I_{j}$, $i\neq j$, such that $\I_i$ and $\I_j$ intersect when equispaced.
    Moreover, the minimum total moving distance is given by $\sum_{i=1}^k E(\I_i,x_i)$ by \Cref{cor:tmd_opt_no_intersect}.
    In other words, $(x_1,\ldots,x_k)$ describes an optimal solution to disperse $\I$.
    Therefore, there exists an optimal solution to disperse $\I$ that disperses the intervals in a way that $c(I_{j+1})+d_{j+1} - \left(c(I_j) + d_j\right) = s$ holds for $1 \le i \le k$ and $a_{\alpha_i} \le j < b_{\alpha_i + 1}$.
\end{proof}

\paragraph*{Outline of \Cref{alg:dispersing-intervals}} 
Given a collection of unit intervals $\I$ and a dispersal value $s\ge1$, the algorithm starts by sorting and partitioning $\I$ into $m\le n$ disjoint subcollections $\I_{a_1,b_1},\ldots,\I_{a_{m},b_{m}}$ such that each $\I_{a_{i},b_{i}}$ satisfies \Cref{lem:partition_opt}.
Subsequently, the optimal breakpoints are determined for each $E(\I_{a_{i},b_{i}},x)$.
Whenever there exist two subcollections $\I_{a_{i},b_{j}},\:i\le j$ and $\I_{a_{k},b_{\ell}},\:k\le \ell$ that intersect when equispaced, the algorithm considers both subcollections as a unique subcollection $\I_{a_i,b_{\ell}} = \I_{a_i,b_{j}} \cup \I_{a_k,b_{\ell}}$ and recursively determines the optimal breakpoints of $E(\I_{a_i,b_{\ell}},x)$ using the breakpoint sets of $E(\I_{a_i,b_{j}},x)$ and $E(\I_{a_k,b_{\ell}},x)$.
%
%
\Cref{lem:consec_partition_opt} ensures that this recursion partitions $\I$ into non-intersecting subcollections when equispaced.
Lastly, the algorithm returns the total moving distance, which is calculated as the sum of the optimal values of $E$ for each subcollection.


Before showing the complexity of the algorithm, we must characterise the set of breakpoints further.
When a collection of unit intervals $\I =\set{I_1,\ldots,I_n}$ is partitioned into $m$ disjoint subcollections $\I_{a_1,b_1},\ldots,\I_{a_{m},b_{m}}$ of intervals that satisfy \Cref{lem:partition_opt}, 
the set of breakpoints $B_{\I_{a_i,b_i}}$ is equal to $\set{c(I_j)+(\size{\I_{a_i,b_i}}-j)s\mid I_i\in \I,\:a_i\le j \le b_i}$ for each $1\le i \le m$.
Consequently, $B_\I$ can be reformulated as follows:
\begin{gather*}
    B_\I = \left\{c(I_j) + \left(\size{\I_{a_i,b_i}}-j + \sum_{k=i+1}^m \size{\I_{a_k,b_k}}\right)s\mid 1\le i \le m,\: a_i\le j \le b_i\right\}.
\end{gather*}
As a result, if $b$ and $b'$ are the breakpoints for $I$ in $B_{\I_{a_i,b_i}}$ and $B_\I$, respectively, then $b' = b -\sum_{j=i+1}^m \size{\I_{a_j,b_j}}$ holds. 
Moreover, the breakpoints of any union of subcollections $\I_{a_i,b_j} = \I_{a_i,b_i}\cup\cdots\cup\I_{a_j,b_j}$ can be calculated in the same way by subtracting $\sum_{k=j+1}^m \size{\I_{a_k,b_k}}$ from any breakpoint $b\in B_\I$ calculated using an interval $I \in \I_{a_i,b_j}$.
It follows that the order of $B_{\I_{a_i,b_j}}$ is the same as the order of the corresponding breakpoints in $B_\I$.

The above implies that the breakpoints of any (union of) subcollection(s) can be obtained from $B_\I$.
We denote the set $\bigcup_{i\le k \le j}\left\{b+s\sum_{l=k+1}^m \size{\I_{a_l,b_l}}\mid b \in B_{\I_{a_k,b_k}}\right\}$ by $B^*_{\I_{a_i,b_j}}$ and call it the \emph{cumulative set of breakpoints of $B_{\I_{a_i,b_j}}$}.
We prove that $B^*_{\I_{a_1,b_1}},\ldots,B^*_{\I_{a_m,b_m}}$ can be found in $O(n\log n)$ time.

\begin{algorithm}[tb]
    \caption{Dispersing $n$ unit intervals in $O(n\log n)$ time.}
    \label{alg:dispersing-intervals}
    \Procedure{\rm{DispersingIntervals}($\I$,$s$)}{
        Sort and partition $\I$ into $m\le n$ subcollections $\I_{a_1,b_1},\ldots,\I_{a_{m},b_{m}}$ such that for all $1\le i \le m$, $c(I_{j+1})-c(I_{j}) \le s$ for $a_i \le j \le b_i -1$.\;\nllabel{alg:disp-sort}
        Compute and sort $B^*_{\I_{a_i,b_i}}$ for all $1\le i \le n$.\;\nllabel{alg:breakpoint_sort}
        $x^1_{a_i,b_{i}},x^2_{a_i,b_{i}}$ are the breakpoint $b^{\I_{a_i,b_{i}}}_{(n+1)/2}$ if $\size{\I_{a_i,b_{i}}}$ is odd and $b^{\I_{a_i,b_{i}}}_{n/2}$ and $b^{\I_{a_i,b_{i}}}_{(n/2)+1}$ otherwise.\;\nllabel{alg:breakpoint_1}
        $D^{\mathit{opt}} \gets \bigcup_{1\le i \le n} \set{(B^*_{\I_{a_i,b_i}},x^1_{a_i,b_{i}},x^2_{a_i,b_{i}})}$\;
        \While{$x^1_{a_{k},b_{\ell}}\le x^2_{a_{i},b_{j}}+\size{\I_{a_{k},b_{\ell}}}s$, $1\le i\le j < k\le \ell \le n$}{\nllabel{alg:disp-while}
            $B^*_{\I_{a_i,b_l}} \gets \mathit{merge}(B^*_{\I_{a_i,b_j}},B^*_{\I_{a_k,b_\ell}})$\;\nllabel{alg:merge}
            $x^1_{a_i,b_{\ell}}, x^2_{a_i,b_{\ell}} \gets b^{\I_{a_i,b_{\ell}}}_{(n+1)/2}$ if $\size{B^*_{\I_{a_i,b_{\ell}}}}$ is odd and $x^1_{a_i,b_{\ell}} \gets b^{\I_{a_i,b_{\ell}}}_{n/2}$,  $x^2_{a_i,b_{\ell}} \gets b^{\I_{a_i,b_{\ell}}}_{(n/2)+1}$ otherwise.\;
            $D^{\mathit{opt}} \gets \left(D^{\mathit{opt}} \setminus \left\{(B^*_{\I_{a_i,b_j}},x^1_{a_i,b_{j}},x^2_{a_i,b_{j}}),(B^*_{\I_{a_k,b_l}},x^1_{a_k,b_{\ell}},x^2_{a_k,b_{\ell}})\right\}\right) \cup \left\{(B^*_{\I_{a_i,b_\ell}},x^1_{a_i,b_{\ell}},x^2_{a_i,b_{\ell}})\right\}$\;\nllabel{alg:add_delete}
        }
        \Return $\sum_{(B^*_{\I_{a_i,b_j}},x_1,x_2) \in D^{\mathit{opt}}} E(\I_{a_i,b_i}\cup\cdots\cup \I_{a_j,b_j},x_1)$\;\nllabel{alg:disp_tmd_calc}
    }
\end{algorithm}

\begin{mlemmarep}\label{lem:breakpoint_sort}
    Let $\I = \set{I_1,\ldots,I_n} = \I_{a_1,b_1}\cup\cdots\cup\I_{a_{m},b_{m}}$ be a collection of $n$ unit intervals partitioned as above. 
    Then the cumulative sets of breakpoints $\B^*_{\I_{a_1,b_1}},\ldots,\B^*_{\I_{a_m,b_m}}$ such that each $B^*_{\I_{a_i,b_i}}$ is sorted can be obtained in $O(n\log n)$ total time.
\end{mlemmarep}
\begin{proof}
    Let $n_i$ be the size $\size{\I_{a_i,b_i}}$ and $n_{i,j}$ be the cumulative sum of sizes $n_i+\cdots+n_j$, for $i\le j$.
    We observe that $n_1+\cdots+n_m = n$ since the given partitions are disjoint.
    We first determine $n_{i,m}$ for each $2\le i \le m$ in $O(m)$ time.
    For each $1\le i \le m$, we compute $\B^*_{\I_{a_i,b_i}} = \set{c(I_j)+(n_i-j + n_{i+1,m})s\mid a_i\le j \le b_i}$ in $O(n_i)$ time.
    The total running time of this procedure is $O(n_1+\cdots+n_m) = O(n)$ time.
    We then sort $B^*_{\I_{a_i,b_i}}$ for each $1\le i \le m$ in $O(n_i \log n_i)$ time. The total running time $T(n_1,\ldots,n_m)$ is given as follows:
    \begin{align*}
        T(n_1,\ldots,n_m) & = n_1 \log n_1 + \cdots + n_m\log n_m\le n_1\log{n} + \cdots +n_m\log{n}\\
        &= (n_1+\cdots+n_m)\log{n}= n \log{n}.
    \end{align*}
    Therefore, obtaining sets $\B^*_{\I_{a_1,b_1}},\ldots,\B^*_{\I_{a_m,b_m}}$ such that each $\B^*_{\I_{a_i,b_i}}$ is sorted can be done in $O(n\log n)$ total time.
\end{proof}
\begin{mlemmarep}\label{lem:total_merge_complexity}
    Let $\I = \set{I_1,\ldots,I_n} = \I_{a_1,b_1}\cup\cdots\cup\I_{a_{m},b_{m}}$ be a collection of $n$ unit intervals partitioned as above.
    If cumulative breakpoint sets $B^*_{\I_{a_1,b_1}},\ldots,B^*_{\I_{a_m,b_m}}$ are given so that each $B^*_{\I_{a_i,b_i}}$ is sorted, then merging them into one sorted set can be done in $O(n\log n)$ total time.
\end{mlemmarep}
\begin{proof}
    We proceed using an unbalanced merge sort approach.
    Given two sorted sets $A$ and $B$ of numbers, it is known that $A$ and $B$ can be merged into one sorted set in $O(\size{A} \log{(\size{B}/\size{A})})$ time, assuming that $\size{A}\le \size{B}$~\cite{Brown1979}.
    We use this algorithm and show that the sets $B^*_{\I_{a_1,b_1}},\ldots,B^*_{\I_{a_m,b_m}}$ can be merged in $O(n \log n)$ total time.
    Let $n_i$ be the size $\size{\I_{a_i,b_i}}$ and $n_{i,j}$, $i\le j$, be the cumulative sum of sizes $n_i+\cdots+n_j$.
    We prove the lemma by induction on $m$.
    Given $m = 1$, no sort is performed since each $B^*_{\I_{a_i,b_i}}$ is already sorted, implying that the time is bounded by $n \log n$.
    Thus, we assume that the lemma holds for $1 < p \le m -1$ sets and prove that it also holds for $p = m$.
    Without loss of generality, assume that the merging of $B^*_{\I_{a_1,b_1}},\ldots,B^*_{\I_{a_m,b_m}}$ is done by merging two already merged sets $B^*_{\I_{a_1,b_i}} = B^*_{\I_{a_1,b_1}}\cup \cdots\cup B^*_{\I_{a_i,b_i}}$ and $B^*_{\I_{a_{i+1},b_m}} = B^*_{\I_{a_{i+1},b_{i+1}}}\cup \cdots \cup B^*_{\I_{a_m,b_m}}$, for an arbitrary $1\le i \le m-1$.
    Thus, we have $\size{B^*_{\I_{a_1,b_i}}},\size{B^*_{\I_{a_{i+1},b_m}}}\le m-1$ and $n = n_{1,i} + n_{i+1,m}$.
    Let $T(a,b)$ denote the number of steps necessary to merge two sets of size $a$ and $b$, $n_{1,i}$ be $ n_{1,j}+n_{j+1,i}$ and $n_{i+1,m}$ be $n_{i+1,k}+n_{k+1,m}$ for arbitrary $1\le j \le i$ and $i+1\le k \le m$.
    Without loss of generality, assume that $n_{1,i} \le n_{i+1,m}$.
    The value of $T(n_{1,i},n_{i+1,m})$ is bounded as follows:
    \begin{align*}
    T(n_{1,i}&,n_{i+1,m}) =\\
    & = T(n_{1,j},n_{j+1,i})+T(n_{i+1,k},n_{k+1,m}) + n_{1,i} \log{(n_{i+1,m}/n_{1,i})}\\
    & \le n_{1,i} \log n_{1,i} + n_{i+1,m} \log n_{i+1,m} + n_{1,i} \log{(n_{i+1,m}/n_{1,i})} & \text{(IH)}\\
    & = n_{1,i} \log n_{1,i} + n_{i+1,m} \log n_{i+1,m} + n_{1,i}\log{n_{i+1,m}} - n_{1,i} \log{n_{1,i}}\\
    & = n_{i+1,m} \log n_{i+1,m} + n_{1,i}\log{n_{i+1,m}}= (n_{1,i} + n_{i+1,m})\log{n_{i+1,m}}\\
    & \le n\log n.
    \end{align*}
    We have proved that if the lemma is true for $p \le m -1$ sets, then the lemma is also true for $p =m$.
    This concludes the induction and lemma proof.
\end{proof}

\begin{theorem}\label{thm:interval_dispersion}
    Given a collection of unit intervals $\I$ and a value $s\ge 1$, {\idisp} can be solved in $O(n\log n)$ time.
\end{theorem}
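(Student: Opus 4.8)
The plan is to verify that Algorithm~\ref{alg:dispersing-intervals} is correct and runs in $O(n\log n)$ time; both parts are essentially assembled from the lemmas already established.

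For correctness I would argue as follows. By the order-preservation fact quoted from \cite{HonoratoDroguett2024}, it suffices to produce a vector $D$ with $c(I^D_{i+1})-c(I^D_i)\ge s$ for all consecutive $i$. After sorting, the algorithm partitions $\I$ into the maximal runs $\I_{a_1,b_1},\dots,\I_{a_m,b_m}$ whose consecutive centre gaps are at most $s$; by \Cref{lem:partition_opt} each run is optimally equispaceable, and by \Cref{lem:tmd_opt} the minimiser of each $E(\I_{a_i,b_i},\cdot)$ is the median breakpoint(s) of its breakpoint set. The while loop then repeatedly merges two consecutive (possibly already merged) blocks whenever they intersect when equispaced: \Cref{lem:tmd_opt_of_opts} guarantees that the union is again optimally equispaceable, so the median of the merged cumulative breakpoint set $B^*$ is an optimal equispacing point, and \Cref{lem:consec_partition_opt} (applied recursively exactly as in its proof) guarantees that the resulting partition into blocks that pairwise do not intersect when equispaced yields an optimal solution of total cost $\sum_i E(\cdot,x_i)$ by \Cref{cor:tmd_opt_no_intersect} --- which is precisely the value returned on \Cref{alg:disp_tmd_calc}.

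For the running time I would add up the cost of each step. \Cref{alg:disp-sort}: sorting is $O(n\log n)$ and the partition is a single $O(n)$ scan. \Cref{alg:breakpoint_sort}: building and sorting the cumulative breakpoint sets $B^*_{\I_{a_i,b_i}}$ costs $O(n\log n)$ by \Cref{lem:breakpoint_sort}, and reading off a median breakpoint of a sorted array is $O(1)$. The while loop performs at most $m-1\le n-1$ merges, since each merge decreases the number of blocks by one; by \Cref{lem:total_merge_complexity} the total cost of these unbalanced merges over the whole run is $O(n\log n)$, and all remaining per-iteration bookkeeping --- testing the condition $x^1_{a_k,b_\ell}\le x^2_{a_i,b_j}+\size{\I_{a_k,b_\ell}}s$ via the stored breakpoints, and updating $D^{\mathit{opt}}$ --- is $O(1)$ per merge. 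Finally, evaluating $E$ on each block in \Cref{alg:disp_tmd_calc} is linear in the block size, hence $O(n)$ in total. Summing everything gives $O(n\log n)$.

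The main obstacle I expect is the merge-time bound: a naive implementation that re-merges blocks element by element could spend $\Theta(n^2)$ time, so the argument hinges entirely on \Cref{lem:total_merge_complexity}, i.e.\ on using an unbalanced merge of sorted arrays and telescoping the resulting $\log$ terms. A secondary subtlety is making the while loop faithfully realise the recursive structure behind \Cref{lem:consec_partition_opt}: after merging two blocks one must re-test the new block against its left neighbour, which calls for a left-to-right stack sweep; this changes neither the total number of merges nor the aggregate merge cost, so the bound is unaffected.
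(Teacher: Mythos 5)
Your proposal is correct and follows essentially the same route as the paper: the paper's proof of \Cref{thm:interval_dispersion} likewise verifies \Cref{alg:dispersing-intervals} step by step, charging the sort and partition $O(n\log n)$, the breakpoint construction to \Cref{lem:breakpoint_sort}, the at most $m-1$ merges to \Cref{lem:total_merge_complexity}, constant-time bookkeeping to a doubly linked list, and the final sum $O(n)$, with correctness resting on \Cref{lem:partition_opt,lem:tmd_opt,lem:tmd_opt_of_opts,lem:consec_partition_opt,cor:tmd_opt_no_intersect} exactly as you describe. The only cosmetic difference is that the paper extracts each median breakpoint by binary search in $O(\log n)$ time rather than by direct indexing, which does not affect the bound.
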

\begin{proof}
    We show the complexity of \Cref{alg:dispersing-intervals}. 
    Line~\ref{alg:disp-sort} can be done in $O(n\log n)$ time for sorting and $O(n)$ time to determine the initial $m$ partitions.
    Similarly, line~\ref{alg:breakpoint_sort} can be done in $O(n\log n)$ time by \Cref{lem:breakpoint_sort}.
    Given that each $\B^*_{\I_{a_i,b_i}}$ is sorted, the $((\size{\I_{a_i,b_i}}+1)/2)$th element (resp. $(\size{\I_{a_i,b_i}}/2)$th and $((\size{\I_{a_i,b_i}}/2)+1)$th element) can be calculated in $O(\log \size{\I_{a_i,b_i}})$ time using binary search on $\B^*_{\I_{a_i,b_i}}$.
    This ensures that line~\ref{alg:breakpoint_1} is done for all $1\le i \le m$ in $O(m\log n)$ total time.
    We initialise $D^{\mathit{opt}}$ as a doubly linked list where each node $i$ contains the information of $(B^*_{\I_{a_i,b_i}},x^1_{a_i,b_{i}},x^2_{a_i,b_{i}})$.
    We show the complexity of the loop in line~\ref{alg:disp-while}.
    We merge both $B^*_{\I_{a_i,b_j}}$ and $B^*_{\I_{a_k,b_\ell}}$ to obtain a sorted $B^*_{\I_{a_i,b_\ell}}$.
    Hence, the median value(s) of $B^*_{\I_{a_i,b_\ell}}$ can be calculated in $O(\log n)$ time by binary search.
    At each execution of line~\ref{alg:merge}, two partitions are merged; thus the number of partitions is reduced by one unit at each iteration.
    Initially, there exist $m$ partitions, and hence the loop of line~\ref{alg:disp-while} iterates at most $m-1$ times.
    Moreover, merging $m$ cumulative sets of breakpoints into one sorted set can be done in $O(n\log n)$ time by \Cref{lem:total_merge_complexity}, which implies that any partial merge of these sets is also bounded by $O(n\log n)$.
    Consequently, the total running time of line~\ref{alg:disp-while} is $O(n\log n)$ time.
    Lastly, in line~\ref{alg:add_delete} the two merged sets are deleted and the new one is added. 
    Since $D^{\mathit{opt}}$ is a doubly linked list, this can be done in $O(1)$ time by connecting the previous and next node of $B^*_{\I_{a_i,b_j}}$ and $B^*_{\I_{a_k,b_\ell}}$ to a new node containing $B^*_{\I_{a_i,b_\ell}}$, respectively.
    Once there is no pair of subcollections left to merge, the total moving distance is calculated in $O(n)$ time in line~\ref{alg:disp_tmd_calc} following the definition of cumulative set of breakpoints, which concludes that the total running time of \Cref{alg:dispersing-intervals} is $O(n\log n)$ time.
\end{proof}


\Cref{thm:interval_dispersion} implies the following result for satisfying $\Pi_{\texttt{edgeless}}$ on unit interval graphs when $s=1$.

\begin{mcorollaryrep}
    Given a unit interval graph $(G,\I)$, {\gged} can be solved in $O(n\log n)$ time for satisfying $\Pi_{\texttt{edgeless}}$.
\end{mcorollaryrep}
\begin{proof}
    If $G$ is edgeless, then there exists an optimal solution that satisfies $c(I_{i+1}) - c(I_i) \ge 1$ for $1\le i \le n-1$. 
    Dispersing the intervals using $s = 1$ results in intervals separated by a distance of at least one unit. 
    That is, the resulting unit interval graph satisfies the edgeless condition. Therefore, \Cref{thm:interval_dispersion} works for satisfying $\Pi_{\texttt{edgeless}}$ on unit intervals graphs when $s=1$. 
\end{proof}

\subsection{Satisfying \texorpdfstring{$\Pi_{\texttt{acyc}}$}{} and \texorpdfstring{$\overline{\Pi_{k\texttt{-clique}}}$}{} on Unit Interval Graphs}
\label{ssec:acyc_kclique_uig}

This section shows how to use \Cref{alg:dispersing-intervals} for satisfying $\Pi_{\texttt{acyc}}$ and $\overline{\Pi_{k\texttt{-clique}}}$ on unit interval graphs. 
We first show how to satisfy $\overline{\Pi_{k\texttt{-clique}}}$. 

It is shown in~\cite{HonoratoDroguett2024} that given a unit interval graph $(G,\I)$, $G$ does not contain a $k$-clique if and only if $c(I_{i+k-1}) - c(I_i) \ge  1$ for all $1 \le i \le n-k+1$.
%
%
%
%
This inequality can be decomposed into $k-1$ inequalities of the following form:
    for each $0\le r \le k-2$, $c(I_{i+k-1}) - c(I_i) \ge 1$ for all $1\le i \le n-k+1$ such that $i\bmod{k-1} = r$.
If $\I$ is decomposed into $k-1$ subcollections such that $\I = \bigcup_{1\le i \le k-1} \I_i$, $\I_i = \{I_j \in \I\mid 1\le j \le n,\: j\pmod{k-1}=i\}$, then \Cref{alg:dispersing-intervals} can be applied to each $\I_i$ independently for $s = 1$ to satisfy the above inequalities.
Since unit interval graphs are chordal, $G$ is acyclic if it is triangle-free; i.e. $G$ is contained in $\overline{\Pi_{3\texttt{-clique}}}$.
Consequently $\Pi_{\texttt{acyc}}$ can be satisfied by satisfying $\overline{\Pi_{k\texttt{-clique}}}$ when $k = 3$.
The above ideas imply \Cref{cor:nokclique}.

\begin{corollary}\label{cor:nokclique}
    Given an interval graph $(G,\I)$, {\gged} can be solved in $O(n\log n)$ time for satisfying $\Pi_{\texttt{acyc}}$ and $\overline{\Pi_{k\texttt{-clique}}}$.
\end{corollary}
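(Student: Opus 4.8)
The plan is to solve $\overline{\Pi_{k\texttt{-clique}}}$ by decomposing it into $k-1$ independent instances of {\idisp}, and then to obtain $\Pi_{\texttt{acyc}}$ for free as the case $k=3$; throughout, \Cref{thm:interval_dispersion} is used as a black box.

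First, for $\overline{\Pi_{k\texttt{-clique}}}$ I would invoke the characterisation of \cite{HonoratoDroguett2024}: a unit interval graph $(G,\I)$, with $\I$ sorted by centres, is $k$-clique-free if and only if $c(I_{i+k-1})-c(I_i)\ge 1$ for all $1\le i\le n-k+1$. The decisive point is that the two intervals in the $i$-th inequality, $I_i$ and $I_{i+k-1}$, carry indices congruent modulo $k-1$. Hence splitting $\I$ into the $k-1$ residue classes $\I_r=\{I_j\in\I : j\equiv r \pmod{k-1}\}$, $0\le r\le k-2$, decouples the inequality system into $k-1$ independent subsystems, and within $\I_r$ the surviving inequalities say exactly that consecutive centres, in the internal order of $\I_r$, are at distance at least $1$ — that is, they are precisely the feasibility constraints of the {\idisp} instance $(\I_r,1)$. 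Since the objective $\sum_i|d_i|$ is separable over the classes, I would argue that the optimum for $\overline{\Pi_{k\texttt{-clique}}}$ equals the sum of the optima of the $k-1$ instances $(\I_r,1)$: for the lower bound, take an order-preserving optimal solution (the analogue of the order-preserving property already used for {\idisp}), restrict it to each $\I_r$, and observe it is feasible for $(\I_r,1)$; for the upper bound, paste together order-preserving optimal solutions of the $k-1$ instances and check that the result is $k$-clique-free — this holds because each $\I_r$, having consecutive centres at least $1$ apart, contributes at most one centre to any open window of length $1$, so no such window can contain $k$ centres and no $k$-clique can arise.

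Next I would bound the running time. Sorting $\I$ by centres and forming the $k-1$ subcollections takes $O(n\log n)$. Running \Cref{alg:dispersing-intervals} on $(\I_r,1)$ costs $O(n_r\log n_r)\le O(n_r\log n)$ with $n_r=|\I_r|$, and since $\sum_r n_r=n$ the total over all classes is $O(n\log n)$; summing the $k-1$ returned values is $O(k)=O(n)$. This yields the $O(n\log n)$ bound for $\overline{\Pi_{k\texttt{-clique}}}$. Finally, for $\Pi_{\texttt{acyc}}$ I would use that a unit interval graph is chordal, so any cycle of length at least $4$ has a chord; iterating this, a chordal graph that contains any cycle contains a triangle. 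Thus the (still chordal) moved graph is acyclic precisely when it is triangle-free, i.e.\ precisely when it lies in $\overline{\Pi_{3\texttt{-clique}}}$, and applying the previous paragraphs with $k=3$ gives the $O(n\log n)$ bound.

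I expect the main obstacle to be the optimality half of the decomposition argument — justifying that treating the $k-1$ residue classes independently loses nothing. This rests on having an order-preserving optimal solution so that the $k$-clique-free condition is faithfully captured by the inequality system (for the lower bound), together with the ``one centre per residue class per unit window'' observation (for the converse), after which separability of $\sum_i|d_i|$ closes the argument. The running-time analysis is then routine via $\log n_r\le\log n$.
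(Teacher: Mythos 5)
Your proposal is correct and follows essentially the same route as the paper: the characterisation $c(I_{i+k-1})-c(I_i)\ge 1$ from~\cite{HonoratoDroguett2024}, the decomposition into the $k-1$ residue classes modulo $k-1$ solved as independent {\idisp} instances with $s=1$ via \Cref{thm:interval_dispersion}, and chordality reducing $\Pi_{\texttt{acyc}}$ to the case $k=3$. In fact you spell out the optimality of treating the classes independently (restriction of an order-preserving optimum for the lower bound, the one-centre-per-class-per-unit-window argument for the upper bound), a detail the paper leaves implicit.
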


\ifFull
An interval graph $G$ is bipartite if $G$ does not contain an odd cycle. Since interval graphs are chordal, any existence of a cycle implies also the existence of an odd cycle. Thus it is sufficient to remove all cycles to obtain a bipartite graph. 
\begin{corollary}
    Given a unit interval graph $(G,\I)$, {\gged} can be solved in $O(n\log n)$ time for satisfying $\Pi_{\texttt{bipar}}$.
\end{corollary}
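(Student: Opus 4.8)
The plan is to observe that, over unit interval graphs, satisfying $\Pi_{\texttt{bipar}}$ is exactly the same task as satisfying $\Pi_{\texttt{acyc}}$, and then to invoke \Cref{cor:nokclique}. First I would record the structural fact hinted at above: a chordal graph is bipartite if and only if it is acyclic. One direction is immediate, since every forest is bipartite. For the other, suppose a chordal graph $H$ contains a cycle and let $C$ be a shortest one; if $\size{C}\ge 4$, then by chordality $C$ has a chord, which splits $C$ into two strictly shorter cycles, contradicting the choice of $C$. Hence $\size{C}=3$, so $H$ contains a triangle, which is an odd cycle, and therefore $H$ is not bipartite.

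Next I would note that moving the objects of a unit interval graph $(G,\I)$ produces a collection $\I'$ that again consists only of unit intervals, so the resulting intersection graph $(G',\I')$ is again a unit interval graph and in particular chordal. Applying the equivalence above to $G'$, the graph $G'$ satisfies $\Pi_{\texttt{bipar}}$ if and only if it satisfies $\Pi_{\texttt{acyc}}$. Consequently, for every target collection $\I'$ the condition imposed on the movement is identical for the two properties, so the minimum total moving distance to satisfy $\Pi_{\texttt{bipar}}$ equals the minimum total moving distance to satisfy $\Pi_{\texttt{acyc}}$. By \Cref{cor:nokclique}, the latter can be computed in $O(n\log n)$ time, which gives the claim.

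The only point that needs care — rather than a genuine obstacle — is that the equivalence between bipartiteness and acyclicity must be applied to the edited graph $G'$ and not merely to the input graph $G$; this is legitimate precisely because the unit interval (hence chordal) structure is preserved under object movement. Everything else reduces to the already-established \Cref{cor:nokclique}, so no new algorithmic work is required.
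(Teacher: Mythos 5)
Your proof is correct and follows essentially the same route as the paper: both reduce bipartiteness to acyclicity via the chordality of (unit) interval graphs (a cycle in a chordal graph forces a triangle, hence an odd cycle) and then invoke \Cref{cor:nokclique}. Your additional care in noting that the equivalence must hold for the edited graph $G'$, which remains a unit interval graph after moving intervals, is a welcome clarification but not a different argument.
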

\fi

\section{Minimising the Total Moving Distance for Satisfying \texorpdfstring{$\Pi_{\texttt{edgeless}}$}{} on Weighted Interval Graphs is Hard}\label{sec:edg_ig}

In this section we show that {\gged} is strongly \NP-hard on weighted interval graphs for satisfying $\Pi_{\texttt{edgeless}}$. We show a reduction from {\threepartition}~\cite{garey1979}.
\ifConf
{\threepartition} receives as input a set $A$ of $3m$ elements, a bound $B \in \mathbb{Z}^+$ and a size $s(a) \in \mathbb{Z}^+$ such that $B/4 <s(a) < B/2$ and $\sum_{a \in A} s(a) = mB$, and the task is to decide whether $A$ can be partitioned into $m$ disjoint sets $A_1,\ldots,A_m$ such that for $1\le i \le m$, $\size{A_i} = 3$ and $\sum_{a\in A_i} s(a) = B$.
\fi
\ifFull
\begin{itembox}[l]{{\threepartition}~\cite{garey1979}}
    \begin{description}
        \item[Input:] Set $A$ of $3m$ elements; a bound $B \in \mathbb{Z}^+$; a size $s(a) \in \mathbb{Z}^+$ such that $B/4 <s(a) < B/2$ and $\sum_{a \in A} s(a) = mB$.
        \item[Task:] Decide whether $A$ can be partitioned into $m$ disjoint sets $A_1,\ldots,A_m$ such that for $1\le i \le m$, $\size{A_i} = 3$ and $\sum_{a\in A_i} s(a) = B$.
    \end{description}
\end{itembox}
\fi

Given an instance $(A,B,s)$ of {\threepartition}, we construct a collection of intervals $\I_A$ and show that $A$ can be partitioned if and only if $\Pi_{\texttt{edgeless}}$ can be satisfied on $\I_A$ with at most total moving distance $T$.
Given two intervals $I,I'$ such that $c(I)\le c(I')$, we say that $I$ and $I'$ intersect if $c(I')-c(I) < (\len{I'}+\len{I})/2$.

We show the construction of $\I_A$ (see \Cref{fig:reduction_overview_ig_hard}).
We define $\I_A$ as the collection $\I\cup \I^s \cup \I^b$ where $\I = \{I_1,\ldots,I_{3m}\},\: \I^s = \{I^s_1,\ldots,I^s_{m-1}\},\: \I^b =  \{I_\ell,I_r\}$ and,
\begin{description}
    \item[(i)] for $1\le i\le 3m$, $I_i$ is an interval such that $\len{I_i} = s(a_i)$ and $c(I_i) = -s(a_i)/2$ (that is, $r(I_i) = 0$),
    \item[(ii)] for $1\le i \le m-1$, $I^s_i$ is an interval such that $\len{I^s_i} = B$ and $c(I^s_i) = (2i-1)B + B/2$ and
    \item[(iii)] $I_\ell$ and $I_r$ are intervals such that $\len{I_\ell} = \len{I_r} = 3Bm^2 + \max_{a\in A}{s(a)}$, $c(I_\ell) = -3Bm^2/2$ and $c(I_r) = (2m-1)B+3Bm^2/2$.
\end{description}
    
    \begin{figure}[!b]
        \centering
        \includegraphics[scale=1,page=1]{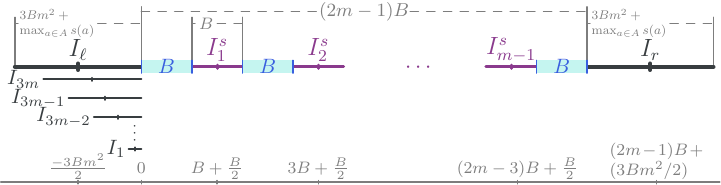}
        \caption{Reduction Overview}
        \label{fig:reduction_overview_ig_hard}
    \end{figure}  
    
For an interval $I\in \I_A$, we define the moving distance function $d_I:\mathbb{R}\rightarrow\mathbb{R}$ as:
\begin{align*}
    d_I(x) = \begin{cases}
        |c(I)-x|,&\quad I \in \I,\\
        12Bm^2|c(I)-x|,& \quad I \in \I^s \cup \I^b.
    \end{cases}
\end{align*}
Given an instance $(A,B,s)$ of {\threepartition}, we show the following properties.
\begin{mlemmarep}\label{lem:cumulative_sum_bound}
    Given an arbitrary partition of $A$ of $m$ disjoint sets $A_1,\ldots,A_m$ such that $A_i =\set{a^i_1,a^i_2,a^i_3}$ for $1\le i \le m$, $\sum_{i=1}^m 6(i-1)B + \sum^m_{i=1} (3a^i_1+2a^i_2+a^i_3) < 3Bm^2$ holds.
\end{mlemmarep}
\begin{proof}
    We first simplify the first sum:
    \begin{align*}
        \sum_{i=1}^m 6(i-1)B & = 6B\left(\sum_{i=1}^m i-1\right) = 6B\left(\sum_{i=1}^m i - \sum_{i=1}^m 1\right)\\
        & = 6B\left(\frac{m(m+1)}{2} - m\right) = 6B\left(\frac{m(m-1)}{2}\right).
    \end{align*}
    We obtain the upper bound using the fact that $s(a)< B/2$ for any $a \in A$:
    \begin{align*} \sum^m_{i=1} (3a^i_1+2a^i_2+a^i_3) <\sum_{i=1}^m 3\frac{B}{2}+2\frac{B}{2}+\frac{B}{2} = \sum_{i=1}^m 6\frac{B}{2} = \sum_{i=1}^m 3B = 3mB.
    \end{align*}
    Now we have that
    \begin{align*}
        \sum_{i=1}^m 6(i-1)B + \sum^m_{i=1} (3a^i_1+2a^i_2+a^i_3) & < 6B\left(\frac{m(m-1)}{2}\right) + 3mB\\
        & = 3Bm^2 -3mB +3mB = 3Bm^2.
    \end{align*}
    Therefore, the lemma statement is true.
\end{proof}
We note that \Cref{lem:cumulative_sum_bound} works for any partition of $A$ as described above, even without the restrictions of the {\threepartition} output.
\begin{mlemmarep}\label{lem:3p_iff_gged_edgeless}
    Given an instance $(A,B,s)$ of {\threepartition}, $A$ can be partitioned into $m$ disjoint sets $A_1,\ldots,A_m$ such that for $1\le i \le m$ $A_i = \set{a^i_1,a^i_2,a^i_3}$, $\size{A_i} = 3$ and $\sum_{a\in A_i} s(a) = B$ if and only if $\Pi_{\texttt{edgeless}}$ can be satisfied on $\I_A$ with total moving distance of at most $3Bm^2$.
\end{mlemmarep}
\begin{proof}
    Assume that $A$ can be partitioned into $m$ disjoint sets $A_1,\ldots,A_m$ such that for $1\le i \le m$, $\size{A_i} = 3$ and $\sum_{a\in A_i} s(a) = B$.
    Let $D = (d_1,\ldots,d_{3m}),\:D^s = (d^s_{1},\ldots, d^s_{m-1}),\:D^b = (d^b_\ell, d^b_r)$ be vectors that describe the moving distances of $\I$, $\I^s$ and $\I^b$ for satisfying $\Pi_{\texttt{edgeless}}$, respectively.
    We show that $D,D^s,D^b$ exist such that $\sum_{d\in D\cup D^s \cup D^b} |d| \le 3Bm^2$.
    
    Without loss of generality, we assume that the first three elements of $\I$ correspond to $A_1$, the next three elements to $A_2$, and so forth.
    We start by setting $D^s = 0$ and $D^b = 0$.
    We set $d_1 = a^1_1 $, $d_2 = a^1_1 + a^1_2 $ and $d_3 = a^1_1 + a^1_2 + a^1_3$.
    The centre of $I^D_1$ is equal to $c(I^D_1) = c(I_1) + d_1 = -a^1_1/2 + a^1_1 = a^1_1/2$.
    Similarly, $c(I^D_2) = a^1_1 + a^1_2/2$ and $c(I^D_3) = a^1_1 + a^1_2 + a^1_3/2$.
    Furthermore, it holds that $c(I^D_{i+1}) - c(I^D_i) = (\len{I^D_{i+1}}+\len{I^D_i})/2$ for $i\in \set{1,2}$, hence $I^D_1,I^D_2,I^D_3$ do not intersect each other.
    Moreover, $c(I^D_1) - c(I_\ell) = (\len{I^D_1}+\len{I_\ell})/2$ and $c(I^s_1) - c(I^D_3) = (\len{I^s_1}+\len{I^D_3})/2$.
    That is, $I_1,I_2,I_3$ were moved to the area of length $B$ between $I_\ell$ and $I^s_1$ without introducing new intersections.
    For $2 \le i \le m$, we set $d_{3i-2} = 2(i-1)B + a^i_1$, $d_{3i-1} = 2(i-1)B + a^i_1 + a^i_2$ and $d_{3i-1} = 2(i-1)B + a^i_1 + a^i_2 + a^i_3$.
    Analogous to $I_1,I_2,I_3$, it is easy to see that $I^D_{3i-2}, I^D_{3i-1}, I^D_{3i}$ are moved to the area of length $B$ between $I^s_{i-1}$ and $I^s_{i}$ without introducing intersections.
    This implies that $D$, $D^s$, and $D^b$ describe moving distances to make $\I_A$ satisfy $\Pi_{\texttt{edgeless}}$.
    For $1\le i \le m$, the total moving distance for moving $I_{3i-2},I_{3i-1},I_{3i}$ as described is given by $d_{3i-2}+d_{3i-1}+d_{3i} = 6B(i-1) + 3a^i_1+2a^i_2+a^i_3$.
    Consequently, $T = \sum D + \sum D^s + \sum D^b = \sum D = \sum_{i=1}^m 6B(i-1) + 3a^i_1+2a^i_2+a^i_3$.
    By \Cref{lem:cumulative_sum_bound}, $T< 3Bm^2$ holds.
    Therefore, $\Pi_{\texttt{edgeless}}$ can be satisfied on $\I_A$ with total moving distance of at most $3Bm^2$.

    In the other direction, assume that $\Pi_{\texttt{edgeless}}$ can be satisfied on $\I_A$ with total moving distance of at most $3Bm^2$.
    We let $D,D^s,D^b$ be the vectors that describe such a solution.
    We show that $A$ can be partitioned into $m$ disjoint sets $A_1,\ldots,A_m$ such that for $1\le i \le m$, $A_i = \set{a^i_1,a^i_2,a^i_3}$, $\size{A_i} = 3$ and $\sum_{a\in A_i} s(a) = B$.
    Observe that if an interval $I\in \I$ is moved to a point $x \le \ell(I_\ell)$, then $d_I(x) \ge \len{I_\ell}-\len{I}/2 = 3Bm^2 + \max_{a\in A}{s(a)} - \len{I}/2> 3Bm^2$.
    Thus no interval is moved to the left side of $I_\ell$.
    Analogously, no interval is moved to the right side of $I_r$.
    Moreover, for $I\in \I^b$, $c(I)-1/4\le c(I^{D^b})\le c(I)+1/4$ holds, otherwise $d_I(x) > 3Bm^2$ by the definition of the moving distance function.
    This argument also holds for any $I \in \I^s$, since the moving distance is the same.
    Hence the intervals in $\I^D$ must be between $r(I_\ell)$ and $\ell(I_r)$.
    
    There exist $m$ areas of length $B$ between $r(I_{\ell})$ and $\ell(I_r)$ divided by the $m-1$ intervals of $\I^s$. 
    By definition of $\I_A = \I \cup \I^s\cup \I^b$, $\sum_{I\in \I } \len{I} = mB$ and $\size{\I} = 3m$. 
    For any subcollection of intervals $\I' \subseteq \I$ such that $\size{\I'} \ge 4$, the constraint $B/4 < s(a) < B/2$ ensures that $\sum_{I \in \I'} \len{I} > B$ holds.
    Consequently, $\I$ must admit a partition into $m$ subcollections $\I_1,\dots,\I_m$ of three intervals such that $\sum_{I \in \I_i} \len{I} = B$ for each $i$.
    Otherwise the intervals do not fit into the $m$ areas divided intervals of $\I^s$.
    On the other hand, it holds that the length of an area is in the range of $B\pm 1/2$ since $c(I)-1/4\le c(I^{D^b})\le c(I)+1/4$ holds for $I \in \I^s\cup \I^b$.
    Given that $s(a) \in \mathbb{Z}^+$ for all $a \in A$, this ensures that regardless of the movement of intervals in $\I^s$, the three intervals in each area must satisfy $\sum_{I \in \I_i} \len{I} = B$ for each $i$.
    Without loss of generality, assume that $\I$ is moved such that $c(I^D_{i+1}) \ge c(I^D_i)$ for $1\le i \le 3m-1$ and that $\len{I^D_i} = s(a'_i)$ for $a'_i \in A$.
    The $m$ disjoint subcollections $\set{I_{3i-2},I_{3i-1},I_{3i}}$, $1\le i \le m$, satisfy $\len{I_{3i-2}}+\len{I_{3i-1}}+\len{I_{3i}} = B$. That is, this partition gives $m$ disjoint subsets of the form $A_i = \set{a'_{3i-2},a'_{3i-1},a'_{3i}}$ such that $\sum_{a\in A_i} s(a)  = B$.
    Therefore, $A$ can be partitioned into $m$ disjoint sets $A_1,\ldots,A_m$ such that for $1\le i \le m$ $A_i = \set{a^i_1,a^i_2,a^i_3}$, $\size{A_i} = 3$ and $\sum_{a\in A_i} s(a) = B$.
\end{proof}
Lastly, we remark that the polynomial construction of $\I_A$ is straightforward by iterating over $A$ and following the definitions given at the beginning of the section. We summarise the main result of this section as follows:
\begin{theorem}\label{thm:ig_nphard_edgeless}
    {\gged} is strongly \NP-hard on weighted interval graphs for satisfying $\Pi_{\texttt{edgeless}}$.
\end{theorem}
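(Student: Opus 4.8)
The plan is to assemble the reduction already set up in this section. Since {\threepartition} is strongly \NP-hard~\cite{garey1979}, it suffices to give a polynomial-time many-one reduction from {\threepartition} to the decision version of {\gged} for $\Pi_{\texttt{edgeless}}$ on weighted interval graphs (``is the minimum total moving distance at most $T$?'') whose output instance has all numeric parameters bounded by a polynomial in the magnitudes of the input; this is what rules out pseudo-polynomial algorithms unless $\P = \NP$, and hence yields \emph{strong} \NP-hardness rather than the ordinary variant.

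First I would fix the map $(A,B,s) \mapsto (\I_A, T)$ with $T = 3Bm^2$, where $\I_A = \I \cup \I^s \cup \I^b$ and the distance weights are $1$ on $\I$ and $12Bm^2$ on $\I^s \cup \I^b$, exactly as defined above. A single pass over $A$ produces $\I_A$ following items (i)--(iii), so the construction runs in time polynomial in $\size{A} + \log B$. Crucially, every numeric quantity occurring in the output instance --- the interval lengths, the interval centres, the distance weight $12Bm^2$, and the threshold $3Bm^2$ --- is bounded in absolute value by a fixed-degree polynomial in $B$ and $m$, hence by a polynomial in the magnitude of the {\threepartition} instance. (If strictly integral data is wanted, multiplying every coordinate by $2$ clears the half-integers and changes nothing else.)

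Next I would invoke \Cref{lem:3p_iff_gged_edgeless}: $(A,B,s)$ is a yes-instance of {\threepartition} if and only if $\Pi_{\texttt{edgeless}}$ can be satisfied on $\I_A$ with total moving distance at most $3Bm^2$, i.e. if and only if $(\I_A, T)$ is a yes-instance of the decision version of {\gged} for $\Pi_{\texttt{edgeless}}$. This is precisely correctness of the reduction, and \Cref{lem:cumulative_sum_bound} is the arithmetic fact it rests on. Combining the two observations, a pseudo-polynomial algorithm for {\gged} with $\Pi_{\texttt{edgeless}}$ on weighted interval graphs would yield one for {\threepartition}, contradicting its strong \NP-hardness unless $\P = \NP$.

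I do not anticipate any real obstacle here, since the whole difficulty of the section has already been absorbed into \Cref{lem:3p_iff_gged_edgeless}. The only point I would be careful to state explicitly is the magnitude bound on the constructed instance, because dropping it would downgrade the conclusion to ordinary \NP-hardness; everything else is bookkeeping about the polynomial running time of the construction.
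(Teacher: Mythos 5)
Your proposal is correct and follows essentially the same route as the paper: the reduction $(A,B,s)\mapsto(\I_A,\,T=3Bm^2)$ with the stated distance weights, correctness delegated to \Cref{lem:3p_iff_gged_edgeless} (resting on \Cref{lem:cumulative_sum_bound}), and the observation that the construction is polynomial with all numeric values polynomially bounded, which the paper handles with its closing remark before the theorem. Your explicit emphasis on the magnitude bound needed for \emph{strong} \NP-hardness is a welcome clarification but not a different argument.
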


\ifFull
Let $\Pi_{k\texttt{-deg}}$ be the class of graphs with maximum degree $k$. 
We slightly modify the reduction of \Cref{thm:ig_nphard_edgeless} and show that {\gged} is also strongly \NP-hard for satisfying $\Pi_{k-\texttt{deg}}$.

\begin{mtheoremrep}\label{thm:ig_nphard_kdeg}
    {\gged} is strongly \NP-hard on weighted interval graphs for satisfying $\Pi_{k-\texttt{deg}}$.
\end{mtheoremrep}
\begin{proof}
    We extend the reduction from {\threepartition} of \Cref{thm:ig_nphard_edgeless}.
    Given an instance $(A,B,s)$ of {\threepartition} and $\I_A = \I \cup \I^s \cup \I^b$ as defined above, we construct an instance $\J_A = \J \cup \J^s \cup \J^b \cup \J^f$ such that (i) $\J = \I$, (ii) for each $I^s_i \in \I^s$, $\J$ contains $k+1$ intervals $J_1,\ldots,J_{k+1}$ such that $\len{J_j} = \len{I^s_i}$ and $c(J_j) = c(I^s_i)$ for $1\le j \le k+1$, and (iii) $\J^b = \set{J_\ell^1,\ldots,J_\ell^{k+1}}\cup \set{J_r^1,\ldots,J_r^{k+1}}$ such that $\len{J_\ell^j} = \len{I_\ell}$ and $c(J^j_\ell) = c(I_\ell)$ for $1\le j \le k+1$, and $\len{J_r^j} = \len{I_r}$ and $c(J^j_r) = c(I_r)$ for $1\le j \le k+1$.
    We define the moving distance function analogously for $\J$ and $\J^s \cup \J^b$.
    Let $c_1,\ldots,c_m = (\ell(I^s_1)-r(I_\ell))/2,(\ell(I^s_2)-r(I^s_1))/2,\ldots, (\ell(I^s_{m-1})-r(I^s_{m-2}))/2,(\ell(I_r)-r(I^s_{m-1}))/2$.
    For each $1\le i \le m$, the subcollection $\J^f$ contains $k$ intervals $J_1,\ldots,J_k$ such that $c(J_j) = c_i$, $\len{J_j} = B$ and $d_{J_j}(x) = 12Bm^2|c(J_j)-x|$ for all $1\le j \le k$.
    The resulting collection consists of a $(k+1)$-clique for each copy of an interval $I \in \I^s\cup \I^b$ and $k$-cliques of intervals centred at $c_1,\ldots,c_m$ in the areas of size $B$.
    We set the maximum total moving distance to $T = 3Bm^2$ as above.
    Analogously to $\I_A$, the intervals in $\J$ must be placed in the areas of the $k$-cliques; otherwise, a $(k+2)$-clique is formed by intersecting with the intervals in $\J^s \cup \J^b$ or the total moving distance is greater than $3Bm^2$.
    This implies that the proof of \Cref{lem:3p_iff_gged_edgeless} also shows that $\Pi_{k-\texttt{deg}}$ can be satisfied on $\J_a$ with total moving distance of at most $3Bm^2$ if and only if $(A,B,s)$ is a yes-instance of {\threepartition}.
    Lastly, $\J_A$ is constructed in polynomial time by iterating $A$ since $k$ is bounded by $n$.
    Therefore, the theorem statement is true.
\end{proof}

We notice that \Cref{thm:ig_nphard_kdeg} can be used to show that the cases for properties $\Pi_{\texttt{acyc}}$ and $\overline{\Pi_{k\texttt{-clique}}}$ are also strongly \NP-hard.
In particular, for $\Pi_{1\texttt{-deg}}$, the intervals $I,J^1_\ell,J^2_\ell$ form a cycle in $\J_A$ for any $I \in \J$. Consequently, moving the intervals of $\J$ with total moving distance of at most $3Bm^2$ is equivalent to removing all cycles from $\J_A$ with total moving distance of at most $3Bm^2$.
Similarly, for any $I\in \J$, the intervals $I,J^1_\ell,\ldots,J^k_\ell$ form a $k$-clique in $\I_A$ for satisfying $\Pi_{(k-1)\texttt{-deg}}$. Consequently, moving the intervals of $\J$ with total moving distance of at most $3Bm^2$ is equivalent to removing all $k$-cliques from $\J_A$ with total moving distance of at most $3Bm^2$.
\fi
\ifConf
We notice that \Cref{thm:ig_nphard_edgeless} can be extended to show that satisfying $\Pi_{\texttt{acyc}}$ and $\overline{\Pi_{k\texttt{-clique}}}$ is also strongly \NP-hard.
In particular, when satisfying $\overline{\Pi_{k\texttt{-clique}}}$, we create $k-1$ overlapping copies of the intervals in $\I^s\cup\I^b$ and add $k-1$ overlapping intervals of size $B$ into the spaces between intervals of $\I^s\cup\I^b$ with the same moving distance function.
Any interval forms a $k$-clique with the $k$ copies of overlapping intervals.
Consequently, moving the intervals of $\I$ with total moving distance of at most $3Bm^2$ is equivalent to removing all $k$-cliques from $\I_A$ with at most the same distance.
Moreover, by the chordality of interval graphs, it is sufficient to satisfy $\overline{\Pi_{k\texttt{-clique}}}$ when $k =3$ to satisfy $\Pi_{\texttt{acyc}}$.
\fi
As a result, \Cref{cor:ig_nphard_acyc_and_nokclique} is obtained.

\begin{corollary}\label{cor:ig_nphard_acyc_and_nokclique}
    {\gged} is strongly \NP-hard on weighted interval graphs for satisfying $\Pi_{\texttt{acyc}}$ and $\overline{\Pi_{k\texttt{-clique}}}$.
\end{corollary}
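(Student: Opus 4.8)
The plan is to adapt the reduction from {\threepartition} used in \Cref{thm:ig_nphard_edgeless}. Fix $k \ge 3$ (the case $k=2$ is already \Cref{thm:ig_nphard_edgeless}, since $\overline{\Pi_{2\texttt{-clique}}} = \Pi_{\texttt{edgeless}}$, and $k=1$ is degenerate). Starting from the collection $\I_A = \I \cup \I^s \cup \I^b$ and the budget $T = 3Bm^2$, I would \emph{thicken} every heavy interval into a clique: replace each separator $I^s_i$ and each boundary interval $I_\ell, I_r$ by $k-1$ pairwise overlapping identical copies, all carrying the heavy distance weight $12Bm^2$; and, in addition, place inside each of the $m$ length-$B$ gaps $k-2$ pairwise overlapping identical heavy intervals of length $B$ (weight $12Bm^2$), centred at the midpoint of the gap. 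The $3m$ "item" intervals $\I$ and the budget are left unchanged. The size of this instance is polynomial in $m$ for each fixed $k \le n$, and all coordinates remain integers of magnitude polynomial in $B$ and $m$.

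Then I would argue, in this order. (i) Exactly as in the proof of \Cref{lem:3p_iff_gged_edgeless}, no heavy interval can move by more than $1/4$ under a solution of cost at most $3Bm^2$, so every clique-group and the whole gap structure is essentially frozen and heavy intervals from different groups never meet. (ii) Hence any interval of $\I$ that overlaps some copy of a separator or a boundary overlaps all $k-1$ of its copies, producing a $k$-clique; so in any valid solution every $\I$-interval lies strictly inside one gap. (iii) Each such interval overlaps all $k-2$ heavy intervals of its gap; therefore two mutually overlapping $\I$-intervals in the same gap, together with those $k-2$ heavy intervals, form a $k$-clique, forcing the $\I$-intervals inside each gap to be pairwise disjoint. (iv) The same length-and-integrality argument as in \Cref{lem:3p_iff_gged_edgeless} (using $B/4 < s(a) < B/2$, $s(a)\in\mathbb{Z}^+$, $\size{\I}=3m$, and conservation of total gap length $mB$) now forces every gap to receive exactly three $\I$-intervals whose lengths sum to exactly $B$, i.e. a partition of $A$ into triples of sum $B$. (v) Conversely, given a valid $3$-partition, placing triple $A_i$ in gap $i$ in increasing order of length realises $\overline{\Pi_{k\texttt{-clique}}}$ (the only cliques created have size $k-1$) at cost $\sum_{i=1}^m\bigl(6(i-1)B + 3a^i_1 + 2a^i_2 + a^i_3\bigr)$, which is below $3Bm^2$ by \Cref{lem:cumulative_sum_bound}. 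This proves strong \NP-hardness of {\gged} for $\overline{\Pi_{k\texttt{-clique}}}$ for every $k \ge 3$, and together with \Cref{thm:ig_nphard_edgeless} for all $k\ge 2$.

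For $\Pi_{\texttt{acyc}}$ I would invoke chordality: an interval graph is a forest if and only if it is triangle-free, i.e. if and only if it lies in $\overline{\Pi_{3\texttt{-clique}}}$, because in a chordal graph a shortest cycle of length $\ge 4$ would have a chord. Consequently the $k=3$ instance constructed above is a yes-instance for $\Pi_{\texttt{acyc}}$ precisely when it is one for $\overline{\Pi_{3\texttt{-clique}}}$, so the hardness transfers verbatim.

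The main obstacle I expect is the bookkeeping in steps (ii)--(iv): verifying that the gadget introduces no \emph{unintended} large clique and that the threshold is exactly right — one $\I$-interval per gap must be allowed to coexist with the in-gap heavy intervals (otherwise no feasible solution exists at all), while two overlapping ones must be forbidden. This is precisely why the in-gap groups have size $k-2$ while the separator/boundary groups have size $k-1$, and it requires checking that an $\I$-interval placed on the boundary between a gap and a separator cannot avoid both groups, and that the $k-2$ heavy intervals of a gap never merge with the $k-1$ copies of an adjacent separator into a larger clique — both of which follow from the $<1/4$ movement bound together with the $\Theta(B)$ separations, but must be spelled out. The small cases $k\in\{1,2\}$ and tiny $m$ also need a brief separate mention.
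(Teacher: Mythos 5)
Your proposal is correct and takes essentially the paper's approach: the paper likewise augments the {\threepartition} construction of \Cref{thm:ig_nphard_edgeless} with groups of overlapping heavy copies of the separator/boundary intervals and heavy length-$B$ groups inside the gaps (formally routed through the auxiliary degree property in \Cref{thm:ig_nphard_kdeg}), keeps the budget $3Bm^2$, and derives $\Pi_{\texttt{acyc}}$ from chordality via the $k=3$ case, exactly as you do. Your direct calibration ($k-1$ copies at separators/boundaries, $k-2$ per gap) and the integrality handling of boundary-straddling intervals are the same bookkeeping the paper's argument relies on, so the proofs coincide in substance.
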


\section{Minimising the Maximum Moving Distance for \texorpdfstring{$\Pi_{\texttt{edgeless}}$}{} on Unit Disk Graphs is Hard}\label{sec:disk_edgeless}

    \newcommand{\bdisk}[1]{B\langle #1\rangle }
    \newcommand{\ldisk}[1]{L\langle #1\rangle }
    \newcommand{\hdisk}[1]{H\langle #1\rangle }
    \newcommand{\cstate}{\D_\Phi}
    \newcommand{\istate}{\D^{(i)}_\Phi}
    \newcommand{\mstate}{\D^{(m)}_\Phi}
    \newcommand{\mstatefinal}[1]{(#1)^{(\mathit{moved})}}
    \newcommand{\ismoved}[1]{\X(#1)}
    \newcommand{\movedpos}[1]{\X_{\mathit{pos}}(#1)}
%
%
In this section, we deal with the minimax version of {\gged}, defined as follows:
%
\begin{itembox}[l]{{\ggedmm}}\label{pro:edg_disk}
    \begin{description}
        \item[Input:] An intersection graph $(G,\S)$, a graph property $\Pi$ and a real $K>0$.
        \item[Task:] Decide whether $\Pi$ can be satisfied by moving objects such that for all $S\in \S$, the moving distance of $S$ is at most $K$.
    \end{description}
\end{itembox}

We show that {\ggedmm} is strongly {\NP-hard} on unit disk graphs for satisfying $\Pi = \Pi_{\texttt{edgeless}}$ over the $L_1$ and $L_2$ distances by reducing from {\pthreesat}.
Specifically, we show a proof for \Cref{thm:edgeless_np_hard}.

\begin{restatable}{theorem}{edgelessNPHard}\label{thm:edgeless_np_hard}
    {\ggedmm} is strongly {\NP-hard} on unit disk graphs for satisfying $\Pi_{\texttt{edgeless}}$ over the $L_1$ and $L_2$ distances.
\end{restatable}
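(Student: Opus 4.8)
The plan is to reduce from \pthreesat{}, which is strongly \NP-hard, and to build a gadget arrangement of unit disks where the only way to make the intersection graph edgeless under a uniform displacement budget $K$ encodes a satisfying assignment. First I would fix $K$ (say $K = 1/2$ or some convenient constant, small relative to the disk radius $r = 1/2$) so that any disk can be pushed only a bounded amount; this forces each disk to stay inside a small ``cell''. The core idea is a \emph{variable gadget}: a tight cluster of disks arranged so that resolving all their pairwise overlaps within budget $K$ admits exactly two discrete ``states'' (left/right, or up/down), corresponding to the truth value of the variable. Typically one places two nearby disks that overlap, so at least one must move; the geometry of the surrounding obstacle disks (which themselves have no slack, because moving them would cost more than $K$ or because they are pinned by other obstacles) leaves only two feasible escape directions, and these two choices propagate consistently along a ``wire'' of disks out to the clause gadgets.

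Next I would design the \emph{wire/propagation gadget}: a chain of disks, consecutive ones overlapping, so that the displacement of the variable disk forces a cascade — each disk in the chain must move in the induced direction to vacate space for its predecessor, and the budget $K$ is exactly tight enough that no ``slack'' accumulates, so the binary signal travels faithfully. Bends and crossings need care: since \pthreesat{} gives a planar incidence structure, I can route wires without crossings, which is the reason planarity is invoked. Then the \emph{clause gadget} is a small configuration — say three wires meeting at a junction with one or two ``congested'' disks — such that the overlaps there can all be resolved within $K$ if and only if at least one of the three incoming wires arrives in its ``true'' state (i.e. at least one literal is satisfied); if all three arrive ``false'', some pair of disks is unavoidably within distance $2r$. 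I would verify this is a faithful clause gate by a finite case analysis over the $2^3$ literal patterns, and by checking the $L_1$ and $L_2$ feasibility simultaneously (choosing coordinates on a grid so that the relevant distances are the same under both metrics, or checking both explicitly — the distances $1$ and $2$ behave identically for axis-aligned placements).

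Putting it together: given a \pthreesat{} formula $\Phi$ with variables $x_1,\dots,x_n$ and clauses $C_1,\dots,C_t$, I lay out the variable gadgets, wire gadgets (with negation gadgets that flip the binary signal, realised by a half-twist in the chain), and clause gadgets on a polynomially sized integer grid following a planar embedding of the variable-clause incidence graph; all coordinates are polynomially bounded, giving \emph{strong} \NP-hardness. I then prove the two directions: (i) if $\Phi$ is satisfiable, set each variable gadget to the state of its truth value, propagate, and check every overlap is resolved with displacement $\le K$ — this is a direct construction; (ii) if a valid solution with all displacements $\le K$ exists, read off from each variable gadget which of the two states it is in (arguing the budget forbids any ``intermediate'' configuration), show the wires propagate this consistently, and conclude each clause gadget being resolvable forces at least one true literal, hence $\Phi$ is satisfiable. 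The main obstacle I expect is the \emph{rigidity analysis} of the gadgets: one must argue that with budget exactly $K$ there is no unforeseen way to resolve overlaps — e.g. moving a whole block of disks collectively, or exploiting a diagonal direction — that would let an ``unsatisfied'' clause still become edgeless. Making the obstacle disks genuinely immovable (by surrounding each with enough pinning neighbours, or by a padding argument showing any displacement of an obstacle forces a chain reaction exceeding $K$ somewhere) and nailing down the finitely many feasible local configurations is where the real work lies; handling $L_1$ and $L_2$ uniformly adds a layer of bookkeeping but no essential new difficulty once the placements are chosen on a grid.
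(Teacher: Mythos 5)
Your high-level plan (reduce from {\pthreesat}, variable gadgets with two discrete states, wires propagating the choice, clause gadgets resolvable iff a literal is true, polynomially bounded grid coordinates for strong \NP-hardness) matches the paper's strategy in outline, but there is a genuine gap at exactly the point you flag as ``where the real work lies'': making the obstacle disks effectively immovable. In the minimax model \emph{every} disk is allowed to move by $K$, so with a uniform (unweighted) budget you cannot pin a wall disk by surrounding it with neighbours --- those neighbours may each move by $K$ as well, and a whole block of obstacles can translate collectively within budget, opening gaps that let an unsatisfied clause be resolved in an unintended way. Your proposal never supplies a mechanism that rules this out, and no padding argument is sketched that would. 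The paper avoids this problem by working with \emph{weighted} unit disk graphs: each disk has a multiplicative distance weight, transition disks get weight $K/3$ (so they can travel distance $3$, enough to hop between adjacent cell-gadget holes), while $k$-heavy disks get weight $2^kK$ (so they can move only $2^{-k}$, and their ``blocked zones'' of radius $(2^k-1)/2^k$ are essentially their full disks). The entire rigidity analysis (interior holes admitting exactly one disk centre, feasible areas of truth-setter, blocking and link disks) rests on these weights; without them the claimed two-state behaviour of your variable gadget and the faithfulness of your wires are unproved, and it is not at all clear the same gadget style can be made to work unweighted.

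Two smaller discrepancies: your wires consist of initially overlapping disks, whereas in the paper the only initial intersections are the clause gadgets' intersection disks and the variable gadgets' blocking/link disks; propagation happens not because consecutive wire disks overlap but because a moved disk occupies the unique interior hole of the next cell gadget, forcing the cascade (and the signal runs from the clause toward the variable's free slot, not from the variable outward). These are design choices one could in principle vary, but they do not repair the central issue above: some device --- in the paper, the distance weights --- is needed to turn ``obstacles should not move meaningfully'' from an intention into a provable statement.
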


\ifConf
Due to space constraints, we only give an overview of the reduction. The complete reduction and proofs can be found in the full-version of the paper~[].
\fi
\subsection[Proof Overview of \texorpdfstring{\Cref{thm:edgeless_np_hard}}{}]{Proof Overview of \texorpdfstring{\Cref{thm:edgeless_np_hard}}{}: Reducing {\pthreesat} to {\ggedmm}}
\ifConf
    We show a reduction from the following \NP-complete variation of {\pthreesat}~\cite{Lichtenstein1982,Knuth1992,Tovey1984}.
    Given CNF formula $\Phi$ equipped with a planar rectilinear embedding $G_\Phi$, a set $X$ of $n$ variables, a set $C$ of $m$ clauses over $X$ such that each $c \in C$ has length $|c| \le 3$, each variable $x \in X$ appears in at most three clauses, and $\Phi = \bigwedge_{c\in C} c$, {\pthreesat} asks whether $\Phi$ is satisfiable.
\fi
\ifFull
Given a boolean formula $\Phi$ and its planar incidence graph $G_{\Phi}$, {\pthreesat}~\cite{Lichtenstein1982} asks whether $\Phi$ is satisfiable. 
An instance of {\pthreesat} can always be described using a rectilinear embedding \cite{Knuth1992}.
Moreover, this problem is \NP-complete even if the appearance of variables in clauses is restricted to at most three \cite{Tovey1984}.
We use these restrictions in the reduction and define the problem as follows.
\begin{itembox}[l]{{\pthreesat}}
    \begin{description}
        \item[Input:] A CNF formula $\Phi$ equipped with a planar rectilinear embedding $G_\Phi$. Set $X$ of $n$ variables, set $C$ of $m$ clauses over $X$ such that each $c \in C$ has length $|c| \le 3$, each variable $x \in X$ appears in at most three clauses, and $\Phi = \bigwedge_{c\in C} c$.
        \item[Task:] Decide whether $\Phi$ is satisfiable.
    \end{description}
\end{itembox}
\fi
\ifFull
\paragraph*{Reduction Overview} 
\fi
We give a simplified overview of the reduction. The idea is to emulate each component (clauses, variables and connectors) of $G_{\Phi}$ using \emph{disk gadgets} and construct a collection of disks $\D_\Phi$ equivalent to $G_{\Phi}$. 
That is, our objective is to construct a $\D_\Phi$ such that $\Phi$ is satisfiable if and only if $\D_\Phi$ is a yes-instance of {\ggedmm} for $\Pi_{\texttt{edgeless}}$.
To do this, we emulate the truth assignment using a proper movement of disks. 
To force the disk movement, we deliberately insert intersecting disks in $\D_\Phi$. 
In particular, we insert intersecting disks in clause gadgets and restrict the movement of such disks to moving a sequence of disks such that a \emph{free slot} of a variable gadget is used.
To allow the removal of the intersection, the gadgets are connected following the structure of $G_\Phi$ using consecutive disks separated by distance $K$.
%
%
For example, consider the boolean formula $\Phi$ 
and its rectilinear embedding $G_\Phi$, illustrated in \Cref{fig:reduction_overview_a}. 
\begin{figure}[!bt]
    \centering
    \includegraphics[scale=1,page=32]{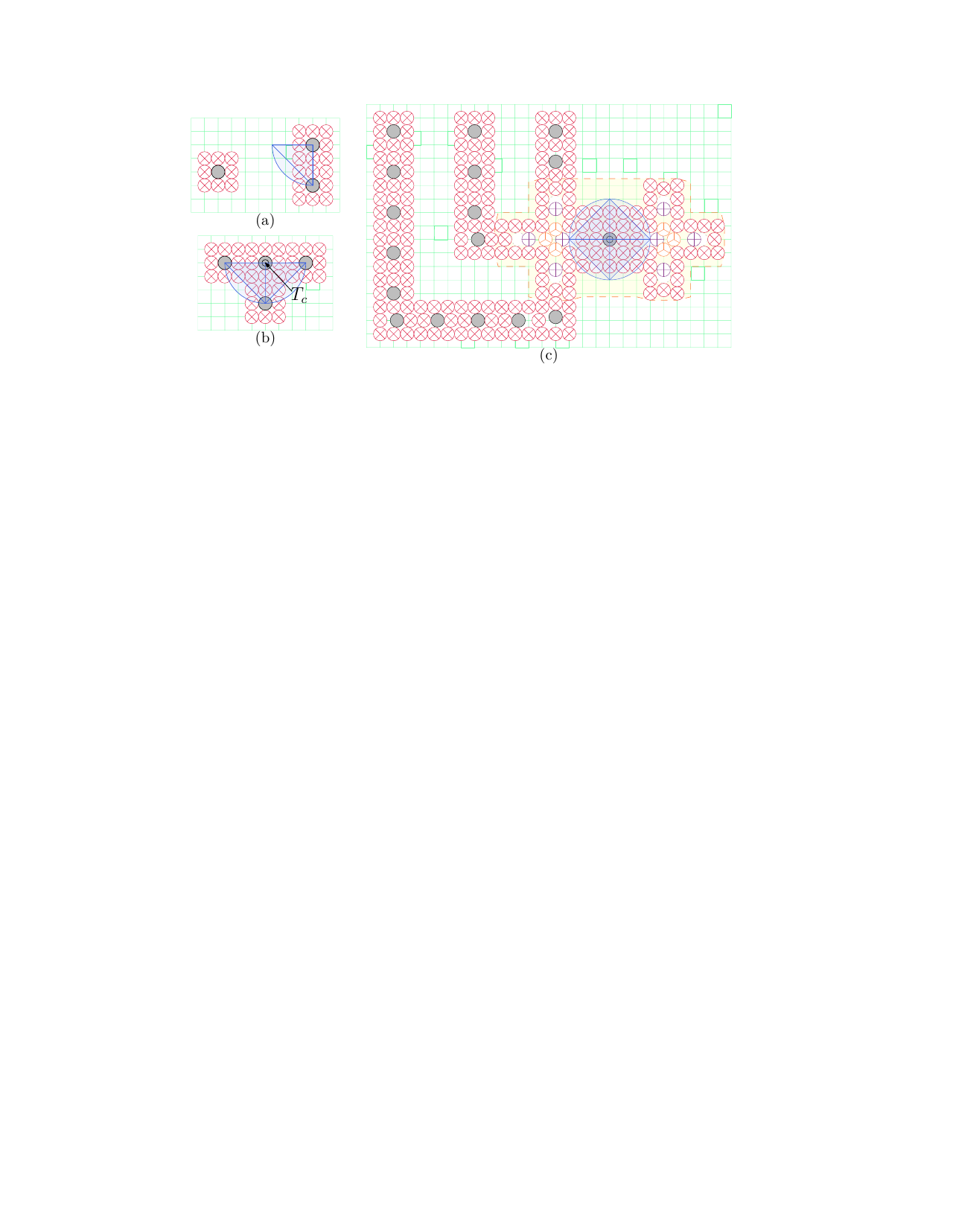}
    \caption{Reduction Overview: An arbitrary instance $\Phi$ of {\pthreesat} with its rectilinear embedding $G_\Phi$.}
    \label{fig:reduction_overview_a}
\end{figure}
A skeleton of the reduction is shown in \Cref{fig:reduction_overview_bc}(a), where representations of clause and variable gadgets are connected following $G_\Phi$. 
\begin{figure}[!tb]
    \centering
    \includegraphics[scale=1,page=33]{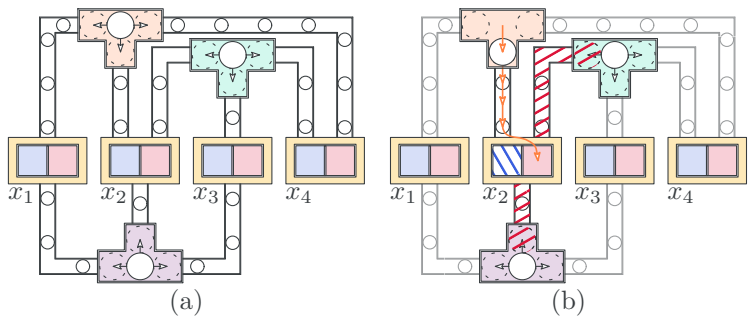}
    \caption{Reduction Overview: (a) The skeleton given by the instance $(\Phi, G_\Phi)$ of \Cref{fig:reduction_overview_a}; (b) The intersection of the gadget for $c = (x_1 \lor \overline{x_2} \lor x_4)$ is removed by moving disks in a way that a free slot of the gadget for $x_2$ is used. Since $c = \mathit{true}$ when $x_2 = \mathit{false}$, the free slots for the other two gadgets become blocked, being unable to remove their intersection using the variable gadget for $x_2$.}
    \label{fig:reduction_overview_bc}
\end{figure}
\ifFull
We remark that the gadgets in the figure are solely representations, and we shall show their detailed construction using disks later.
\fi
Let $c = (x_1 \lor \overline{x_2} \lor x_4)$ and suppose that $x_2$ is assigned to $\mathit{false}$.
This assignment implies a movement of disks that (i) removes the intersections in the clause gadget for $c$ and (ii) blocks the truth value of the variable gadget for $x_2$ (see \Cref{fig:reduction_overview_bc}(b)).
%
%
We must block the truth value of the variable gadget so that another clause gadget $c'$ does not use the free slot in the variable gadget for $x_2$ when $x_2 = \mathit{true}$.
%
Consequently, their intersections must be removed using other gadgets.
It can be shown that removing all intersections in this way is equivalent to a valid assignment of variables for which $\Phi = \mathit{true}$. 
\ifFull
\paragraph*{Reduction Overview: Moving disks} 
\fi
The disks are \emph{moved} by assigning a new location, and the distance is calculated using a function that we call \emph{moving distance function}, which is the $L_1$ or $L_2$ distance metric multiplied by a \emph{distance weight}.
We employ two types of disks classified by their distance weight, called \emph{transition disk} and \emph{heavy disk}. 
The transition disks are the disks that we aim to move, whereas heavy disks are used to restrict the movement of transition disks.
%
%
The moving distance function of a heavy disk is intuitively defined such that any significant movement that alters the construction exceeds a distance of $K$. We show that a solution that allows removing all intersections from $\D_\Phi$ with minimum maximum moving distance $K$ exclusively relies on the movement of transition disks.
We remark that, although heavy disks can move, their movement is negligible. 
Combining this condition and the above construction, it can be shown that $\Phi$ is satisfiable if and only if $\Pi_{\texttt{edgeless}}$ can be satisfied in $\cstate$ using minimum maximum moving distance $K$.

\ifConf
\begin{toappendix}
\section{Definitions and Proofs of \Cref{sec:disk_edgeless}}
\else
\begin{toappendix}
\fi
In the subsequent sections, we formally define the disks and gadgets of the reduction.
The gadgets are based on the gadgets presented in~\cite{Breu1998} and their coordinates are given in \ref{apx:coordinates}.
All coordinates are rational numbers; thus, all centres of disks can be described using a finite number of bits.
We also consider instances for which $K=1$ exclusively.
Before presenting the details in the following subsections, we make some remarks to aid in understanding the reduction.
\begin{itemize}
    \item In the following figures, we omit heavy disks that are properly inserted into the blank spaces to highlight the shape of the gadgets.
    \item When a collection of disks representing a gadget is given, it is sometimes conveniently assumed that omitted heavy disks are contained in the collection.
    \item In the following figures, the distance between consecutive transition disks is highlighted using shaded concentric circles with radius $3$ for the $L_1$ and $L_2$ distances.
\end{itemize}

\subsubsection{General Definitions}

Given an arbitrary instance $\Phi$ of {\pthreesat} with $n$ variables and $m$ clauses, we denote the collection of $\n$ unit disks produced by the reduction as $\cstate$ where $\n = f(n,m)$ is a polynomial of $n$ and $m$.
The \emph{moving distance function} of an arbitrary disk $D \in \D_\Phi$ is a function of the form $d_{D}:\mathbb{R}^2 \rightarrow \mathbb{R}$ such that $d_{D}(p) = w_D\lVert c(D),p \rVert_\alpha$ for $\alpha \in \set{1,2}$. The real $w_D >0$ is the \emph{moving weight} of $D$.
We differentiate disks according to the value of $w_D$.
The disk $D$ is called \emph{transition disk} when $w_D = K/3$.
The disk $D$ is a \emph{$k$-heavy disk} when $w_D = 2^kK$, for $k\ge 1$.
We sometimes identify the heavy disk centred at an arbitrary point $p$ as $\hdisk{p}$.
If such a heavy disk does not exist, $\hdisk{p} = \emptyset$.
Given disks $D,D' \in \cstate$, we say that \emph{$D$ is consecutive to $D'$} if $d_D(c(D')) \le K$.
Lastly, we also refer to a transition disk concentric with a heavy disk as \emph{intersection disk}.

The transition and heavy disks used in the reduction are shown in \Cref{fig:md_functions} with their corresponding moving distance functions.

\begin{figure}[!htb]
    \centering
    \includegraphics[scale=1,page=3]{media/disk_edgeless.pdf}
    \caption{Disks used in the reduction: Transition disk $D$ and $k$-heavy disks, $k \in\set{1,2,6}$, with their corresponding moving distance function.}
    \label{fig:md_functions}
\end{figure}

We formally define the movement of the disks in $\cstate$. 
Let $\X : \cstate \to \set{0,1}$ be an indicator function that tells whether a disk in $\cstate$ was moved. If $D \in \cstate$ is has not been moved, then $\ismoved{D} = 0$, otherwise $\ismoved{D} = 1$. 
Let also $\X_{\mathit{pos}}: \cstate \to \mathbb{R}\times \mathbb{R}$ be a function that returns the position of a disk.
If $\ismoved{D} = 0$, then $\movedpos{D} = c(D)$ for any disk $D \in \cstate$.
Given a disk $D \in \cstate$ such that $\ismoved{D} = 0$ and a point $p \in \mathbb{R}$, we say that \emph{$D$ is moved to $p$} to refer to setting $\ismoved{D} = 1$ and $\movedpos{D} = p$.
We then define $\X_{\mathit{pos}}$ as follows:
\[
    \movedpos{D} = \begin{cases}
        c(D),&\quad \ismoved{D} = 0,\\
        p, &\quad \ismoved{D} = 1.
    \end{cases}
\]
We also define $\istate = \set{D \in \cstate\mid \ismoved{D} = 0}$ and $\mstate = \set{D \in \cstate\mid \ismoved{D} = 1}$ as the subcollections of disks that represent unmoved and moved disks, respectively.
By the definition of movement described above, we see that $\istate$ and $\mstate$ form a partition of $\cstate$. That is, $\cstate = \istate \cup \mstate$.

Given an arbitrary disk $D \in \cstate$, the \emph{range of movement of $D$}, denoted by $\A_D \subseteq \mathbb{R}^2$, is the set of points where $D$ can be moved with minimum maximum moving distance $K$. That is, 
\[
\A_D = \begin{cases}
        \set{p \in \mathbb{R}^2\mid d_D(p) \le K}, & \quad \ismoved{D} = 0,\\
        \emptyset, & \quad \ismoved{D} = 1.
    \end{cases}
\]
A \emph{blocked zone by} $D \in \cstate$, denoted by $\B_D$, is the zone where an arbitrary disk $D'\in \istate$, $D'\neq D$, cannot be moved avoiding intersecting $D$ even after moving $D$ to a point $p\in \A_D$. In particular,
\[
    \B_D  = \begin{cases}
        \emptyset,&\: \text{$D$ is a transition disk and $\ismoved{D} = 0$} ,\\
        \{p \in \mathbb{R}^2\mid \lVert c(D),p\rVert_\alpha < \frac{2^{k}-1}{2^k}\},&\: \text{$D$ is a $k$-heavy disk and $\ismoved{D} = 0$},\\
        \{p \in \mathbb{R}^2\mid \lVert \movedpos{D},p\rVert_\alpha  < 1\},&\: \ismoved{D} = 1.\\
    \end{cases}
\]
Let $r(D)$ be the radius of $D$.
Equivalently, the blocked zone of a $k$-heavy disk $D \in \istate$ is an open disk centred at $c(D)$ with radius $r(D) =(2^{k}-1)/2^k$.
The blocked zone of a disk $D \in \mstate$ is an open disk centred at $\movedpos{D}$ with radius $r(D) = 1$.

Let $D \in \cstate$ be a disk.
A point $p$ is a \emph{feasible position of movement} or simply \emph{feasible position} of $D$ if $D$ can be moved to $p$ such that $p \in \A_D$ and intersections can be removed with minimum maximum moving distance $K$ after moving $D$.
That is,
    $p \in \A_D \setminus \cup_{D' \in \left(\cstate\setminus\set{D}\right)}\B_{D'}$.
A \emph{feasible area of movement} or simply \emph{feasible area} $\F_D$ of $D$ is the union of subsets of $\mathbb{R}^2$ such that for any $p \in \F_D$, $p$ is a feasible position of $D$ (see \Cref{fig:feasible_area}).
In particular, 
\[
        \F_D =  \A_D \setminus \cup_{D' \in \left(\cstate\setminus\set{D}\right)}\B_{D'}\:.
\]

\begin{figure}[!htb]
    \centering
    \includegraphics[scale=1,page=9]{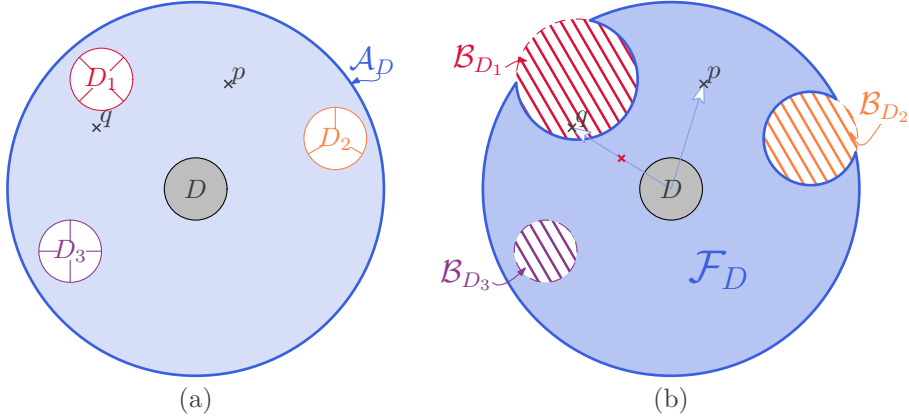}
    \caption{Range of movement and feasible area: (a) illustration of a collection of disks $\D = \set{D,D_1,D_2,D_3} \subseteq \istate$ and the range of movement $\A_D$; (b) the feasible area of $D$, $\F_D$. In particular, $\F_D = \A_D \setminus \{\B_{D_1} \cup \B_{D_2} \cup \B_{D_3}\}$ is the region marked with a bold dotted line. The disk $D$ can be moved to an arbitrary point $p$ contained in $\F_D$. On the other hand, $D$ cannot be moved to the point $q \in \A_D$ without exceeding the minimum moving distance $K$ even if $d_D(q) \le K$ holds.}
    \label{fig:feasible_area}
\end{figure}

\Cref{tab:summary_of_notation} summarises the main notation used throughout \Cref{sec:disk_edgeless}.
\begin{table}[!hbt]
    \centering
    \caption{Summary of Notation}
    \begin{tabular}{C{0.15\textwidth}p{0.79\textwidth}}
        \textbf{Symbol} & \textbf{Explanation} \\\midrule
        \multicolumn{2}{c}{Reduction instance}\\\midrule
        $\cstate$       & Collection of disks constructed by using $\Phi$. \\
        $d_D$ & Moving distance function of the disk $D$.\\
        $W_D$ & Moving weight of the disk $D$.\\
        $\ismoved{D}$ & Indicator function that returns $1$ if the disk $D \in \cstate$ was moved and $0$ otherwise.\\
        $\movedpos{D} $ & Function that returns the position of $D \in \cstate$.\\
        $\istate$ & The disks $D \in \cstate$ for which $\ismoved{D} = 0$ (unmoved disks).\\
        $\mstate$ & The disks $D \in \cstate$ for which $\ismoved{D} = 1$ (moved disks).\\
        $\mstatefinal{\D}$ & The collection of disks in which there exists a disk $D' \in \mstatefinal{D}$, $c(D') = p$ for a disk $D \in \D$ such that $\movedpos{D} = p$, given that for all $D \in \D$, $\ismoved{D} = 1$.\\
        \midrule
        \multicolumn{2}{c}{Cell Gadget, Clause Gadget and Clause Component}\\\midrule
        $\G_{(x,y)}$ & Cell gadget with its transition disk centred at $(x,y)$.\\
        $(\G)^{(\mathit{heavy})}$ & The subcollection of the heavy disks contained in $\G$.\\
        $\H_{(x,y)},\H_{c(D)}$ & Interior hole of the cell gadget $\G_{(x,y)}$ with transition disk $D$.\\
        $\G_c$ & Clause gadget of the clause $c$.\\
        $T_c$ & Intersection disk of $\G_c$.\\
        $\G^c_{i,j,k}$ & Clause component composed of clause gadget $\G_c$ and variable gadgets $\G_{x_i}$, $\G_{x_j}$ and $\G_{x_k}$.\\
        \midrule
        \multicolumn{2}{c}{Variable Gadget}\\\midrule
        $\G_x$ & Variable gadget of the variable $x$.\\
        $S_x$ & Truth setter disk of $\G_x$.\\
        $\S^x_{t,i},s^x_{t,i}$ & Truth slot of the true ($i = 1$) and false side ($i = 2$) of $\G_x$ and its centre.\\
        $\bdisk{\S^x_{t,i}}$ &  Blocking disk of the true ($i = 1$) and false side ($i = 2$).\\
        $\S^x_{c,i}, s^x_{c,i}$ & Free space to position the disk coming from the clause gadget of $c_i$ and its centre.\\
        $D_{c_i}$ & Transition disk coming from clause gadget $c_i$.\\
        $\bdisk{\S^x_{c,i}}$ & Blocking disk for $c_i$.\\
        $\ldisk{\S^x_{c,i}}$ & Link disk for $c_i$.\\
        \midrule
        \multicolumn{2}{c}{General}\\\midrule
        $r(D)$ & Radius of disk $D \subseteq \mathbb{R}^2$.\\
        $\hdisk{p}$ & Heavy disk centred at $p \in \mathbb{R}^2$.\\
        $\A_D$ & Range of movement of disk $D \in \cstate$.\\
        $\B_D$ & Blocked zone by disk $D \in \cstate$.\\
        $\F_D$ & Feasible area of disk $D \in \cstate$.\\
        \bottomrule
    \end{tabular}
    \label{tab:summary_of_notation}
\end{table}


Given a subcollection $\D \subseteq \cstate$ such that for all $D \in \D, \ismoved{D} = 1$ (that is, all disks in $\D$ were moved), we denote by $\mstatefinal{\D}$ the collection of disks in which there exists a disk $D' \in \mstatefinal{\D}$ such that $c(D') = \movedpos{D}$ for a disk $D \in \D$.
If for all $D \in \cstate$, $\ismoved{D} = 1$, $d_D(\movedpos{D}) \le K$ and $\mstatefinal{\cstate}$ is contained in $\Pi$, then $(\cstate,\Pi)$ is a yes instance of {\ggedmm}.
In the reduction, we construct $\cstate$ using $\Phi$ and $G_\Phi$ and show that there exists a $\mstatefinal{\cstate}$ such that $\mstatefinal{\cstate}$ is in $\Pi_{\texttt{edgeless}}$ if and only if $\Phi$ is satisfiable.

\subsubsection{Cell Gadgets}

A \emph{cell gadget} $\G_{(x,y)}\subseteq \cstate$ consists of a transition disk centred at $(x,y)$ surrounded by $6$-heavy disks centred at points $\{(x+i,y+j)\mid i,j\in \{-1,0,1\}\}\setminus\{(x,y)\}$ (see \Cref{fig:cell_gadget}). We denote the subcollection that contains these heavy disks of $\G_{(x,y)}$ by $\G_{(x,y)}^{(\mathit{heavy})}$.

\begin{figure}[!b]
    \centering
    \includegraphics[scale=1,page=26]{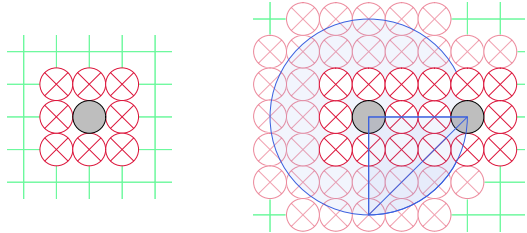}
    \caption{Cell gadget: An arbitrary cell gadget (left); two connected cell gadgets such that their transition disks are consecutive. The faded heavy disks are the omitted heavy disks to show the shape of gadgets.}
    \label{fig:cell_gadget}
\end{figure}

Given a collection of disks $\D$, we denote the convex hull of the set $\set{c(D)\mid D\in \D}$ by $\C(\D)$.
Let $\D$ be a collection of $k$-heavy disks. If $\C(\D)$ (i) is a $|\D|$-gon and (ii) $\B_{D}\cap \B_{D'} \neq \emptyset$ for any pair of disks $D,D' \in \D$ such that $c(D)$ and $c(D')$ share an edge in $\C(\D)$, then the set $\cup_{D\in\D} \B_D $ is called \emph{blocked enclosure}.
For instance, the union of blocked zones of heavy disks in a cell gadget forms a blocked enclosure.
If the blocked zones of $\D$ form a blocked enclosure, the \emph{interior hole} of $\D$ is the region given by $\C(\D) \setminus \cup_{D\in\D} \B_D$.

Given a cell gadget $\G_{(x,y)}$ with transition disk $D$, we denote its interior hole $\C(\G_{(x,y)}^{(\mathit{heavy})})\setminus (\cup_{D' \in \G_{(x,y)}^{(\mathit{heavy})}} \B_{D'})$ by $\H_{(x,y)}$ or $\H_{c(D)}$ indistinctly. \Cref{fig:interior_holes} shows an example of interior holes for heavy disks of an arbitrary cell gadget.

\begin{mlemmarep}\label{lem:no_holes_hd_squares}
    If $\D = \set{H_1,\ldots,H_4} \subseteq \istate$ is a collection of $6$-heavy disks such that $c(H_1) = (x,y)$, $c(H_2) = (x-1,y)$, $c(H_3) = (x,y-1)$ and $c(H_4) = (x-1,y-1)$, then the interior hole of $\D$ is empty.
\end{mlemmarep}
\begin{proof}
    Notice that the minimax centre of $c(H_1),\ldots,c(H_4)$ is given by the centre of the smallest circle that contains them. 
    This point is $c = (x-1/2,y-1/2)$.
    Moreover, the point $c$ is the farthest point from any $c(H_i)$ for $i\in \set{1,\ldots,4}$.
    We have $\lVert c, c(H_i)\rVert_2 = \sqrt{2}/2 < (2^6 - 1 )/2^6 = 63/64$ for any $i \in \set{1,\ldots,4}$.
    Therefore, any other point enclosed by $c(H_1),\ldots,c(H_4)$ is contained in at least one blocked zone, implying that the interior hole is empty.
\end{proof}

\begin{mlemmarep}\label{lem:holes_in_cell_gadgets}
    The interior hole $\H_{(x,y)}$ of an arbitrary cell gadget $\G_{(x,y)}$ with transition disk $D$ is a non-empty set.
    Moreover, $c(D) \in \H_{(x,y)}$.
\end{mlemmarep}
\begin{proof}
    Let $H_1,\ldots,H_8 \in \G_{(x,y)}^{(\mathit{heavy})}$ be the eight heavy disks of $\G_{(x,y)}$ such that the centres $(c(H_1),\ldots,c(H_8))$ are equal to $ ((x-1,y+1),(x,y+1),(x+1,y+1),(x-1,y),(x+1,y),(x-1,y-1),(x,y-1),(x+1,y-1)$.
    By definition, $\lVert (x,y), c(H_i)\rVert_2 \ge 1 > (2^6-1)/ 2^6 = 63/64$ holds for $i \in \set{1,\ldots,8}$.
    Therefore $c(D) \in \H_{(x,y)}$ and $\H_{(x,y)} \neq \emptyset$.
\end{proof}

\begin{observation}\label{obs:convex_polygon_one_disk}
    Let $\S$ be a convex polygon such that $\diam{\S} < 1$. The region delimited by $\S$ admits exactly one disk centred within it.
\end{observation}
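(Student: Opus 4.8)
The plan is to reduce the claim to the elementary fact that two open unit disks (each of radius $1/2$) intersect if and only if the Euclidean distance between their centres is strictly less than $1$, combined with the observation that $\diam{\S}$ bounds the distance between any two points of the convex region delimited by $\S$. So the whole argument is just: ``diameter below $1$'' $=$ ``every pair of candidate centres is closer than the sum of the two radii''.

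First I would dispose of existence: the region delimited by $\S$ is a non-empty convex set (it contains the interior of the polygon), so a single unit disk can be placed with its centre at any interior point, and one disk on its own is vacuously non-intersecting; hence the region admits at least one disk centred within it. Next I would rule out two: suppose $D_1$ and $D_2$ are distinct disks whose centres $c(D_1)$ and $c(D_2)$ both lie in the region delimited by $\S$. Since $\S$ is convex, we have $\lVert c(D_1), c(D_2)\rVert_2 \le \diam{\S} < 1$, and because each disk has radius $1/2$ this strict inequality forces $D_1 \cap D_2 \neq \emptyset$ (for instance, the midpoint of $c(D_1)$ and $c(D_2)$ lies within distance $1/2$ of each centre, hence in both open disks). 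Thus any two disks centred in the region overlap, so a pairwise-disjoint placement uses at most one such disk. Combining the two directions gives that exactly one disk can be centred within the region delimited by $\S$.

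I do not expect a genuine obstacle: the statement is an immediate corollary of the distance-based intersection criterion for unit disks together with the definition of the diameter of a convex polygon as the largest pairwise distance among its points. The only points deserving a word of care are the conventions fixed earlier in the paper — disks are open, so the relevant threshold is the strict inequality $< 1$ rather than $\le 1$, and $\diam{\S}$ is the Euclidean diameter, which is precisely the quantity to compare against the sum $1/2 + 1/2$ of the two radii; with these fixed, no further estimates are needed.
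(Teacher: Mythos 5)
Your proof is correct and follows essentially the same approach as the paper: the paper places a disk at one endpoint of the farthest pair of points and notes that its blocked zone (an open disk of radius $1$) covers all of $\S$ since $\diam{\S} < 1$, which is exactly your criterion that two open radius-$1/2$ disks with centres at distance less than $1$ must intersect. The only differences are cosmetic (blocked-zone language versus the direct intersection criterion, plus your explicit remark on existence), so nothing further is needed.
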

\begin{proof}
    Let $p_1,p_2$ be the farthest pair of points in $\S$.
    Without loss of generality, assume that an arbitrary disk $D$ is moved to $p_1$.
    It gives $r(\B_D) = 1$ and thus $p_2 \in \B_D$ since $\diam{\S} < 1$.
    Since $p_2$ is the farthest point from $p_1$, all other points in $\S$ are also blocked by $\B_D$.
    Therefore, $\S$ admits exactly one disk centred within it.
\end{proof}

\begin{mlemmarep}\label{lem:holes_one_disk}
    The interior hole $\H_{(x,y)}$ of an arbitrary cell gadget $\G_{(x,y)}$ admits exactly one disk centred within it.
\end{mlemmarep}
\begin{proof}
    Let $\S$ be the square formed by points $(p_1,\ldots,p_4) = ((x-1/64,y+1/64),(x+1/64,y+1/64),(x-1/64,y-1/64),(x+1/64,y-1/64))$.
    The square $\S$ is a rectangle such that $\lVert p_1,p_4\rVert_2,\lVert p_2,p_3\rVert_2<1$. Consequently, $\S$ admits exactly one disk centred within it by \Cref{obs:convex_polygon_one_disk}.
    Moreover, $\H_{(x,y)} \subseteq \S \setminus \cup_{D' \in \G_{(x,y)}^{(\mathit{heavy})}} \B_{D'}$.
    Therefore, the lemma statement is true.
\end{proof}

\begin{figure}[!b]
    \centering
    \includegraphics[scale=1,page=10]{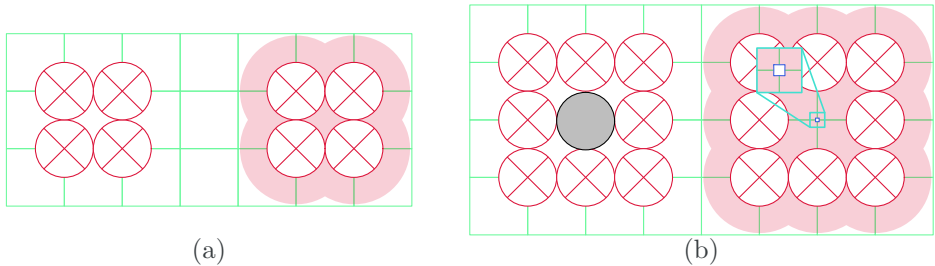}
    \caption{Interior holes: (a) A square of four disks with no interior hole (\Cref{lem:no_holes_hd_squares}); (b) union of blocked zones of heavy disks of a cell gadget (\Cref{lem:holes_in_cell_gadgets}), where the small region at the centre of the figure not covered by blocked zones is the interior hole.}
    \label{fig:interior_holes}
\end{figure}

We now show that the feasible area of a transition disk is restricted to the areas used by consecutive transition disks. 

\begin{mlemmarep}\label{lem:cg_two_subsets}
    Let $\G_{(x,y)}$ and $\G_{(x',y')}$ be two cell gadgets with transition disks $D$ and $D'$, respectively.
    If $D$ and $D'$ are consecutive, then the feasible area $\F_{D}$ is equal to $(A_D \cap \H_{(x,y)}) \cup (A_{D} \cap \H_{(x',y')})$.
    Moreover, $A_D \cap \H_{(x,y)}$ and $A_{D'} \cap \H_{(x',y')}$ are non-empty and disjoint.
\end{mlemmarep}
\begin{proof}
    First, $\H_{(x,y)}$ and $\H_{(x',y')}$ are disjoint, so their intersection with $\A_D$ is also disjoint.
    By \Cref{lem:holes_in_cell_gadgets}, $\H_{(x,y)} \neq \H_{(x',y')} \neq \emptyset$ holds.
    By definition of $\G_{(x,y)}$, $c(D) \in \H_{(x,y)}$ and $\B_D = \emptyset$ hold, so $\A_D \cap \H_{(x,y)} \neq \emptyset$ also holds.
    Since $D$ is consecutive to $D'$, $d_D(c(D')) \le K$ and $c(D') $ is contained in $ \A_D$.
    It follows that $\A_D \cap \H_{(x',y')} \neq \emptyset$.

    We prove that $\F_{D}$ is equal to $(A_D \cap \H_{(x,y)}) \cup (A_{D} \cap \H_{(x',y')})$.
    Recall that $\H_{(x,y)} = \C(\G_{(x,y)}^{(\mathit{heavy})}) \setminus \cup_{O \in \G_{(x,y)}^{(\mathit{heavy})}} \B_{O}$ and $\H_{(x',y')} = \C(\G_{(x',y')}^{(\mathit{heavy})}) \setminus \cup_{O \in \G_{(x',y')}^{(\mathit{heavy})}} \B_{O}$. By \Cref{lem:holes_in_cell_gadgets} and the definition of $\cstate$, $\H_{(x,y)}$ and $\H_{(x',y')}$ do not intersect with any blocked zone. Hence we can define
    \begin{align*}
        \H_{(x,y)} & = \C(\G_{(x,y)}^{(\mathit{heavy})}) \setminus \cup_{O \in \cstate} \B_{O} = \C(\G_{(x,y)}^{(\mathit{heavy})}) \cap (\cup_{O \in \cstate} \B_{O})^c\\
        \H_{(x',y')} & = \C(\G_{(x',y')}^{(\mathit{heavy})}) \setminus \cup_{O \in \cstate} \B_{O} = \C(\G_{(x',y')}^{(\mathit{heavy})}) \cap (\cup_{O \in \cstate} \B_{O})^c \:.
    \end{align*}
    
    On the other hand, $\F_D = \A_D \setminus \cup_{D' \in \cstate} \B_{D'} = \A_D \cap (\cup_{O \in \cstate} \B_{O})^c$. We now have that
    \begin{align*}
        (A_D \cap \H_{(x,y)}) \cup (A_D \cap \H_{(x',y')}) & = \left(\A_D \cap \left(\C(\G_{(x,y)}^{(\mathit{heavy})}) \cap (\cup_{O \in \cstate} \B_{O})^c\right)\right)\\
        &\phantom{=}\cup \left(\A_D \cap \left(\C(\G_{(x',y')}^{(\mathit{heavy})}) \cap (\cup_{O \in \cstate} \B_{O})^c\right)\right) \\
        & = \left(\A_D \cap \C(\G_{(x,y)}^{(\mathit{heavy})}) \cap (\cup_{O \in \cstate} \B_{O})^c\right)\\
        &\phantom{=} \cup \left(\A_D \cap \C(\G_{(x',y')}^{(\mathit{heavy})}) \cap (\cup_{O \in \cstate} \B_{O})^c\right)\\
        & =(\F_D \cap \C(\G_{(x,y)}^{(\mathit{heavy})}))\cup (\F_D \cap \C(\G_{(x',y')}^{(\mathit{heavy})})).
    \end{align*}
    It holds that $\C(\G_{(x,y)}^{(\mathit{heavy})})) \subset \A_D$, thus
    \begin{align*}
        (A_D \cap \H_{(x,y)}) \cup (A_D \cap \H_{(x',y')}) & =\F_D \cup (\F_D \cap \C(\G_{(x',y')}^{(\mathit{heavy})}))\\
        & = \F_D.
    \end{align*}
    This concludes the proof.
\end{proof}

\subsubsection{Clause Gadgets}

The \emph{clause gadget} $\G_c \subseteq \cstate$ for an arbitrary clause $c$ consists of an intersection disk $T_c$ centred at an arbitrary point $(x,y)$ surrounded by $6$-heavy disks centred at points $\{(x+i,y+j)\mid i,j\in \{-1,0,1\}\}\setminus\{(x,y)\}$.
It also contains three cell gadgets $\G_{(x-3,y)}$, $\G_{(x,y-3)}$ and $\G_{(x+3,y)}$ representing the three literals of $c$ (see \Cref{fig:clause_gadget}).
We interpret the movement of $T_c$ to one of the arms as the assignment of truth value to the clause by the literal corresponding to the arm. 

\begin{figure}[!b]
    \centering
    \includegraphics[scale=1,page=27]{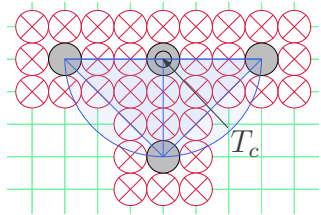}
    \caption{Clause Gadget: Clause gadget for an arbitrary clause $c$.}
    \label{fig:clause_gadget}
\end{figure}


\Cref{lem:cg_two_subsets} implies that if a transition disk is moved from its position (that is, it is moved outside $\H_{(x,y)}$), then its new position must be in $\H_{(x',y')}$ for an arbitrary $(x',y')\neq (x,y)$.
We now prove a similar property of clause gadgets.

\begin{mlemmarep}\label{lem:tc_three_subsets}
    Let $\G_{c}$ be a clause gadget such that $c(T_c) = (x,y)$.
    The feasible area $\F_{T_c}$ is equal to $(\A_{T_c} \cap \H_{(x-3,y)}) \cup (\A_{T_c} \cap \H_{(x,y-3)}) \cup (\A_{T_c} \cap \H_{(x+3,y)})$.
    Moreover, $\A_{T_c} \cap \H_{(x-3,y)}$, $\A_{T_c} \cap \H_{(x,y-3)}$ and $\A_{T_c} \cap \H_{(x+3,y)}$ are non-empty and disjoint between each other.
\end{mlemmarep}
\begin{proof}
    First, $\H_{(x-3,y)}$, $\H_{(x,y-3)}$ and $\H_{(x+3,y)}$ are disjoint between each other, so their intersections with $\A_{T_c}$ must also be disjoint.
    By \Cref{lem:holes_in_cell_gadgets}, $\H_{(x-3,y)} \neq \H_{(x,y-3)} \neq \H_{(x+3,y)} \neq \emptyset$.
    Moreover, $d_{T_c}((x-3,y)),d_{T_c}((x,y-3)),d_{T_c}((x+3,y)) \le K$ holds, so $\A_{T_c} \cap \H_{(x-3,y)}$, $\A_{T_c} \cap \H_{(x,y-3)}$ and $\A_{T_c} \cap \H_{(x+3,y)}$ are non-empty.

    We prove that $\F_{T_c}$ is equal to $(\A_{T_c} \cap \H_{(x-3,y)}) \cup (\A_{T_c} \cap \H_{(x,y-3)}) \cup (\A_{T_c} \cap \H_{(x+3,y)})$.
    Recall that we can define $\H_{(x-3,y)}$, $\H_{(x,y-3)}$ and $\H_{(x+3,y)}$ as follows:
    \begin{align*}
        \H_{(x-3,y)} & = \C(\G_{(x-3,y)}^{(\mathit{heavy})}) \setminus \cup_{D' \in \cstate} \B_{D'} = \C(\G_{(x-3,y)}^{(\mathit{heavy})}) \cap (\cup_{D' \in \cstate} \B_{D'})^c\\
        \H_{(x,y-3)} & = \C(\G_{(x,y-3)}^{(\mathit{heavy})}) \setminus \cup_{D' \in \cstate} \B_{D'} = \C(\G_{(x,y-3)}^{(\mathit{heavy})}) \cap (\cup_{D' \in \cstate} \B_{D'})^c\\
        \H_{(x+3,y)} & = \C(\G_{(x+3,y)}^{(\mathit{heavy})}) \setminus \cup_{D' \in \cstate} \B_{D'} = \C(\G_{(x+3,y)}^{(\mathit{heavy})}) \cap (\cup_{D' \in \cstate} \B_{D'})^c.
    \end{align*}
    On the other hand, $\F_{T_c} = \A_{T_c} \setminus \cup_{D' \in \cstate} \B_{D'} = \A_{T_c} \cap (\cup_{D' \in \cstate} \B_{D'})^c$. Let $\H = (\H_{(x-3,y)} \cup \H_{(x,y-3)} \cup \H_{(x+3,y)})$. Then, 
    \begin{align*}
        (\A_{T_c} \cap \H_{(x-3,y)}) &\cup (\A_{T_c} \cap \H_{(x,y-3)}) \cup (\A_{T_c} \cap \H_{(x+3,y)}) = \\
        &= \A_{T_c} \cap (\H_{(x-3,y)} \cup \H_{(x,y-3)} \cup \H_{(x+3,y)})\\
        & = \A_{T_c} \cap \H\\
        & = \A_{T_c} \cap \H \cap (\cup_{D' \in \cstate} \B_{D'})^c\\
        & = \F_{T_c} \cap \H.
    \end{align*}
    For all $x \in \A_{T_c} \cap \H$, it holds that $x \in \F_{T_c}$ by the above equation. Hence, $\A_{T_c} \cap \H \subseteq F_{T_c}$.
    In contrast, for all $x \in \F_{T_c}$, we have $x \in \A_{T_c}$ since $\F_{T_c} \subseteq \A_{T_c}$.
    Moreover, it follows from \Cref{lem:no_holes_hd_squares} and the definition of clause gadgets that there exists no point $p\in \A_D$ such that $p \notin \cup_{D' \in \cstate} \B_{D'}$ and $p \notin \H$.
    Thus $\F_{T_c} \subseteq \H_{(x-3,y)}\cup \H_{(x,y-3)}\cup \H_{(x+3,y)}$ holds, which implies that $x \in \H$.
    Consequently, $\F_{T_c} \subseteq A_{T_c} \cap \H$ holds.
    Therefore $\F_{T_c} = \A_{T_c} \cap \H = (\A_{T_c} \cap \H_{(x-3,y)}) \cup (\A_{T_c} \cap \H_{(x,y-3)}) \cup (\A_{T_c} \cap \H_{(x+3,y)})$ holds.
\end{proof}

\subsubsection{Variable Gadgets}

Lastly, the \emph{variable gadget} $\G_x \subseteq \cstate$ for an arbitrary variable $x$ is depicted in \Cref{fig:variable_gadget}. 
The variable gadget consists of two truth sides (called true and false sides) with three free slots and an intersection disk. 
There exists space for moving the truth setter disk in both truth sides.

\begin{figure}[!htb]
    \centering
    \includegraphics[scale=1,page=28]{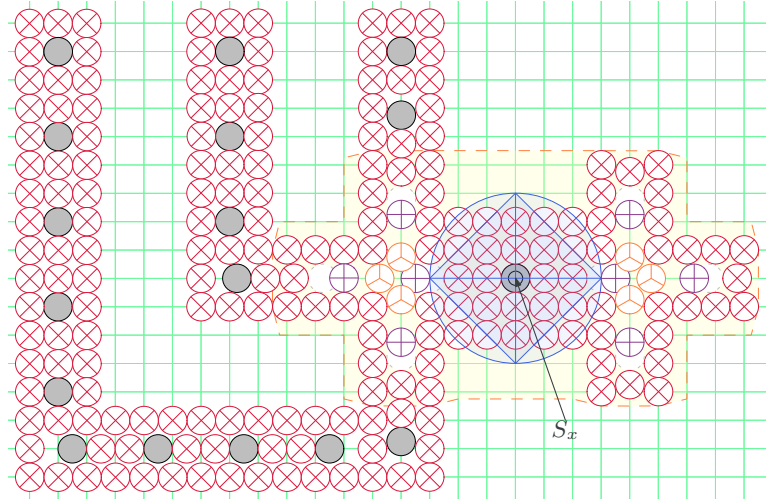}
    \caption{Variable gadget: Variable gadget for variable $S_x$ with three arms connected to the left side. The dashed region is the central part of the variable gadget. Shaded concentric circles are circles of radius $K$ for the $L_1$ and $L_2$ distances highlighting the distance between consecutive transition disks.}
    \label{fig:variable_gadget}
\end{figure}

The collection of disks enclosed by the yellow dashed region is the \emph{central part} of the gadget. 
\Cref{fig:variable_names} shows the central part of the gadget for an arbitrary variable $x$.
The disk ${S}_x$ is called \emph{truth setter disk}. 

\begin{figure}[!htb]
    \centering
    \includegraphics[scale=1,page=2]{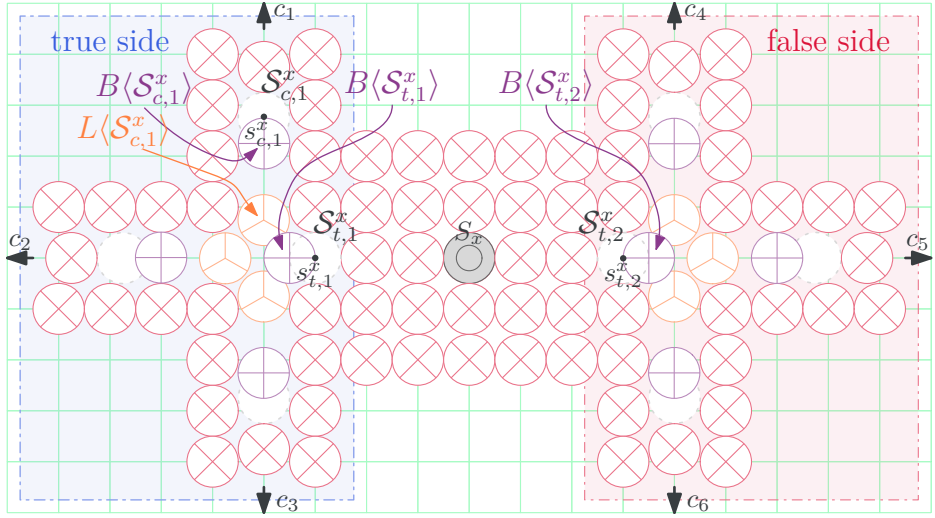}
    \caption{Central part of the gadget for an arbitrary variable $x$.}
    \label{fig:variable_names}
\end{figure}

For convenience, we show the definitions assuming that $c(S_x) = (0,0)$.
The central part consists of consecutive $6$-heavy disks surrounding $S_x$ and two \emph{truth sides} called \emph{true side} and \emph{false side}.
As we show in \Cref{lem:block_disk_restricted}, $S_x$ can be moved to free spaces $\S_{t,1}^x$ or $\S_{t,2}^x$ called \emph{truth slots}, representing the false or true value given $x$, respectively.
The centres of $\S_{t,1}^x$ and $\S_{t,2}^x$ are denoted by $s^x_{t,1} = (-3,0)$ and $s^x_{t,2} = (3,0)$, respectively.
The \emph{blocking disk of a truth side} is a $1$-heavy disk that possibly blocks $\S_{t,i}^x$ and is denoted by $\bdisk{\S^x_{t,i}}$, where $i = 1$ for the true side and $i=2$ for the false side.
Each truth side can be connected to at most three clauses, denoted in the figure by $c_1,c_2,c_3$ for the true side and $c_4,c_5,c_6$ for the false side.
For $i \in \set{1,\ldots,6}$, clause $c_i$ contains a free space $\S_{c,i}^x$ to position the transition disk moved from clause gadget $c_i$, with centre $s_{c,i}^x$.
We denote the transition disk moved from $c_i$ by $D_{c_i}$.
\Cref{fig:move_of_disks} shows the transition disks $D_{c_1},D_{c_2},D_{c_3}$ for $c_1,c_2,c_3$, respectively.

\begin{figure}[!hbt]
    \centering
    \includegraphics[scale=1,page=5]{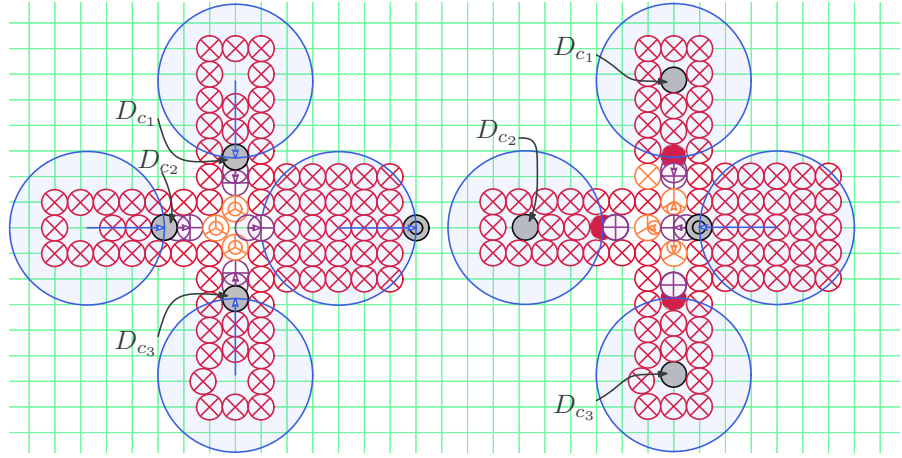}
    \caption{Left: The truth setter disk is blocking the right side, letting transition disks $D_{c_1},D_{c_2},D_{c_3}$ to be moved to free slots. Right: The truth setter disk is blocking free slots to move $D_{c_1},D_{c_2},D_{c_3}$ into the gadget.}
    \label{fig:move_of_disks}
\end{figure}

The \emph{blocking disk for $c_i$} is a $1$-heavy disk that possibly blocks $\S_{c,i}^x$ and is denoted by $\bdisk{\S^x_{c,i}}$.
The \emph{link disk for $c_i$} is a $2$-heavy disk moved close to $\bdisk{\S^x_{c,i}}$ depending on the position of the blocking disk of the truth side.
The link disk is denoted by $\ldisk{\S^x_{c,i}}$.

Each variable is connected to at most three clauses by the arms depicted in \Cref{fig:variable_gadget}connected to $c_1$, $c_2$ and $c_3$.
The central part of the gadget can be mirrored horizontally and arms can be mirrored vertically and horizontally.
We make some observations used in the subsequent lemmas.

\begin{observation}[Vertical condition of blocked zones]\label{obs:vertical_condition}
    Let $H_1$ and $H_2$ be two $k$-heavy disks centred at $(x,y)$ and $(x+1,y)$, respectively. A point $(x',y')$ such that the inequality $y - \sqrt{((2^k-1)/2^k)^2 - 1/4} \le y' \le y + \sqrt{((2^k-1)/2^k)^2 - 1/4}$ holds is in $\B_{H_1} \cup \B_{H_2}$ for any $x-1/2\le x' \le x+3/2 $.
\end{observation}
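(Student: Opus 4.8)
The plan is to read this observation as the elementary geometric fact it is. Since we only consider instances with $K=1$, the blocked zone of an unmoved $k$-heavy disk is the \emph{open} $L_\alpha$-ball of radius $\rho := (2^{k}-1)/2^{k}$ about its centre, so $\B_{H_1}$ and $\B_{H_2}$ are the open $L_\alpha$-balls of radius $\rho$ centred at $(x,y)$ and $(x+1,y)$ respectively; since $\rho \ge 1/2$ for every $k \ge 1$, the quantity $\mu := \sqrt{\rho^2 - 1/4}$ is well defined. I want to show that every point $(x',y')$ with $x - 1/2 \le x' \le x + 3/2$ and $|y'-y| \le \mu$ lies in $\B_{H_1} \cup \B_{H_2}$. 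The first step is to record the reflection in the vertical line $x' = x + 1/2$: it maps the region cut out by the hypotheses onto itself and interchanges $H_1$ with $H_2$, so it suffices to handle the half $x - 1/2 \le x' \le x + 1/2$ and establish membership in $\B_{H_1}$.

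On that half we have $|x'-x| \le 1/2$, and for the $L_2$ distance the verification collapses to the single estimate
\[
  \lVert (x',y'),(x,y)\rVert_2^{2} = (x'-x)^2 + (y'-y)^2 \le 1/4 + \mu^2 = 1/4 + (\rho^2 - 1/4) = \rho^2 ,
\]
so $(x',y')$ lies in the closed ball of radius $\rho$, hence in the open ball $\B_{H_1}$ unless $|x'-x| = 1/2$ and $|y'-y| = \mu$ hold simultaneously. The $L_1$ distance is treated by the identical split, now bounding $\lVert (x',y'),(x,y)\rVert_1 = |x'-x| + |y'-y|$ against the $L_1$-ball of radius $\rho$. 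The only points at which the inclusion fails to be strict are the corner points $(x \pm 1/2,\, y \pm \mu)$, each of which sits on the bounding circles of both $H_1$ and $H_2$; being a finite set, they are immaterial to every subsequent use of the observation (equivalently, the stated inequalities may be read strictly).

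I do not expect a real obstacle here. The only step that takes any thought is the vertical seam $x' = x + 1/2$, where neither ball has any slack: there $(x',y')$ is blocked exactly when $(y'-y)^2 < \rho^2 - 1/4$, and it is precisely this marginal case that forces the constant $\mu = \sqrt{\rho^2 - 1/4}$ appearing in the statement. Everywhere else the containment has room to spare, so beyond the one-line estimate above — carried out once for each metric — there is nothing further to check.
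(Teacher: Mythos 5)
Your $L_2$ argument is correct and is essentially a cleaner version of the paper's proof. The paper locates the intersection point of the two boundary circles at $(x+1/2,\,y+\sqrt{\rho^2-1/4})$ (with $\rho=(2^k-1)/2^k$) via the height of the isosceles triangle with base $1$ and sides $\rho$, notes the analogous point below, and then argues that the horizontal segment at that height is covered by the two zones; your reflection about the line $x'=x+1/2$ followed by the one-line bound $(x'-x)^2+(y'-y)^2\le 1/4+\mu^2=\rho^2$ reaches the same conclusion more directly, covers the interior of the band explicitly rather than only its top and bottom edges, and is in fact more careful than the paper about the open-ball boundary points (the paper asserts, e.g., $(x-1/2,y+\mu)\in\B_{H_1}$ even though that point is at distance exactly $\rho$ from $(x,y)$, so strictly it lies outside the open blocked zone; your remark that the exceptional set is a few boundary points, harmless for all later uses, is the right way to read the statement).

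The one genuine flaw is the sentence claiming the $L_1$ case follows by "the identical split". With $\mu=\sqrt{\rho^2-1/4}$ the corresponding estimate gives $|x'-x|+|y'-y|\le 1/2+\mu$, and $1/2+\mu\le\rho$ holds only when $\rho\le 1/2$, i.e.\ only for $k=1$. For every $k\ge 2$ the bound exceeds $\rho$, and the statement itself fails for $L_1$ blocked zones with this constant: for $k=6$ one has $\rho=63/64$ and $\mu\approx 0.848$, and the point $(x+1/2,\,y+0.8)$ lies in the band but has $L_1$ distance $1.3>\rho$ from both $(x,y)$ and $(x+1,y)$, hence lies in neither $L_1$ zone. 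So no variant of your estimate can close the $L_1$ case; the observation with the constant $\sqrt{\rho^2-1/4}$ is intrinsically a Euclidean statement. To be fair, the paper's own proof is also purely Euclidean (the isosceles-triangle computation) and silently ignores the $\alpha=1$ reading of the blocked-zone definition, so your $L_2$ argument matches all the content the paper actually proves; but you should delete or correct the claim that the identical one-line bound handles $L_1$.
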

\begin{proof}
    We show that $\B_{H_1} $ and $ \B_{H_2}$ intersect at $(x+1/2,y + \sqrt{((2^k-1)/2^k)^2 - 1/4})$. 
    The proof is analogous for the lower bound.
    Let $p = (x'',y'')$ be the point where $\B_{H_1} $ and $ \B_{H_2}$ intersect.
    Straightforwardly, $x'' = x+1/2$.
    The triangle formed by points $p$, $(x,y)$ and $(x,y+1)$ is an isosceles triangle of base length $1$ and sides of length $2^k-1 /2^k$.
    Moreover, $y'' = y + h$ where $h$ is the height of the triangle.
    The height is equal to $h = \sqrt{((2^k-1)/2^k)^2 - 1/4}$.
    Hence $p = (x+1/2,y + \sqrt{((2^k-1)/2^k)^2 - 1/4})$.

    Furthermore, notice that $(x-1/2,y'') \in \B_{H_1}$ and $(x+3/2,y'') \in \B_{H_2}$.
    Therefore, for any $x-1/2\le x' \le x+3/2 $, $(x',y + \sqrt{((2^k-1)/2^k)^2 - 1/4}) \in \B_{H_1} \cup \B_{H_2}$.
\end{proof}

\begin{observation}[Horizontal condition of blocked zones]\label{obs:horizontal_condition}
    Let $H_1$ and $H_2$ be two $k$-heavy disks centred at $(x,y)$ and $(x,y+1)$, respectively. A point $(x',y')$ such that $x - \sqrt{((2^k-1)/2^k)^2 - 1/4} \le x' \le x + \sqrt{((2^k-1)/2^k)^2 - 1/4}$ is in $\B_{H_1} \cup \B_{H_2}$ for any $y-1/2\le y' \le y+3/2 $.
\end{observation}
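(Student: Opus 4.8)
The plan is to mirror the proof of \Cref{obs:vertical_condition} verbatim, exchanging the roles of the horizontal and vertical coordinates: here the two heavy disks are stacked vertically rather than placed side by side, so the band that they jointly block is horizontal rather than vertical. Write $r = (2^k-1)/2^k$ for their common blocked-zone radius, so that $\B_{H_1}$ and $\B_{H_2}$ are open disks of radius $r$ centred at $(x,y)$ and $(x,y+1)$, respectively. The first step is to locate the two points where the circles bounding $\B_{H_1}$ and $\B_{H_2}$ meet. By symmetry of the two centres about the horizontal line $y' = y + \tfrac12$, these points have $y$-coordinate $y + \tfrac12$; and looking at the isosceles triangle whose apex is such an intersection point and whose base is the unit segment joining $(x,y)$ to $(x,y+1)$, its $x$-coordinate is $x \pm h$ with $h = \sqrt{r^2 - \tfrac14} = \sqrt{((2^k-1)/2^k)^2 - 1/4}$ the height of that triangle.

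The second step is to show that the horizontal strip $\{(x',y') : x - h \le x' \le x + h,\ y - \tfrac12 \le y' \le y + \tfrac32\}$ is contained in $\B_{H_1}\cup\B_{H_2}$. Fix such a point. If $y' \le y + \tfrac12$, then $|y' - y| \le \tfrac12$, so the horizontal cross-section of $\B_{H_1}$ at height $y'$ has half-width $\sqrt{r^2 - (y'-y)^2} \ge \sqrt{r^2 - \tfrac14} = h$, whence $(x',y') \in \B_{H_1}$; if instead $y' \ge y + \tfrac12$, the same estimate applied to $\B_{H_2}$ (using $|y' - (y+1)| \le \tfrac12$) gives $(x',y') \in \B_{H_2}$. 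Either way the point lies in the union, which is exactly the claim.

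I do not expect any real obstacle; the only delicate point is the boundary, since the blocked zones are open and the extreme point $(x \pm h,\, y + \tfrac12)$ lies on the bounding circles rather than strictly inside them. This is the same (harmless) looseness already present in \Cref{obs:vertical_condition}, and I would handle the closed inequalities in the statement exactly as it does there; all later uses of this observation invoke only points in the open interior of the strip, so nothing downstream is affected.
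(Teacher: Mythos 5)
Your proposal is correct, and it matches the paper's intent exactly: the paper states this observation without its own proof, treating it as the $90^\circ$-rotated analogue of \Cref{obs:vertical_condition}, which is precisely the symmetry you exploit in your first step. Where you differ is in your second step: the paper's template proof for the vertical case only computes the intersection point of the two bounding circles and checks the two extreme points of the outermost line of the strip (relying implicitly on the fact that the interior rows are covered), whereas your cross-section argument --- bounding the half-width $\sqrt{r^2-(y'-y)^2}\ge\sqrt{r^2-\tfrac14}$ of $\B_{H_1}$ for $|y'-y|\le\tfrac12$ and of $\B_{H_2}$ for $|y'-(y+1)|\le\tfrac12$ --- covers every point of the strip explicitly, so it is slightly more complete than the paper's own model. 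Your caveat about the boundary is also accurate: since blocked zones are open disks, the extreme points such as $(x\pm h,\,y+\tfrac12)$ lie on the bounding circles rather than strictly inside, so the closed inequalities in the statement are loose in exactly the same (harmless) way as in \Cref{obs:vertical_condition}, and the downstream uses only need interior points; no gap results.
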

    
\begin{figure}[tb]
    \centering
    \includegraphics[scale=1,page=14]{media/disk_edgeless.pdf}
    \caption{Illustration of \Cref{obs:vertical_condition} (left) and \Cref{obs:horizontal_condition} (right).}
\end{figure}

\Cref{lem:tc_three_subsets} defines the removal of the intersection of $T_c$. In particular, $T_c$ must be moved to one of the three non-empty disjoint subsets given by $(\A_{T_c} \cap \H_{(x-3,y)})$, $(\A_{T_c} \cap \H_{(x,y-3)})$ and $(\A_{T_c} \cap \H_{(x+3,y)})$.

\begin{mlemmarep}\label{lem:sx_two_subsets}
    Let $\G_{x}$ be a variable gadget such that $c(S_x) = (x,y)$ and $H_1^{(i)},\ldots,H_5^{(i)}$ the $6$-heavy disks surrounding $s^x_{t,i}$ for $i \in \set{1,2}$.
    The feasible area $\F_{S_x}$ is equal to $\F_{S_x}^{(1)} \cup \F_{S_x}^{(2)}$, where $\F_{S_x}^{(i)} = (\A_{S_x}\cap \S^x_{t,i})\setminus \cup_{j=1}^5 \B_{H_j^{(i)}}$ for $i \in \set{1,2}$.
    Moreover, $\F_{S_x}^{(t)}$ and $\F_{S_x}^{(f)}$ are non-empty and disjoint.
\end{mlemmarep}
\begin{proof}
    It holds that $\A_{S_x}\cap \S^x_{t,i} \neq \emptyset$ since $d_{S_x}(s^x_{t,1}) \le K$.
    The region $\S^x_{t,i}$ is partially blocked by $H_1^{(1)},\ldots,H_5^{(1)}$.
    In particular, $s^x_{t,1} = (x-3,y)$ and the centres of $H_1^{(1)},\ldots,H_5^{(1)}$ are given by $c(H_1^{(1)}) = (x-3,y+1)$, $c(H_2^{(1)}) = (x-2,y+1)$, $c(H_3^{(1)}) = (x-2,y)$, $c(H_4^{(1)}) = (x-2,y-1)$ and $c(H_5^{(1)}) = (x-3,y-1)$.
    For any $i \in \set{1,\ldots,5}$, $d_{H_i^{(1)}}(s_{t,1}^x) \ge 1$ holds, implying that $s_{t,1}^x \notin \cup_{i=1}^5 \B_{H_i^{(1)}}$.
    It follows that $\F_{S_x}^{(i)} = (\A_{S_x}\cap \S^x_{t,i})\setminus \cup_{j=1}^5 \B_{H_j^{(i)}}$ is non-empty for $i = 1$.
    Moreover, by the definition of $\G_x$, there exists no other heavy disk close to $\S^x_{t,i}$.
    Consequently, $\F_{S_x}^{(i)} = (\A_{S_x}\cap \S^x_{t,i})\setminus \cup_{j=1}^5 \B_{H_j^{(i)}} = \cup_{D'\in\cstate-\{S_x\}}B_{D'}$.
    The proof is analogous for $i = 2$.
    Moreover, $\S^x_{t,1}$ and $\S^x_{t,2}$ are disjoint by the definition of $\G_x$, so $\F_{S_x}^{(1)}$ and $\F_{S_x}^{(2)}$ must also be disjoint.

    We have shown that $\F_{S_x}^{(i)} \neq \emptyset$ for $i \in \set{1,2}$. 
    We only need to prove that $\F_{S_x} \setminus (\F_{S_x}^{(1)} \cup \F_{S_x}^{(2)}) = \emptyset$ holds.
    By the definition of the variable gadget, $S_x$ is surrounded by $6$-heavy disks $\cup_{i,j\in \set{-3,\ldots,3}} \hdisk{(i,j)}$ where $\hdisk{(-3,0)}= \hdisk{s_{t,1}^x} = \emptyset$ and $\hdisk{(3,0)}= \hdisk{s_{t,2}^x} = \emptyset$.
    Suppose instead that there exist such disks at $s_{t,1}^x$ and $s_{t,2}^x$.
    Let $z = \sqrt{(63/64)^2-1/4}$.
    By \Cref{obs:vertical_condition,obs:horizontal_condition}, any point $(x',y')$ such that $x-3-z \le x' \le x+3+z$ and $y-3 - z \le y' \le y+3 + z$ is in $\cup_{i,j\in \set{-3,\ldots,3}} \B_{\hdisk{(i,j)}}$.
    This would imply $\F_{S_x} = \emptyset$ as $r(\A_{S_x}) = 3$.
    We have already shown that if there exists no heavy disk centred at $s_{t,1}^x$ and $s_{t,2}^x$, then there exist two non-empty disjoint subsets in $\F_{S_x}$, namely $\F_{S_x}^{(1)}$ and $\F_{S_x}^{(2)}$.
    Therefore, $\F_{S_x} \setminus (\F_{S_x}^{(1)} \cup \F_{S_x}^{(2)}) = \emptyset$ holds.
\end{proof}
\Cref{lem:sx_two_subsets} implies that the movement of $S_x$ is restricted to the two subsets $\F_{S_x}^{(1)}$ and $\F_{S_x}^{(2)}$. 
We now show that depending on where $S_x$ is moved, the feasible areas of $1$- and $2$-heavy disks of the variable gadget become restricted.

\begin{mlemmarep}\label{lem:block_disk_restricted}
    Let $\G_{x}$ be a variable gadget such that $c(S_x) = (x,y)$ and $H_1^{(i)},\ldots,H_5^{(i)}$ the $6$-heavy disks surrounding $s^x_{t,i}$ for $i \in \set{1,2}$.
    If $S_x$ is moved to $\F_{S_x}^{(1)}$, then $\bdisk{S^x_{t,1}}$ must be moved to a point $p$ that makes $\bdisk{S^x_{t,1}}$ intersect with disks $\ldisk{S^x_{c,1}}$, $\ldisk{S^x_{c,2}}$ and $\ldisk{S^x_{c,3}}$.
\end{mlemmarep}
\begin{proof}
    We need to consider the feasible areas of disks to show the statement.
    For simplicity, we assume that feasible areas of the involved disks are properly defined rectangles.
    These rectangles completely contain the actual feasible areas of disks.
    We then show that the statement is true for the rectangles (and hence also true for the actual feasible areas). 
    In particular, we describe the movement of disks and show that whenever we move the disks to any point of the rectangles, the statement holds true.
    An illustration of the proof is shown in \Cref{fig:block_disk_restricted}.
    
    By \Cref{lem:sx_two_subsets}, $\F_{S_x}^{(1)} = (\A_{S_x}\cap \S^x_{t,1})\setminus \cup_{j=1}^5 \B_{H_j^{(1)}}$. 
    Without loss of generality, assume that $(x,y) = 0$.
    Recall that disks $H_1^{(1)},\ldots,H_5^{(1)}$ are the five $6$-heavy disks surrounding $\S^x_{t,1}$ with centres $c(H_1^{(1)}) = (-3,1)$, $c(H_2^{(1)}) = (-2,1)$, $c(H_3^{(1)}) = (-2,0)$, $c(H_4^{(1)}) = (-2,-1)$ and $c(H_5^{(1)}) = (-3,-1)$.
    We prove the statement for a rectangle $\R$ such that $\F_{S_x}^{(1)} \subseteq \R$. 
    In particular, let $\R$ be the rectangle defined by points $(-3,1/50),(-3+1/50,1/50),(-3+1/50,-1/50), (-3,1/50)$.
    It can be checked that $\F_{S_x}^{(1)}$ is contained in $\R$ by calculating the distances to $c(H_1^{(1)})$, $c(H_3^{(1)})$ and $c(H_5^{(1)})$.
    Let $q = (q_x,q_y) \in \R$ be the point where $S_x$ is moved.
    We show the feasible area of $\bdisk{S_{t,1}^x}$ after moving $S_x$ to $q$.
    Let $p = (p_x, p_y)$ be an arbitrary point of $\F_{\bdisk{S_{t,1}^x}}$.
    Recall that if $S_x$ is moved, then $\ismoved{S_x} = 1$ and $r(\B_{S_x}) = 1$.
    Thus $p_x \le q_x - 1 = -4 + 1/50$ holds since $\bdisk{S_{t,1}^x}$ is initially placed at $(-3-1/2,0)$ and $q_x \le -3 +1/50$ by the definition of $\R$.
    Moreover, we have that $c(\ldisk{S^x_{c,1}}) = (-4,3/4)$, $c(\ldisk{S^x_{c,2}}) = (-4-3/4,0)$ and $c(\ldisk{S^x_{c,3}}) = (0,-3/4)$.
    Consequently $-4 \le p_x$ also holds since $r(\ldisk{S_{c,2}^x}) = 3/4$.
    
    By the above argument, the inequality $-4 \le p_x \le -4+1/50$ holds. 
    We use this range of values for $p_x$ and define a rectangle $\R'$ that contains $\F_{\bdisk{S_{t,1}^x}}$ and thus $p$.
    Notice that $\F_{\bdisk{S_{t,1}^x}} = \A_{\bdisk{S_{t,1}^x}} \setminus (\cup_{i=1}^3 \B_{\ldisk{S^x_{c,i}}} \cup \B_{S_x})$. 
    We describe $\R'$ by using the limits of $\F_{\bdisk{S_{t,1}^x}}$. 
    First, let $p' = (-4+1/50,y')$ be a point. 
    We set $y'$ such that $\lVert c(\ldisk{S^x_{c,1}}),(-4+1/50,y')\rVert_\alpha = 3/4$.
    The point $p'$ is on the boundary of $\B_{\ldisk{S^x_{c,1}}}$.
    It can be checked that $y' < 1/100$, thus we conveniently set $y' = 1/100$ as an upper bound for $\R'$.
    The same argument can be used for $\ldisk{S^x_{c,3}}$ and a $y' = -1/100$ as a lower bound for $\R'$.
    On the other side, note that $\cap_{i \in \set{1,2,3}} \B_{\ldisk{S^x_{c,i}}} = \set{(-4,0)}$. 
    In other words, $\B_{\bdisk{S^x_{t,1}}}$ cannot be moved to a point $(-4,y')$ for $y' \neq 0$.
    Hence, we can use a rectangle $\R'$ defined by points $(-4,1/100)$, $(-4+1/50,1/100)$, $(-4+1/50,-1/100)$ and $(-4,-1/100)$ for which $F_{\bdisk{S^x_{t,1}}}$ is a subset.
    It can be checked that for any point in $p \in \R'$, the distance $\lVert p,c(\ldisk{S^x_{c,i}})\rVert_\alpha < 1$ for $i \in \set{1,2,3}$.
    In other words, whenever $\bdisk{S_{t,1}^x}$ is moved to a point $p \in \R'$, it intersects with link disks $\ldisk{S^x_{c,1}}$, $\ldisk{S^x_{c,2}}$ and $\ldisk{S^x_{c,3}}$.
    Therefore, $\bdisk{S^x_{t,1}}$ must be moved to a point $p$ that makes $\bdisk{S^x_{t,1}}$ intersect with disks $\ldisk{S^x_{c,1}}$, $\ldisk{S^x_{c,2}}$ and $\ldisk{S^x_{c,3}}$.
\end{proof}

\begin{figure}[bt]
    \centering
    \includegraphics[scale=1,page=11]{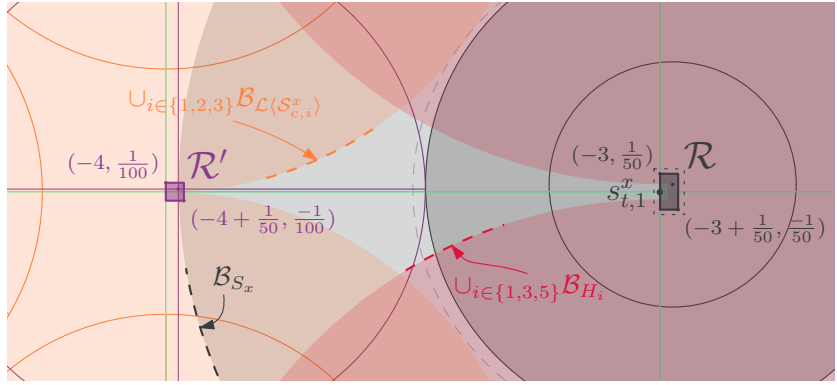}
    \caption{Illustration of \Cref{lem:block_disk_restricted}. The orange regions is the union of blocked zones of disks $\ldisk{S^x_{c,i}}$ for $i\in \set{1,2,3}$, whereas the grey region is the blocked zone of $S_x$. The feasible area $\F_{\bdisk{S^x_{t,1}}}$ is confined to the rectangle $\R'$.}
    \label{fig:block_disk_restricted}
\end{figure}

\begin{mlemmarep}\label{lem:feasible_areas_blocked}
    If $S_x$ is moved to $\F_{S_x}^{(1)}$, then $\F_{D_{c_i}} = \emptyset$ for $i \in \set{1,2,3}$.
\end{mlemmarep}
\begin{proof}
    Let $\R'$ be the rectangle defined by points $(-4,1/100)$, $(-4+1/50,1/100)$, $(-4+1/50,-1/100)$ and $(-4,-1/100)$.
    As proven in \Cref{lem:block_disk_restricted}, the disk $\bdisk{S_{t,1}^x}$ intersects with link disks $\ldisk{S^x_{c,1}}$, $\ldisk{S^x_{c,2}}$ and $\ldisk{S^x_{c,3}}$ when moved to a point $p \in \R'$ (see \Cref{fig:feasible_areas_blocked_1}).
    Suppose that $\bdisk{S_{t,1}^x}$ is actually moved to a point $p \in \R'$.
    We show the statement in the same fashion as in \Cref{lem:block_disk_restricted} using rectangles $\R^1,\R^2,\R^3$ containing $\F_{\ldisk{S^x_{c,1}}}$, $\F_{\ldisk{S^x_{c,2}}}$ and $\F_{\ldisk{S^x_{c,3}}}$, respectively.
    
    We start by defining $\R^1$ (see \Cref{fig:feasible_areas_blocked_2}).
    Let $q$ be a point such that $q \in \F_{\ldisk{S^x_{c,1}}}$.
    We have $q_y \le 1$ since $r(\A_{\ldisk{S^x_{c,1}}}) = 1/4$ and $c(\ldisk{S^x_{c,1}}) = (-4,3/4)$.
    The farthest point in $\R'$ from $c(\ldisk{S^x_{c,1}})$ is the bottom right corner $(-4+1/50,-1/100)$.
    We show a lower bound for $q_y$ assuming that $p = (-4+1/50,-1/100)$.
    Let $q'$ be the point with the lowest $y$-axis coordinate value that is in $\F_{\ldisk{S^x_{c,1}}}$. 
    The values of $q'$ are given by the intersection of the boundaries of $\B_{\bdisk{S_{t,1}^x}}$ and $\B_{\hdisk{(-5,1)}}$ (or $\B_{\hdisk{(-3,1)}}$).
    It can be checked that $q'_y$ approximately equals $0.9893$. Thus, we reasonably set the lower bound of $q_y$ to $1-1/50$.
    Given that $1-1/50 \le q_y\le 1$, the range of $q_x$ is given by the intersection of boundaries of blocked zones $\B_{\hdisk{(-5,1)}}$ and $\B_{\hdisk{(-3,1)}}$ with the boundary of $\B_{\bdisk{S_{t,1}^x}}$.
    It can be checked that $q_x$ has a value $-4\pm 0.015\dots$, thus we reasonably set the bound for $q_x$ to $-4 - 1/50 \le q_x \le -4+1/50$.

    We have the points that define $\R^1$. 
    Now we use $\R^1$ and show that if $\ldisk{S^x_{c,1}}$ is moved to any point $q\in \R^1$, then $\bdisk{S^x_{c,1}}$ must be moved to a point $p^1 \in \F_{\bdisk{S^x_{c,1}}}$ that makes $\F_{D_{c_1}} = \emptyset$.
    When $\ldisk{S^x_{c,1}}$ is moved to $q$, $\ismoved{\ldisk{S^x_{c,1}}} = 1$ and $r(\B_{\ldisk{S^x_{c,1}}}) =1$. 
    So the point $p^1$ must satisfy $\lVert p^1,q\rVert_\alpha \ge 1$.
    Moreover, $\B_{\ldisk{S^x_{c,1}}}$ intersects $\A_{\bdisk{S^x_{c,1}}}$, thus the lowest possible values for $p^1_y$ are given by the intersection of $\B_{\ldisk{S^x_{c,1}}}$ and $\B_{\hdisk{(-5,2)}}$ when $q = (-4+1/50,1-1/50)$ and the intersection of $\B_{\ldisk{S^x_{c,1}}}$ and $\B_{\hdisk{(-3,2)}}$ when $q = (-4-1/50,1-1/50)$.
    In particular, $p^1_y \ge 1.979\ldots > 2-3/100$.
    The $x$-axis values of the points on the boundary of $\B_{\hdisk{(-5,2)}}$ and $\B_{\hdisk{(-3,2)}}$ for which $y = 2-3/100$ are $-4- 0.0161\ldots$ and $-4+0.0161\ldots$, which are bounded by $-4- 1/50$ and $-4+ 1/50$, respectively.
    These two points are the farthest point from $s^x_{c,1}$ for which $\bdisk{S^x_{c,1}}$ can be relocated.
    \Cref{fig:feasible_areas_blocked_3} illustrates the range of values for $p^1$.
    Before moving $\bdisk{S^x_{c,1}}$, the closest points in $\F_{D_{c_1}}$ to $D_{c_1}$ are the intersection points of $\B_{\hdisk{(-5,3)}}$ and $\B_{\hdisk{(-3,3)}}$ with $\B_{\hdisk{(-4,4-1/4)}}$. 
    In particular, these points are $(-4.04\ldots,2.76\ldots)$ and $(-3.95\ldots,2.76\ldots)$.
    We reasonably round these points to $(-4.05,2.77)$ and $(-3.95,2.77)$, respectively.
    When $\bdisk{S^x_{c,1}}$ is moved to $(-4\pm 1/50, 2-3/100)$, 
    $\ismoved{\bdisk{S^x_{c,1}}} = 1$ and thus $r(\B_{\bdisk{S^x_{c,1}}}) = 1$. 
    It can be checked that both $(-4.05,2.77)$ and $(-3.95,2.77)$ are contained in $\B_{\bdisk{S^x_{c,1}}}$.
    See \Cref{fig:feasible_areas_blocked_3} for the case when $\bdisk{S^x_{c,1}}$ is moved to $(-4+ 1/50, 2-3/100)$ and \Cref{fig:feasible_areas_blocked_4} to see that the points defined are contained in $\B_{\bdisk{S^x_{c,1}}}$.
    Moreover, both points are the farthest points to $c(\bdisk{S^x_{c,1}})$ in $\F_{D_{c_1}}$ before moving $\bdisk{S^x_{c,1}}$. 
    Consequently, we conclude that $\F_{D_{c_1}} = \emptyset$.
    The proof is analogous for $\R^2$ and $\R^3$ by rotating the given coordinates by $\pi/2$ and $\pi$ degrees, respectively.
    We started by moving $\bdisk{S_{t,1}^x}$ to a point $p \in \R'$, which is given by moving $S_x$ to $\F_{S_x}^{(1)}$ by \Cref{lem:block_disk_restricted}.
    Therefore, if $S_x$ is moved to $\F_{S_x}^{(1)}$, then $\F_{D_{c_i}} = \emptyset$ for $i \in \set{1,2,3}$.
\end{proof}

\begin{figure}[!htb]
    \centering
    \includegraphics[scale=1,page=18]{media/disk_edgeless.pdf}
    \caption{The disk $\bdisk{S_{t,1}^x}$ is moved to the point $p= (-4+1/50,-1/100) \in \R'$.}
    \label{fig:feasible_areas_blocked_1}
\end{figure}
\begin{figure}[!htb]
    \centering
    \includegraphics[scale=1,page=19]{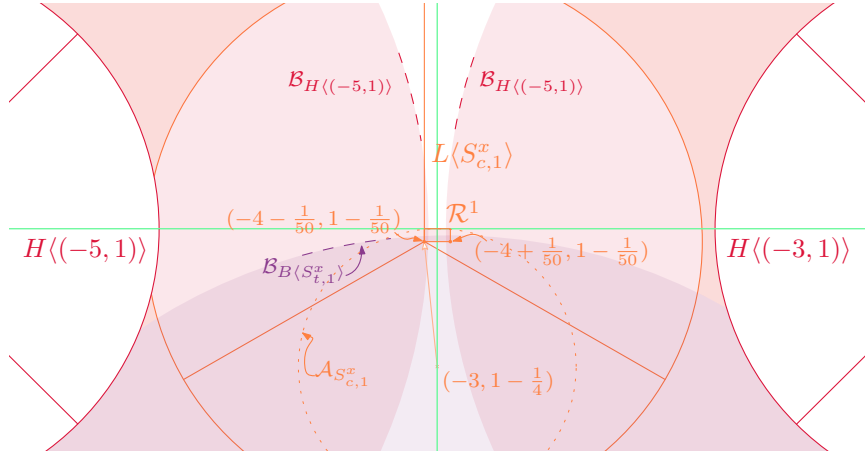}
    \caption{The disk $\ldisk{S_{c,1}^x}$ is moved to the point $(-4-1/50,1-1/50)$ contained in the rectangle $\R^1$ defined by points $(x,y)$ such that $-4-1/50 \le x \le -4+1/50$ and $1-1/50 \le y \le 1$.}
    \label{fig:feasible_areas_blocked_2}
\end{figure}
\begin{figure}[!htb]
    \centering
    \includegraphics[scale=1,page=20]{media/disk_edgeless.pdf}
    \caption{Lowest value for $p_y^1$ and range of values for $p_x^1$ assuming that $\ldisk{S_{c,1}^x}$ was moved to the point $(-4-1/50,1-1/50)$. The disk $\bdisk{S^x_{c,1}}$ is moved to the point $(-4+1/50,2-3/100)$.}
    \label{fig:feasible_areas_blocked_3}
\end{figure}
\begin{figure}[!htb]
    \centering
    \includegraphics[scale=1,page=21]{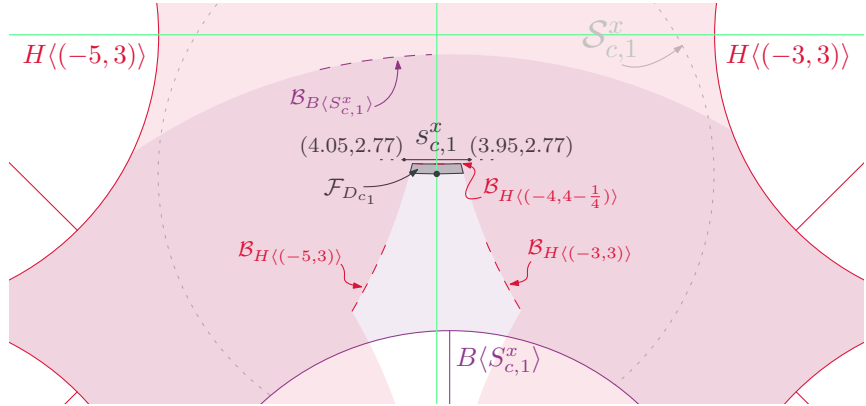}
    \caption{The closest points to $D_{c_1}$ in $\F_{D_{c_1}}$ before moving $\bdisk{S_{c,1}^x}$. When $\bdisk{S_{c,1}^x}$ is moved to the point $(-4\pm 1/50, 2-3/100)$, both points are contained in $\B_{\bdisk{S_{c,1}^x}}$ and $\F_{D_{c_1}}$ becomes $\emptyset$.}
    \label{fig:feasible_areas_blocked_4}
\end{figure}

Lastly, we show that if $S_x$ is moved to $\F_{S_x}^{(2)}$ (resp. $\F_{S_x}^{(1)}$), then there exist three spaces for moving $D_{c_i}$ to the true side for $i\in \set{1,2,3}$ (resp. the false side for $i\in \set{4,5,6}$).
\begin{mlemmarep}\label{lem:feasible_areas_available}
    If $S_x$ is moved to $\F_{S_x}^{(2)}$, then $\F_{D_{c_i}} \neq \emptyset$ for $i \in \set{1,2,3}$.
    Moreover, $\F_{\bdisk{S}^x_{t,1}} \neq \emptyset$.
\end{mlemmarep}

\begin{proof}
    Let $\G_{x}$ be the variable gadget of an arbitrary variable $x$.
    We show that the disks in $\G_{x}$ can be moved such that $\F_{D_{c_i}} \neq \emptyset$ for $i \in \set{1,2,3}$.
    Recall that the disks $H_1^{(1)},\ldots,H_5^{(1)}$ are the five $6$-heavy disks surrounding $\S^x_{t,1}$.
    First, we have $\S^x_{t,1} \setminus \cup_{j=1}^5 \B_{H_j^{(1)}} \neq \emptyset$ since $S_x$ was moved to $\F_{S_x}^{(2)}$.
    Moreover, $d_{\bdisk{S^x_{t,1}}}(s^x_{t,1}) = K$ holds, thus we move $\bdisk{S^x_{t,1}}$ to $s^x_{t,1}$.
    This implies $\F_{\bdisk{S}^x_{t,1}} \neq \emptyset$.
    The intersection between $\bdisk{S^x_{t,1}}$ and disks $\ldisk{S^x_{c,1}},\ldisk{S^x_{c,3}}$ is removed, so these disks can remain unmoved as well as $\ldisk{S^x_{c,2}}$.
    Notice that $s^x_{c,i}\notin \B_{\bdisk{S^x_{c,i}}}$ holds for $i\in \set{1,2,3}$.
    Moreover, $s^x_{c,i}$ is not contained in the zones blocked by heavy disks surrounding $\S^x_{c,i}$ by \Cref{obs:vertical_condition,obs:horizontal_condition}.
    We also know that $s^x_{c,i} \in \F_{D_{c_i}}$ since $d_{D_{c_i}}(s^x_{c,i}) \le K$.
    Therefore $\F_{D_{c_i}} \neq \emptyset$ for $i \in \set{1,2,3}$.
\end{proof}

\subsubsection{Clause Components}
\Cref{lem:cg_two_subsets,lem:tc_three_subsets,lem:block_disk_restricted,lem:feasible_areas_blocked} ensure that any undesired movement of the disks does not significantly alter the correctness of the reduction, whereas \Cref{lem:feasible_areas_available} provides a valid way to move the disks into the free slots of the variable gadget.
We are now ready to introduce the \emph{clause component} and show how the gadgets are connected to each other.

A \emph{clause component} $\G^c_{i,j,k} \subseteq \cstate$ is a collection of disks that represent the clause gadget $\G_c$ for a clause $c$ formed by variables $x_i,x_j,x_k$ connected to three variable gadgets $\G_{x_i},\G_{x_j},\G_{x_k}$ by consecutive cell gadgets.
As we mentioned earlier, the gadgets are connected by using arms, as depicted in \Cref{fig:variable_gadget}.
Arms are also formed by consecutive cell gadgets, but there exist cells that do not follow this definition.
We call these cells \emph{irregular cell gadgets} and their holes \emph{irregular interior holes}.
\Cref{fig:irreg_interior_holes} shows the three irregular cell gadgets present in the arms with their interior holes.

\begin{figure}[!htb]
    \centering
    \includegraphics[scale=1,page=30]{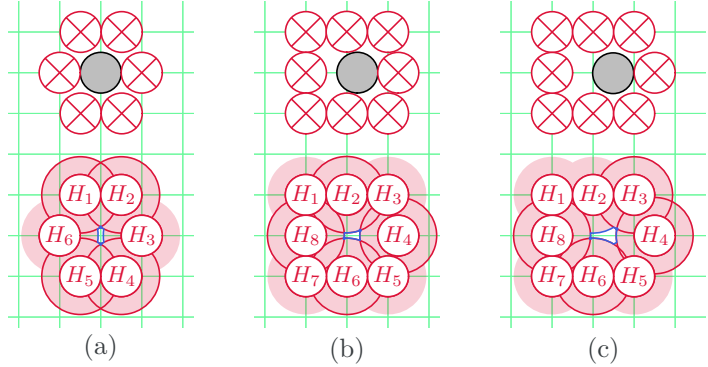}
    \caption{Top: Irregular cell gadgets of arms; Bottom: Irregular interior holes of their respective cell gadget. The marked points are the farthest pair of points for each interior hole.}
    \label{fig:irreg_interior_holes}
\end{figure}

We show that \Cref{lem:cg_two_subsets} can be extended to irregular cell gadgets.

\begin{mlemmarep}\label{lem:irreg_holes_one_disk}
    The interior hole $\H_{(x,y)}$ of an arbitrary irregular cell gadget $\G_{(x,y)}$ of an arm admits exactly one disk centred within it.
\end{mlemmarep}
\begin{proof}
    We prove the statement for each irregular cell gadget.
    Let $\G_{(x,y)}$ be the irregular cell gadget in \Cref{fig:irreg_interior_holes}(a).
    The disks $H_1,\ldots,H_6$ surrounding $D$ are $6$-heavy disks such that $c(H_1) = (x-1/2,y+1)$, $c(H_2) = (x+1/2,y+1)$, $c(H_3) = (x+1,y)$, $c(H_4) = (x+1/2,y-1)$, $c(H_5) = (x-1/2,y-1)$ and $c(H_6) = (x-1,y)$.
    Let $\S$ be the convex polygon containing $\H_{(x,y)}$ defined by the intersection points of the boundaries of the blocked zones contained in $\H_{(x,y)}$.
    We check that $\S$ satisfies $\diam{S} < 1$.
    To aim for simplicity, we only give the farthest pair of points of $\S$.
    The coordinates of the cell gadget can be checked in \Cref{apx:coordinates}.
    Let $p,p' \in \H_{(x,y)}$ be the intersection points of boundaries of $(\B_{H_1},\B_{H_2})$ and $(\B_{H_4},\B_{H_5})$, respectively.
    It can be checked that the farthest pair of points of $\S$ is $p$ and $p'$, for which $\lVert p,p'\rVert_2$ equals $0.304\ldots < 0.31$.
    That is, $\diam{\S} < 1$.
    By \Cref{obs:convex_polygon_one_disk}, $\S$ admits exactly one disk centred within it.
    Consequently, $\H_{(x,y)}$ also admits exactly one disk centred within it, since $\H_{(x,y)} = \S \setminus (\B_{H_1} \cup \cdots \cup \B_{H_6})$.
    
    Let $\G_{(x,y)}$ be the irregular cell gadget in \Cref{fig:irreg_interior_holes}(b) and $\S$ the convex polygon for $\H_{(x,y)}$.
    Let $p,p' \in \H_{(x,y)}$ be the intersection points of boundaries of $(\B_{H_2},\B_{H_4})$ and $(\B_{H_6},\B_{H_8})$, respectively.
    It can be checked that the farthest pair of points of $\S$ is $p$ and $p'$, for which $\lVert p,p'\rVert_2$ equals $0.101\ldots < 0.11$.
    Hence, $\S$ admits exactly one disk centred within it by \Cref{obs:convex_polygon_one_disk}.
    Consequently, $\H_{(x,y)}$ also admits exactly one disk centred within it.

    Lastly, $\G_{(x,y)}$ be the irregular cell gadget in \Cref{fig:irreg_interior_holes}(c) and $\S$ the convex polygon for $\H_{(x,y)}$.
    Let $p,p' \in \H_{(x,y)}$ be the intersection points of boundaries of $(\B_{H_3},\B_{H_{4}})$ and $(\B_{H_6},\B_{H_8)}$, respectively.
    It can be checked that the farthest pair of points of $\S$ is $p$ and $p'$, for which $\lVert p,p'\rVert_2$ equals $0.562\ldots < 0.57$.
    Hence, $\S$ admits exactly one disk centred within it by \Cref{obs:convex_polygon_one_disk}.
    Consequently, $\H_{(x,y)}$ also admits exactly one disk centred within it.

    Notice that the irregular cell gadgets in \Cref{fig:interior_holes} are present in the arm for $D_{c_3}$ (see \Cref{fig:variable_gadget}).
    The rest arms contains the gadget of \Cref{fig:interior_holes}(b) rotated.
    Therefore, all irregular gadgets admits one centre of disk within them.
\end{proof}

Observe that even if a cell gadget is irregular, the interior holes can be used as for normal cell gadgets.
This allows us to define \Cref{cor:cg_two_subsets}, which is a slight extension of \Cref{lem:cg_two_subsets} to characterise irregular cell gadgets.

\begin{corollary}\label{cor:cg_two_subsets}
    Let $\G_{(x,y)}$ and $\G_{(x',y')}$ be two (possibly irregular) cell gadgets with transition disks $D$ and $D'$, respectively.
    If $D$ and $D'$ are consecutive, then the feasible area $\F_{D}$ is equal to $(A_D \cap \H_{(x,y)}) \cup (A_{D} \cap \H_{(x',y')})$.
    Moreover, $A_D \cap \H_{(x,y)}$ and $A_{D'} \cap \H_{(x',y')}$ are non-empty and disjoint.
\end{corollary}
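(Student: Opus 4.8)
The plan is to re-run the proof of \Cref{lem:cg_two_subsets} essentially verbatim, after isolating exactly which properties of a \emph{regular} cell gadget it uses and checking that the three irregular cell gadgets of \Cref{fig:irreg_interior_holes} share them. Reading that proof, the facts invoked about a cell gadget $\G_{(x,y)}$ with transition disk $D$ are: (a) its heavy disks form a blocked enclosure, so $\H_{(x,y)} = \C(\G_{(x,y)}^{(\mathit{heavy})}) \setminus \cup_{O \in \G_{(x,y)}^{(\mathit{heavy})}} \B_O$, and, by the layout of $\cstate$, $\H_{(x,y)}$ meets no other blocked zone, hence $\H_{(x,y)} = \C(\G_{(x,y)}^{(\mathit{heavy})}) \cap (\cup_{O \in \cstate} \B_O)^c$; (b) $\H_{(x,y)} \ne \emptyset$ and $c(D) \in \H_{(x,y)}$; (c) $\C(\G_{(x,y)}^{(\mathit{heavy})}) \subseteq \A_D$; and (d) the ``no escape'' property, namely that every point of $\A_D$ avoiding all blocked zones lies in $\C(\G_{(x,y)}^{(\mathit{heavy})}) \cup \C(\G_{(x',y')}^{(\mathit{heavy})})$. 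Given (a)--(d), the identity $\F_D = (\A_D \cap \H_{(x,y)}) \cup (\A_D \cap \H_{(x',y')})$ together with non-emptiness and disjointness of the two pieces follows by exactly the set-algebra in \Cref{lem:cg_two_subsets}, which never again refers to the shape of the gadget; moreover the argument is uniform in which (if any) of the two gadgets is irregular, and it uses consecutiveness of $D,D'$ only through $d_D(c(D')) \le K$, i.e.\ $c(D') \in \A_D$.

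So the task reduces to verifying (a)--(d) for each irregular cell gadget. For (a), the surrounding heavy disks of each irregular cell gadget still form a blocked enclosure: their centres are in convex position (a hexagon for the gadget of \Cref{fig:irreg_interior_holes}(a), an octagon for those of (b) and (c)) and every pair of blocked zones of adjacent centres overlaps; this is implicit in how the interior holes were described in \Cref{lem:irreg_holes_one_disk} and is confirmed from the coordinates of \ref{apx:coordinates}. For (b), the computation of \Cref{lem:holes_in_cell_gadgets} transfers unchanged: each heavy-disk centre of the gadget is at distance $\ge 1 > 63/64 = (2^6-1)/2^6$ from $c(D)$, so $c(D)$ lies in no blocked zone and hence in $\H_{(x,y)}$, while \Cref{lem:irreg_holes_one_disk} already gives $\H_{(x,y)} \ne \emptyset$ (indeed it gives the stronger fact that it admits exactly one disk centre). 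Property (c) is immediate, since all heavy-disk centres of an irregular cell gadget lie within distance $\sqrt{2}$ of $c(D)$, well below $r(\A_D) = 3$.

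The only genuinely new point is (d), and I expect it to be the main obstacle. Here I would argue as in \Cref{lem:tc_three_subsets}: the surrounding heavy disks of the irregular cell gadget, together with the heavy disks of the neighbouring cells of the arm, partition the region $\A_D \setminus (\H_{(x,y)} \cup \H_{(x',y')})$ into unit squares of $6$-heavy disks, each of which has empty interior hole by \Cref{lem:no_holes_hd_squares} and is therefore contained in $\cup_{O \in \cstate} \B_O$; consequently no point of $\A_D$ outside all blocked zones escapes $\C(\G_{(x,y)}^{(\mathit{heavy})}) \cup \C(\G_{(x',y')}^{(\mathit{heavy})})$. This is still a finite check against the explicit rational coordinates of \ref{apx:coordinates}, and it is of the same flavour as — and strictly easier than — the diameter bounds already carried out inside \Cref{lem:irreg_holes_one_disk}. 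With (a)--(d) established, \Cref{cor:cg_two_subsets} follows from the proof of \Cref{lem:cg_two_subsets}.
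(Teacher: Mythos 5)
Your overall strategy coincides with the paper's treatment of this statement: \Cref{cor:cg_two_subsets} is given no separate proof there, the intended justification being precisely that \Cref{lem:irreg_holes_one_disk} lets irregular interior holes play the role of regular ones so that the set algebra of \Cref{lem:cg_two_subsets} carries over; your properties (a)--(c) are checked essentially as the paper intends (your ``within $\sqrt{2}$'' bound in (c) is not literally true for all surrounding heavy disks of the irregular cells, but the needed inclusion $\C(\G_{(x,y)}^{(\mathit{heavy})}) \subseteq \A_D$ holds comfortably since $r(\A_D)=3$).

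The gap is in your step (d), which you yourself identify as the only new content. First, the mechanism you propose --- that $\A_D \setminus (\H_{(x,y)} \cup \H_{(x',y')})$ is covered by unit squares of $6$-heavy disks so that \Cref{lem:no_holes_hd_squares} applies --- does not match the geometry near the irregular cells: the arms contain rows of disks at half-integer offsets (e.g.\ the horizontal run of the arm for $D_{c_3}$), spacings of $3/4$ adjacent to the irregular cells, and, close to the variable gadget, $1$- and $2$-heavy disks whose blocked zones have radius only $1/2$ and $3/4$; none of these configurations form the axis-aligned unit squares of \Cref{lem:no_holes_hd_squares}, so that lemma alone cannot seal off the region and you would instead need \Cref{obs:vertical_condition,obs:horizontal_condition} or direct coordinate computations. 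Second, and more fundamentally, the no-escape property is simply false for the last irregular cell of an arm: its transition disk $D_{c_i}$ has the free slot $\S^x_{c,i}$ of the variable gadget inside its range of movement, and in the initial configuration that slot is not covered by any blocked zone --- the paper itself records that points of the slot belong to $\F_{D_{c_1}}$ before the blocking disk is relocated (see the computations in \Cref{lem:feasible_areas_blocked}) and relies on $d_{D_{c_i}}(s^x_{c,i})\le K$ in \Cref{lem:feasible_areas_available}. Hence for that disk $\F_{D}$ has a third component besides the two interior holes, and no argument can establish the claimed equality for that pair; the paper sidesteps this by never invoking the corollary there and treating $D_k = D_{c_i}$ with a separate slot argument in \Cref{lem:clause_satisfiable_md_k}. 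Your proof therefore needs to restrict (d), and the corollary itself, to consecutive cells whose ranges of movement do not reach the free slots (which is how it is actually used), and to replace the unit-square tiling by a coordinate check adapted to the offset and lighter-weight disks; as written, step (d) fails.
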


For any variable $x \in \set{x_i,x_j,x_k}$, when $x$ appears as a positive (negative) literal in $c$, $\G_{x}$ is connected to the true (false) side of $\G_c$.
\Cref{fig:component}(c) shows an example of a clause gadget for clause $c = (\overline{x_1} \lor x_2 \lor x_3)$.

\begin{figure}[!htb]
    \centering
    \includegraphics[scale=1,page=24]{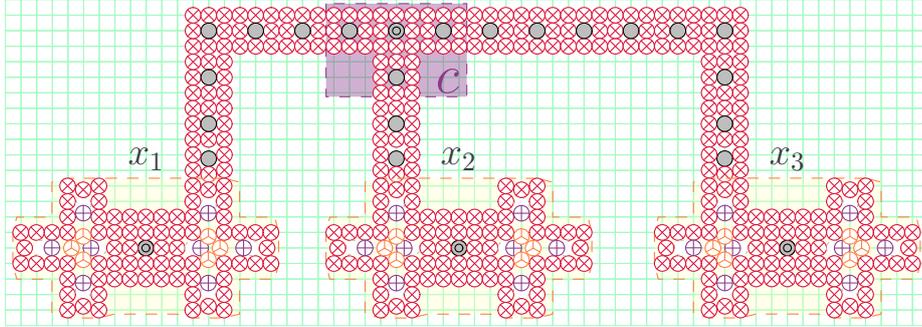}
    \caption{Example of a clause component for clause $c = (\overline{x_1} \lor x_2 \lor x_3)$.}
    \label{fig:component}
\end{figure}

We say \emph{removing the intersection of} $\G^c_{i,j,k}$ to refer to moving the intersection disk of $\G_c$ so that a free slot of $\G_x$ for $x\in \set{x_i,x_j,x_k}$ is occupied, under the condition that the minimum maximum moving distance is $K$.
Removing the intersection of $\G_c$ using a free slot of $\G_x$ is equivalent to assigning a truth value to $x$ that satisfies $c$.
\Cref{lem:clause_satisfiable_md_k} formalises the idea of removing an intersection.

\begin{mlemmarep}\label{lem:clause_satisfiable_md_k}
    Given an arbitrary clause $c$ and its clause component $\G^c_{i,j,k} \subseteq \istate$, $c$ is satisfiable if and only if removing the intersection of $\G^c_{i,j,k}$ can be done with minimum maximum moving distance $K$ for the $L_1$ and $L_2$ distances.
\end{mlemmarep}

\begin{proof}
    Without loss of generality, suppose that $x_i$ appears as a positive literal in $c$ and $c$ is satisfied by $x_i$.
    We show that $\mstatefinal{\G^c_{i,j,k}}$ without intersections can be obtained by moving disks with minimum maximum moving distance $K$.
    Let $\{D_1,\ldots,D_k\}$ be the collection of consecutive transition disks that connect $\G_c$ and $\G_{x_i}$ such that $D_1 = T_c$ and $D_k = D_{c_1}$.
    Since $x_i$ satisfies $c$ when assigned to true, $S_{x_i}$ is moved to $\F_{S_x}^{(2)}$.
    By \Cref{lem:feasible_areas_available}, we have $\F_{D_{c_1}}\neq \emptyset$.
    Thus, we move $D_k$ to $s^{x_i}_{c,1}$ with moving distance $K$.
    Since $\H_{c(D_k)} \setminus \B_{D_k} = \H_{c(D_k)}$,
    $D_{k-1}$ is moved to $c(D_k)$ with moving distance $K$.
    This procedure is repeated until $D_1 = T_c$ is moved to the previous position of $D_2$, which is a consecutive transition disk to $T_c$ in the clause gadget.
    \Cref{cor:cg_two_subsets} and \Cref{lem:tc_three_subsets} ensure that the procedure can be performed.
    Moreover, the procedure meets the conditions of \Cref{lem:holes_one_disk,lem:irreg_holes_one_disk}.
    The disks left in $\G^c_{i,j,k}$ are moved to their centres and consequently $\mstatefinal{\G^c_{i,j,k}}$ does not contain intersections.
    Therefore removing the intersection of $\G_c$ can be done with minimum maximum moving distance $K$.

    In the other direction, assume that removing the intersection of $\G^c_{i,j,k}$ can be done with minimum maximum moving distance of $K$.
    We show that the removal is equal to an assignment of variables that satisfies $c$.
    By \Cref{lem:tc_three_subsets}, $T_c$ must be moved to one of the three positions of its consecutive transition disks in $\G_c$.
    Without loss of generality, suppose that it was moved to the arm connected to the true side of $\G_{x_i}$ through a set $\{D_1,\ldots,D_k\}$ of consecutive transition disks as described before.
    Now $\movedpos{T_c} \in \H_{c(D_2)}$ holds, so $D_2$ must be moved outside $\H_{c(D_2)}$.
    By \Cref{lem:cg_two_subsets}, $\F_{D_2}$ is only the non-empty intersection of $\A_{D_2}$ and the interior hole $\H_{c(D_3)}$, as $\H_{c(D_2)} \setminus \B_{T_c} = \emptyset$.
    The same reasoning can be applied to disks $D_{3},\ldots,D_{k}$ by \Cref{lem:cg_two_subsets,cor:cg_two_subsets}.
    Assume that $D_k = D_{c_1}$.
    The feasible area of $D_k$ consists only of the non-empty intersection of $\A_{D_k}$ and the area of $\S^{x_i}_{c,1}$ that does not intersect blocked zones.
    Hence $D_k$ is moved to $s^{x_i}_{c,1}$.
    Since $\G_c$ is connected to the true side of $\G_{x_i}$, $S_{x_i}$ must be moved to $\F_{S_x}^{(2)}$ by \Cref{lem:feasible_areas_blocked} and $x_i$ appears as a positive literal in $c$ by the definition of clause gadgets.
    Therefore, $c$ is satisfied by an assignment of variables in which $x_i = 1$.
\end{proof}

\subsubsection{Reduction Correctness and Running Time}

We are now ready to characterise the reduction from {\pthreesat}. We first show the correctness of the reduction and then show that it can be obtained in polynomial time.

\begin{mlemmarep}\label{lem:3sat_edgeless_equiv}
    Given an instance $(\Phi,G_\Phi)$ of {\pthreesat}, the minimum maximum moving distance for satisfying $\Pi_{\texttt{edgeless}}$ in $\cstate$ is at most $K$ if and only if $\Phi$ is satisfiable.
\end{mlemmarep}

\begin{proof}
    Assume first that $\Phi$ is satisfiable by values $t_1,\ldots,t_n$ and the number of clauses is $m$.
    The collection of disks $\cstate$ can be partitioned into $m$ clause components that possibly share variable gadgets.
    By \Cref{lem:clause_satisfiable_md_k}, removing the intersections of all clause gadgets in $\cstate$ can be done with minimum maximum moving distance $K$.
    In particular, for each $t_i$, $S_{x_i}$ is moved to $\F_{S_{x_i}}^{(1)}$ if $t_i = 0$ and $\F_{S_{x_i}}^{(2)}$ otherwise.
    \Cref{lem:clause_satisfiable_md_k} ensures that by moving the truth setter disks in this way allows moving the intersection disk of the $m$ clauses to a free space of variable gadgets
    That is, it produces a $\mstatefinal{\D}$ without intersections.
    Therefore $\Pi_{\texttt{edgeless}}$ is satisfied with minimum maximum moving distance $K$.

    In the other direction, assume that $\Pi_{\texttt{edgeless}}$ is satisfied with minimum maximum moving distance $K$ in $\cstate$.
    That is, $\mstatefinal{\D}$ can be constructed by moving disks with minimum maximum moving distance $K$.
    By \Cref{lem:clause_satisfiable_md_k}, it implies that each clause is satisfied by an assignment of a variable of $\Phi$.
    Let $c,c' \in \Phi$ be two arbitrary clauses and $x$ be a variable such that $x$ appears in $c$ and $c'$.
    If $x$ appears in both variables as a positive (negative) literal, $\G_c$ and $\G_{c'}$ are connected to the true (false) side of $\G_{x}$ and \Cref{lem:feasible_areas_blocked,lem:feasible_areas_available} ensure that removing the intersection of both clause gadgets is given by the movement of $S_x$.
    Suppose instead that $x$ appears as a true literal in $c$ and as a false literal in $c'$.
    The gadgets $\G_c$ and $\G_{c'}$ are connected to the true and false side of $\G_x$, respectively.
    Again, \Cref{lem:feasible_areas_blocked,lem:feasible_areas_available} ensure that removing the intersection of $\G_{c}$ using $\G_x$ is not possible whenever the intersection of $\G_{c'}$ is removed using $\G_{x}$, or vice versa.
    By the definition of clause gadgets, it means that satisfying clauses $c$ and $c'$ by a variable $x$ such that $x$ appears as a true literal in $c$ and as a false literal in $c'$ (or the contrary) at the same time is not possible.
    That is, the side not blocked by $S_x$ decides the truth assignment of $x$.
    Consequently, the movement of the truth setter disks of the variable gadgets in $\cstate$ describes a feasible solution for satisfying $\Phi$.
    This concludes the proof.
\end{proof}

The last lemma shows that the construction of the instance can be done in polynomial time.

\begin{mlemmarep}\label{lem:reduction_poly_time}
    Given an instance $(\Phi,G_{\Phi})$ of {\pthreesat} with $n$ variables and $m$ clauses, the instance ($\cstate,K$) of {\ggedmm} can be obtained in $\poly(\eta)$ where $\eta = f(n,m)$.
\end{mlemmarep}

\begin{proof}
    By \Cref{lem:3sat_edgeless_equiv}, there exists a reduction from {\pthreesat} that converts an instance $\Phi$ to a collection of disks $\cstate$ such that $\Phi$ is satisfiable if and only if the minimum maximum moving distance for satisfying $\Pi_{\texttt{edgeless}}$ in $\cstate$ is $K$.
    Thus we only need to prove that $\cstate$ can be obtained in polynomial time from $\Phi$ and $G_{\Phi}$.
    An arbitrary $\cstate$ consists of horizontally aligned variable gadgets with clause gadgets vertically connected to them from up and down.
    The positions and connections of clause components are given by the representation of $G_{\Phi}$.
    All disks are in a unique position and do not intersect any other disk except for the $n + m$ intersection disks and the $n$ blocking disks intersecting link disks.
    We define a grid $\G$ of area $A_{\G} = W_{\G}H_{\G}$ that contains $\cstate$, where $W_{\G}$ and $H_{\G}$ denote the width and height of $\G$ expressed by number of disks, respectively. We show that $\eta = f(n,m)$ polynomially by the size of the input $n+m$.
    
    The maximum area occupied by a variable gadget is a constant value, denoted by $A_v = W_vH_v$.
    By the definition of the variable gadget, $W_v \le 35$ and $H_v \le 19$ hold.
    We add a separation of $10$ between variable gadgets. Consequently, $W_{\G} \le n(W_v+S)\le 35n\times 10$ holds. 
    A clause gadget $c$ occupies an area of $A_g = W_gH_g = 9\times 6$. 
    The number of `nested' clause components in $\G$ is limited by $m$.
    If variable gadgets are aligned at $y = 0$, we locate clause gadgets at $y=20$ for non-nested components and $y = 20+9m'$ for components enclosing $m'-1$ components, where $m' < m$.
    Consequently, $H_{\G}$ is bounded by $2(20+9m)$ and hence $A_{\G} \le 350n \times (40+18m)$.
    This implies that the number of disks required to construct $\cstate$ is also polynomially limited by $n$ and $m$.
    That is, $\eta = f(n,m)$ describes a polynomial over $n$ and $m$.
    Therefore, $\cstate$ can be constructed in polynomial time.
\end{proof}

With \Cref{lem:reduction_poly_time}, we restate \Cref{thm:edgeless_np_hard} below to conclude this section.
We remark that the strongly \NP-hardness comes from assuming $K=1$, which implies that the distance weights used are constant values. Therefore the values of the instance can be bounded by the input size.

\edgelessNPHard*

\end{toappendix}


\ifConf
    \begin{toappendix}
\clearpage
\section{Coordinates of gadgets}\label{apx:coordinates}

\begin{table}[!bht]
\caption{Coordinates for a cell gadget $\G_{(0,0)}$}
\label{tab:cell_gadget_coords}
\centering
\maxsizebox{\textwidth}{\textheight}{%
\begin{tabular}{@{}ccccccccccccccc@{}}
\toprule
\multicolumn{15}{l}{\textbf{Cell Gadget}} \\ \midrule
\multirow{2}{*}{\textbf{$x$}} & \multirow{2}{*}{\textbf{$y$}} & \multirow{2}{*}{\textbf{Type}} & \multirow{2}{*}{\textbf{$x$}} & \multirow{2}{*}{\textbf{$y$}} & \multirow{2}{*}{\textbf{Type}} & \multirow{2}{*}{\textbf{$x$}} & \multirow{2}{*}{\textbf{$y$}} & \multirow{2}{*}{\textbf{Type}} & \multirow{2}{*}{\textbf{$x$}} & \multirow{2}{*}{\textbf{$y$}} & \multirow{2}{*}{\textbf{Type}} & \multirow{2}{*}{\textbf{$x$}} & \multirow{2}{*}{\textbf{$y$}} & \multirow{2}{*}{\textbf{Type}} \\
 &  &  &  &  &  &  &  &  &  &  &  &  &  &  \\
$-1$ & $1$ & \multicolumn{1}{c|}{$6$-heavy} & $0$ & $1$ & \multicolumn{1}{c|}{$6$-heavy} & $1$ & $1$ & \multicolumn{1}{c|}{$6$-heavy} & $-1$ & $0$ & \multicolumn{1}{c|}{$6$-heavy} & $0$ & $0$ & Transition \\
$1$ & $0$ & \multicolumn{1}{c|}{$6$-heavy} & $-1$ & $-1$ & \multicolumn{1}{c|}{$6$-heavy} & $0$ & $-1$ & \multicolumn{1}{c|}{$6$-heavy} & $1$ & $-1$ & \multicolumn{1}{c|}{$6$-heavy} &  &  &  \\ \bottomrule
\end{tabular}%
}
\end{table}

\begin{table}[!bht]
\caption{Coordinates for a clause gadget $\G_{c}$ such that $c(T_c) = (0,0)$.}
\label{tab:clause_gadget_coords}
\centering
\maxsizebox{\textwidth}{\textheight}{%
\begin{tabular}{@{}ccccccccccccccc@{}}
\toprule
\multicolumn{15}{l}{\textbf{Clause Gadget}} \\ \midrule
\multirow{2}{*}{\textbf{$x$}} & \multirow{2}{*}{\textbf{$y$}} & \multirow{2}{*}{\textbf{Type}} & \multirow{2}{*}{\textbf{$x$}} & \multirow{2}{*}{\textbf{$y$}} & \multirow{2}{*}{\textbf{Type}} & \multirow{2}{*}{\textbf{$x$}} & \multirow{2}{*}{\textbf{$y$}} & \multirow{2}{*}{\textbf{Type}} & \multirow{2}{*}{\textbf{$x$}} & \multirow{2}{*}{\textbf{$y$}} & \multirow{2}{*}{\textbf{Type}} & \multirow{2}{*}{\textbf{$x$}} & \multirow{2}{*}{\textbf{$y$}} & \multirow{2}{*}{\textbf{Type}} \\
 &  &  &  &  &  &  &  &  &  &  &  &  &  &  \\
$1$ & $1$ & \multicolumn{1}{c|}{$6$-heavy} & $2$ & $1$ & \multicolumn{1}{c|}{$6$-heavy} & $3$ & $1$ & \multicolumn{1}{c|}{$6$-heavy} & $4$ & $1$ & \multicolumn{1}{c|}{$6$-heavy} & $-4$ & $0$ & $6$-heavy \\
$-3$ & $0$ & \multicolumn{1}{c|}{Transition} & $-2$ & $0$ & \multicolumn{1}{c|}{$6$-heavy} & $-1$ & $0$ & \multicolumn{1}{c|}{$6$-heavy} & $0$ & $0$ & \multicolumn{1}{c|}{Transition} & $0$ & $0$ & $6$-heavy \\
$1$ & $0$ & \multicolumn{1}{c|}{$6$-heavy} & $2$ & $0$ & \multicolumn{1}{c|}{$6$-heavy} & $3$ & $0$ & \multicolumn{1}{c|}{Transition} & $4$ & $0$ & \multicolumn{1}{c|}{$6$-heavy} & $-4$ & $-1$ & $6$-heavy \\
$-3$ & $-1$ & \multicolumn{1}{c|}{$6$-heavy} & $-2$ & $-1$ & \multicolumn{1}{c|}{$6$-heavy} & $-1$ & $-1$ & \multicolumn{1}{c|}{$6$-heavy} & $0$ & $-1$ & \multicolumn{1}{c|}{$6$-heavy} & $1$ & $-1$ & $6$-heavy \\
$2$ & $-1$ & \multicolumn{1}{c|}{$6$-heavy} & $3$ & $-1$ & \multicolumn{1}{c|}{$6$-heavy} & $4$ & $-1$ & \multicolumn{1}{c|}{$6$-heavy} & $-1$ & $-2$ & \multicolumn{1}{c|}{$6$-heavy} & $0$ & $-2$ & $6$-heavy \\
$1$ & $-2$ & \multicolumn{1}{c|}{$6$-heavy} & $-1$ & $-3$ & \multicolumn{1}{c|}{$6$-heavy} & $0$ & $-3$ & \multicolumn{1}{c|}{Transition} & $1$ & $-3$ & \multicolumn{1}{c|}{$6$-heavy} & $-1$ & $-4$ & $6$-heavy \\
$0$ & $-4$ & \multicolumn{1}{c|}{$6$-heavy} & $1$ & $-4$ & \multicolumn{1}{c|}{$6$-heavy} &  &  & \multicolumn{1}{c|}{} &  &  & \multicolumn{1}{c|}{} &  &  &  \\ \bottomrule
\end{tabular}%
}
\end{table}

\begin{table}[!bht]
\caption{Coordinates for the central part of a variable gadget $\G_{x}$ such that $c(S_x) = (0,0)$.}
\label{tab:variable_gadget_central_coords}
\centering
\maxsizebox{\textwidth}{\textheight}{%
\begin{tabular}{@{}ccccccccccccccc@{}}
\toprule
\multicolumn{15}{l}{\textbf{Variable Gadget (central part)}} \\ \midrule
\multirow{2}{*}{\textbf{$x$}} & \multirow{2}{*}{\textbf{$y$}} & \multirow{2}{*}{\textbf{Type}} & \multirow{2}{*}{\textbf{$x$}} & \multirow{2}{*}{\textbf{$y$}} & \multirow{2}{*}{\textbf{Type}} & \multirow{2}{*}{\textbf{$x$}} & \multirow{2}{*}{\textbf{$y$}} & \multirow{2}{*}{\textbf{Type}} & \multirow{2}{*}{\textbf{$x$}} & \multirow{2}{*}{\textbf{$y$}} & \multirow{2}{*}{\textbf{Type}} & \multirow{2}{*}{\textbf{$x$}} & \multirow{2}{*}{\textbf{$y$}} & \multirow{2}{*}{\textbf{Type}} \\
 &  &  &  &  &  &  &  &  &  &  &  &  &  &  \\
$4$ & $3+3/4$ & \multicolumn{1}{c|}{$6$-heavy} & $-5$ & $3$ & \multicolumn{1}{c|}{$6$-heavy} & $-3$ & $3$ & \multicolumn{1}{c|}{$6$-heavy} & $3$ & $3$ & \multicolumn{1}{c|}{$6$-heavy} & $5$ & $3$ & $6$-heavy \\
$-4$ & $2+1/4$ & \multicolumn{1}{c|}{$1$-heavy} & $4$ & $2+1/4$ & \multicolumn{1}{c|}{$1$-heavy} & $-5$ & $2$ & \multicolumn{1}{c|}{$6$-heavy} & $-3$ & $2$ & \multicolumn{1}{c|}{$6$-heavy} & $-2$ & $2$ & $6$-heavy \\
$-1$ & $2$ & \multicolumn{1}{c|}{$6$-heavy} & $0$ & $2$ & \multicolumn{1}{c|}{$6$-heavy} & $1$ & $2$ & \multicolumn{1}{c|}{$6$-heavy} & $2$ & $2$ & \multicolumn{1}{c|}{$6$-heavy} & $3$ & $2$ & $6$-heavy \\
$5$ & $2$ & \multicolumn{1}{c|}{$6$-heavy} & $-8$ & $1$ & \multicolumn{1}{c|}{$6$-heavy} & $-7$ & $1$ & \multicolumn{1}{c|}{$6$-heavy} & $-6$ & $1$ & \multicolumn{1}{c|}{$6$-heavy} & $-5$ & $1$ & $6$-heavy \\
$-3$ & $1$ & \multicolumn{1}{c|}{$6$-heavy} & $-2$ & $1$ & \multicolumn{1}{c|}{$6$-heavy} & $-1$ & $1$ & \multicolumn{1}{c|}{$6$-heavy} & $0$ & $1$ & \multicolumn{1}{c|}{$6$-heavy} & $1$ & $1$ & $6$-heavy \\
$2$ & $1$ & \multicolumn{1}{c|}{$6$-heavy} & $3$ & $1$ & \multicolumn{1}{c|}{$6$-heavy} & $5$ & $1$ & \multicolumn{1}{c|}{$6$-heavy} & $6$ & $1$ & \multicolumn{1}{c|}{$6$-heavy} & $7$ & $1$ & $6$-heavy \\
$8$ & $1$ & \multicolumn{1}{c|}{$6$-heavy} & $-4$ & $3/4$ & \multicolumn{1}{c|}{$2$-heavy} & $4$ & $3/4$ & \multicolumn{1}{c|}{$2$-heavy} & $-7-3/4$ & $0$ & \multicolumn{1}{c|}{$6$-heavy} & $-6$ & $0$ & $1$-heavy \\
$-4-3/4$ & $0$ & \multicolumn{1}{c|}{$2$-heavy} & $-3-1/2$ & $0$ & \multicolumn{1}{c|}{$1$-heavy} & $-2$ & $0$ & \multicolumn{1}{c|}{$6$-heavy} & $-1$ & $0$ & \multicolumn{1}{c|}{$6$-heavy} & $0$ & $0$ & Transition \\
$0$ & $0$ & \multicolumn{1}{c|}{$6$-heavy} & $1$ & $0$ & \multicolumn{1}{c|}{$6$-heavy} & $2$ & $0$ & \multicolumn{1}{c|}{$6$-heavy} & $3+1/2$ & $0$ & \multicolumn{1}{c|}{$1$-heavy} & $4+3/4$ & $0$ & $2$-heavy \\
$6$ & $0$ & \multicolumn{1}{c|}{$1$-heavy} & $7+3/4$ & $0$ & \multicolumn{1}{c|}{$6$-heavy} & $-4$ & $-3/4$ & \multicolumn{1}{c|}{$2$-heavy} & $4$ & $-3/4$ & \multicolumn{1}{c|}{$2$-heavy} & $-8$ & $-1$ & $6$-heavy \\
$-7$ & $-1$ & \multicolumn{1}{c|}{$6$-heavy} & $-6$ & $-1$ & \multicolumn{1}{c|}{$6$-heavy} & $-5$ & $-1$ & \multicolumn{1}{c|}{$6$-heavy} & $-3$ & $-1$ & \multicolumn{1}{c|}{$6$-heavy} & $-2$ & $-1$ & $6$-heavy \\
$-1$ & $-1$ & \multicolumn{1}{c|}{$6$-heavy} & $0$ & $-1$ & \multicolumn{1}{c|}{$6$-heavy} & $1$ & $-1$ & \multicolumn{1}{c|}{$6$-heavy} & $2$ & $-1$ & \multicolumn{1}{c|}{$6$-heavy} & $3$ & $-1$ & $6$-heavy \\
$5$ & $-1$ & \multicolumn{1}{c|}{$6$-heavy} & $6$ & $-1$ & \multicolumn{1}{c|}{$6$-heavy} & $7$ & $-1$ & \multicolumn{1}{c|}{$6$-heavy} & $8$ & $-1$ & \multicolumn{1}{c|}{$6$-heavy} & $-5$ & $-2$ & $6$-heavy \\
$-3$ & $-2$ & \multicolumn{1}{c|}{$6$-heavy} & $-2$ & $-2$ & \multicolumn{1}{c|}{$6$-heavy} & $-1$ & $-2$ & \multicolumn{1}{c|}{$6$-heavy} & $0$ & $-2$ & \multicolumn{1}{c|}{$6$-heavy} & $1$ & $-2$ & $6$-heavy \\
$2$ & $-2$ & \multicolumn{1}{c|}{$6$-heavy} & $3$ & $-2$ & \multicolumn{1}{c|}{$6$-heavy} & $5$ & $-2$ & \multicolumn{1}{c|}{$6$-heavy} & $-4$ & $-2-1/4$ & \multicolumn{1}{c|}{$1$-heavy} & $4$ & $-2-1/4$ & $1$-heavy \\
$-5$ & $-3$ & \multicolumn{1}{c|}{$6$-heavy} & $-3$ & $-3$ & \multicolumn{1}{c|}{$6$-heavy} & $3$ & $-3$ & \multicolumn{1}{c|}{$6$-heavy} & $5$ & $-3$ & \multicolumn{1}{c|}{$6$-heavy} & $-4$ & $-3-3/4$ & $6$-heavy \\
$4$ & $-3-3/4$ & \multicolumn{1}{c|}{$6$-heavy} & $-5$ & $-4$ & \multicolumn{1}{c|}{$6$-heavy} & $-3$ & $-4$ & \multicolumn{1}{c|}{$6$-heavy} & $3$ & $-4$ & \multicolumn{1}{c|}{$6$-heavy} & $5$ & $-4$ & $6$-heavy \\ \bottomrule
\end{tabular}%
}
\end{table}

\begin{table}[!bht]
\caption{Coordinates of arm 1 for $D_{c_1}$ when $S_x$ is centred at $(0,0)$.}
\label{tab:variable_gadget_arm1_coords}
\centering
\maxsizebox{\textwidth}{\textheight}{%
\begin{tabular}{@{}ccccccccccccccc@{}}
\toprule
\multicolumn{15}{l}{\textbf{Arm 1 (when $S_x$ centred at $(0,0)$)}} \\ \midrule
\multirow{2}{*}{\textbf{$x$}} & \multirow{2}{*}{\textbf{$y$}} & \multirow{2}{*}{\textbf{Disk Type}} & \multirow{2}{*}{\textbf{$x$}} & \multirow{2}{*}{\textbf{$y$}} & \multirow{2}{*}{\textbf{Disk Type}} & \multirow{2}{*}{\textbf{$x$}} & \multirow{2}{*}{\textbf{$y$}} & \multirow{2}{*}{\textbf{Disk Type}} & \multirow{2}{*}{\textbf{$x$}} & \multirow{2}{*}{\textbf{$y$}} & \multirow{2}{*}{\textbf{Disk Type}} & \multirow{2}{*}{\textbf{$x$}} & \multirow{2}{*}{\textbf{$y$}} & \multirow{2}{*}{\textbf{Disk Type}} \\
 &  &  &  &  &  &  &  &  &  &  &  &  &  &  \\
$-5$ & $9$ & \multicolumn{1}{c|}{$6$-heavy} & $-4$ & $9$ & \multicolumn{1}{c|}{$6$-heavy} & $-3$ & $9$ & \multicolumn{1}{c|}{$6$-heavy} & $-5$ & $8$ & \multicolumn{1}{c|}{$6$-heavy} & $-4$ & $8$ & Transition \\
$-3$ & $8$ & \multicolumn{1}{c|}{$6$-heavy} & $-5$ & $7$ & \multicolumn{1}{c|}{$6$-heavy} & $-4$ & $7$ & \multicolumn{1}{c|}{$6$-heavy} & $-3$ & $7$ & \multicolumn{1}{c|}{$6$-heavy} & $-5$ & $6$ & $6$-heavy \\
$-3$ & $6$ & \multicolumn{1}{c|}{$6$-heavy} & $-4$ & $5+3/4$ & \multicolumn{1}{c|}{Transition} & $-5$ & $5$ & \multicolumn{1}{c|}{$6$-heavy} & $-3$ & $5$ & \multicolumn{1}{c|}{$6$-heavy} & $-4$ & $4+3/4$ & $6$-heavy \\ \bottomrule
\end{tabular}%
}
\end{table}

\begin{table}[!bht]
\caption{Coordinates of arm for $D_{c_2}$ when $S_x$ is centred at $(0,0)$.}
\label{tab:variable_gadget_arm2_coords}
\centering
\maxsizebox{\textwidth}{\textheight}{%
\begin{tabular}{@{}ccccccccccccccc@{}}
\toprule
\multicolumn{15}{l}{\textbf{Arm 2 (when $S_x$ centred at $(0,0)$)}} \\ \midrule
\multirow{2}{*}{\textbf{$x$}} & \multirow{2}{*}{\textbf{$y$}} & \multirow{2}{*}{\textbf{Disk Type}} & \multirow{2}{*}{\textbf{$x$}} & \multirow{2}{*}{\textbf{$y$}} & \multirow{2}{*}{\textbf{Disk Type}} & \multirow{2}{*}{\textbf{$x$}} & \multirow{2}{*}{\textbf{$y$}} & \multirow{2}{*}{\textbf{Disk Type}} & \multirow{2}{*}{\textbf{$x$}} & \multirow{2}{*}{\textbf{$y$}} & \multirow{2}{*}{\textbf{Disk Type}} & \multirow{2}{*}{\textbf{$x$}} & \multirow{2}{*}{\textbf{$y$}} & \multirow{2}{*}{\textbf{Disk Type}} \\
 &  &  &  &  &  &  &  &  &  &  &  &  &  &  \\
$-9$ & $8$ & \multicolumn{1}{c|}{$6$-heavy} & $-11$ & $7$ & \multicolumn{1}{c|}{$6$-heavy} & $-10$ & $7$ & \multicolumn{1}{c|}{$6$-heavy} & $-9$ & $7$ & \multicolumn{1}{c|}{$6$-heavy} & $-11$ & $6$ & $6$-heavy \\
$-10$ & $6$ & \multicolumn{1}{c|}{$6$-heavy} & $-9$ & $6$ & \multicolumn{1}{c|}{$6$-heavy} & $-11$ & $5$ & \multicolumn{1}{c|}{$6$-heavy} & $-10$ & $5$ & \multicolumn{1}{c|}{Transition} & $-9$ & $5$ & $6$-heavy \\
$-11$ & $4$ & \multicolumn{1}{c|}{$6$-heavy} & $-10$ & $4$ & \multicolumn{1}{c|}{$6$-heavy} & $-9$ & $4$ & \multicolumn{1}{c|}{$6$-heavy} & $-11$ & $3$ & \multicolumn{1}{c|}{$6$-heavy} & $-10$ & $3$ & $6$-heavy \\
$-9$ & $3$ & \multicolumn{1}{c|}{$6$-heavy} & $-11$ & $2$ & \multicolumn{1}{c|}{$6$-heavy} & $-10$ & $2$ & \multicolumn{1}{c|}{Transition} & $-9$ & $2$ & \multicolumn{1}{c|}{$6$-heavy} & $-11$ & $1$ & $6$-heavy \\
$-10$ & $1$ & \multicolumn{1}{c|}{$6$-heavy} & $-9$ & $1$ & \multicolumn{1}{c|}{$6$-heavy} & $-11$ & $0$ & \multicolumn{1}{c|}{$6$-heavy} & $-9-3/4$ & $0$ & \multicolumn{1}{c|}{Transition} & $-8-3/4$ & $0$ & $6$-heavy \\
$-11$ & $-1$ & \multicolumn{1}{c|}{$6$-heavy} & $-10$ & $-1$ & \multicolumn{1}{c|}{$6$-heavy} & $-9$ & $-1$ & \multicolumn{1}{c|}{$6$-heavy} &  &  & \multicolumn{1}{c|}{} &  &  & \\\bottomrule
\end{tabular}%
}
\end{table}

\begin{table}[!bht]
\caption{Coordinates of arm for $D_{c_3}$ when $S_x$ is centred at $(0,0)$.}
\label{tab:variable_gadget_arm3_coords}
\centering
\maxsizebox{\textwidth}{\textheight}{%
\begin{tabular}{@{}ccccccccccccccc@{}}
\toprule
\multicolumn{15}{l}{\textbf{Arm 3 (when $S_x$ centred at $(0,0)$)}} \\ \midrule
\multirow{2}{*}{\textbf{$x$}} & \multirow{2}{*}{\textbf{$y$}} & \multirow{2}{*}{\textbf{Disk Type}} & \multirow{2}{*}{\textbf{$x$}} & \multirow{2}{*}{\textbf{$y$}} & \multirow{2}{*}{\textbf{Disk Type}} & \multirow{2}{*}{\textbf{$x$}} & \multirow{2}{*}{\textbf{$y$}} & \multirow{2}{*}{\textbf{Disk Type}} & \multirow{2}{*}{\textbf{$x$}} & \multirow{2}{*}{\textbf{$y$}} & \multirow{2}{*}{\textbf{Disk Type}} & \multirow{2}{*}{\textbf{$x$}} & \multirow{2}{*}{\textbf{$y$}} & \multirow{2}{*}{\textbf{Disk Type}} \\
 &  &  &  &  &  &  &  &  &  &  &  &  &  &  \\
$-15$ & $8$ & \multicolumn{1}{c|}{$6$-heavy} & $-17$ & $7$ & \multicolumn{1}{c|}{$6$-heavy} & $-16$ & $7$ & \multicolumn{1}{c|}{$6$-heavy} & $-15$ & $7$ & \multicolumn{1}{c|}{$6$-heavy} & $-17$ & $6$ & $6$-heavy \\
$-16$ & $6$ & \multicolumn{1}{c|}{$6$-heavy} & $-15$ & $6$ & \multicolumn{1}{c|}{$6$-heavy} & $-17$ & $5$ & \multicolumn{1}{c|}{$6$-heavy} & $-16$ & $5$ & \multicolumn{1}{c|}{Transition} & $-15$ & $5$ & $6$-heavy \\
$-17$ & $4$ & \multicolumn{1}{c|}{$6$-heavy} & $-16$ & $4$ & \multicolumn{1}{c|}{$6$-heavy} & $-15$ & $4$ & \multicolumn{1}{c|}{$6$-heavy} & $-17$ & $3$ & \multicolumn{1}{c|}{$6$-heavy} & $-16$ & $3$ & $6$-heavy \\
$-15$ & $3$ & \multicolumn{1}{c|}{$6$-heavy} & $-17$ & $2$ & \multicolumn{1}{c|}{$6$-heavy} & $-16$ & $2$ & \multicolumn{1}{c|}{Transition} & $-15$ & $2$ & \multicolumn{1}{c|}{$6$-heavy} & $-17$ & $1$ & $6$-heavy \\
$-16$ & $1$ & \multicolumn{1}{c|}{$6$-heavy} & $-15$ & $1$ & \multicolumn{1}{c|}{$6$-heavy} & $-17$ & $0$ & \multicolumn{1}{c|}{$6$-heavy} & $-16$ & $0$ & \multicolumn{1}{c|}{$6$-heavy} & $-15$ & $0$ & $6$-heavy \\
$-17$ & $-1$ & \multicolumn{1}{c|}{$6$-heavy} & $-16$ & $-1$ & \multicolumn{1}{c|}{Transition} & $-15$ & $-1$ & \multicolumn{1}{c|}{$6$-heavy} & $-17$ & $-2$ & \multicolumn{1}{c|}{$6$-heavy} & $-16$ & $-2$ & $6$-heavy \\
$-15$ & $-2$ & \multicolumn{1}{c|}{$6$-heavy} & $-17$ & $-3$ & \multicolumn{1}{c|}{$6$-heavy} & $-16$ & $-3$ & \multicolumn{1}{c|}{$6$-heavy} & $-15$ & $-3$ & \multicolumn{1}{c|}{$6$-heavy} & $-17$ & $-4$ & $6$-heavy \\
$-16$ & $-4$ & \multicolumn{1}{c|}{Transition} & $-15$ & $-4$ & \multicolumn{1}{c|}{$6$-heavy} & $-4$ & $-4-3/4$ & \multicolumn{1}{c|}{$6$-heavy} & $-17$ & $-5$ & \multicolumn{1}{c|}{$6$-heavy} & $-16$ & $-5$ & $6$-heavy \\
$-15$ & $-5$ & \multicolumn{1}{c|}{$6$-heavy} & $-14$ & $-5$ & \multicolumn{1}{c|}{$6$-heavy} & $-13$ & $-5$ & \multicolumn{1}{c|}{$6$-heavy} & $-12$ & $-5$ & \multicolumn{1}{c|}{$6$-heavy} & $-11$ & $-5$ & $6$-heavy \\
$-10$ & $-5$ & \multicolumn{1}{c|}{$6$-heavy} & $-9$ & $-5$ & \multicolumn{1}{c|}{$6$-heavy} & $-8$ & $-5$ & \multicolumn{1}{c|}{$6$-heavy} & $-7$ & $-5$ & \multicolumn{1}{c|}{$6$-heavy} & $-6$ & $-5$ & $6$-heavy \\
$-5$ & $-5$ & \multicolumn{1}{c|}{$6$-heavy} & $-3$ & $-5$ & \multicolumn{1}{c|}{$6$-heavy} & $-4$ & $-5-3/4$ & \multicolumn{1}{c|}{Transition} & $-17$ & $-6$ & \multicolumn{1}{c|}{$6$-heavy} & $-15-1/2$ & $-6$ & Transition \\
$-14-1/2$ & $-6$ & \multicolumn{1}{c|}{$6$-heavy} & $-13-1/2$ & $-6$ & \multicolumn{1}{c|}{$6$-heavy} & $-12-1/2$ & $-6$ & \multicolumn{1}{c|}{Transition} & $-11-1/2$ & $-6$ & \multicolumn{1}{c|}{$6$-heavy} & $-10-1/2$ & $-6$ & $6$-heavy \\
$-9-1/2$ & $-6$ & \multicolumn{1}{c|}{Transition} & $-8-1/2$ & $-6$ & \multicolumn{1}{c|}{$6$-heavy} & $-7-1/2$ & $-6$ & \multicolumn{1}{c|}{$6$-heavy} & $-6-1/2$ & $-6$ & \multicolumn{1}{c|}{Transition} & $-5$ & $-6$ & $6$-heavy \\
$-3$ & $-6$ & \multicolumn{1}{c|}{$6$-heavy} & $-17$ & $-7$ & \multicolumn{1}{c|}{$6$-heavy} & $-16$ & $-7$ & \multicolumn{1}{c|}{$6$-heavy} & $-15$ & $-7$ & \multicolumn{1}{c|}{$6$-heavy} & $-14$ & $-7$ & $6$-heavy \\
$-13$ & $-7$ & \multicolumn{1}{c|}{$6$-heavy} & $-12$ & $-7$ & \multicolumn{1}{c|}{$6$-heavy} & $-11$ & $-7$ & \multicolumn{1}{c|}{$6$-heavy} & $-10$ & $-7$ & \multicolumn{1}{c|}{$6$-heavy} & $-9$ & $-7$ & $6$-heavy \\
$-8$ & $-7$ & \multicolumn{1}{c|}{$6$-heavy} & $-7$ & $-7$ & \multicolumn{1}{c|}{$6$-heavy} & $-6$ & $-7$ & \multicolumn{1}{c|}{$6$-heavy} & $-5$ & $-7$ & \multicolumn{1}{c|}{$6$-heavy} & $-4$ & $-7$ & $6$-heavy \\
$-3$ & $-7$ & \multicolumn{1}{c|}{$6$-heavy} &  &  & \multicolumn{1}{c|}{} &  &  & \multicolumn{1}{c|}{} &  &  & \multicolumn{1}{c|}{} &  &  & \\\bottomrule
\end{tabular}%
}
\end{table}

    \end{toappendix}
\fi

\section{Concluding Remarks}\label{sec:conclu}

The main contribution of this paper is two-fold.
First, we continued the study of {\gged} originally presented in~\cite{HonoratoDroguett2024}, showing complexity results for satisfying several properties for sparse graphs on interval graphs.
In particular, we showed that satisfying properties $\Pi_{\texttt{edgeless}}$, $\Pi_{\texttt{acyc}}$ and $\overline{\Pi_{k\texttt{-clique}}}$ is solvable in $O(n\log n)$ time on unit interval graphs.
In contrast, we showed that the problem becomes strongly \NP-hard on weighted interval graphs for satisfying the same properties.
Second, we defined {\ggedmm} as a variation of the above problem and showed that it is strongly {\NP-hard} for satisfying $\Pi_{\texttt{edgeless}}$ on weighted unit disk graphs over the $L_1$ and $L_2$ distances.
%

There are several directions for further research.
Our results provide a comprehensive picture of the complexity of {\gged} on interval graphs. 
In particular, we showed that the problem becomes hard even in lower dimensions when the input is not restricted by interval size and distance weight.
As a result, a potential future work is to study the complexity when exclusively one of the restrictions is applied.
Another interesting direction is to study the model for satisfying $\Pi_{\texttt{edgeless}}$ in higher dimensions.
Related works \cite{fomin2023,Fomin2025,Fiala2005} suggest that our model on more complex intersection graphs becomes intractable for some of the properties presented in this work. 
In general, we deal with the edit operation that moves the objects of the given intersection graph. 
However, the model is not restricted to this operation. 
Determining {\gged} using other geometric edit operations (such as shrinking or rotating objects) is left for future research for all intersection graphs and graph properties presented in this work.

\bibliography{ref}

\clearpage

\appendix

\renewcommand{\thesubsection}{\arabic{section}}
\ifFull
    
\fi


\end{document}